\documentclass[a4paper, oneside, 11pt]{book} 
\pdfoutput=1
\usepackage[T1]{fontenc}
\usepackage[utf8]{inputenc}
\usepackage[english]{babel}
\usepackage{comment}
\usepackage{minitoc}
\usepackage{mathtools}
\usepackage{amsmath}
\usepackage{booktabs}
\usepackage{amssymb}
\usepackage[perpage]{footmisc}
\usepackage{bbold}
\usepackage{siunitx}
\usepackage{graphicx}
\usepackage{hyperref}
\usepackage{enumitem}
\usepackage{subfigure}
\usepackage{amsthm}
\usepackage{csquotes}
\usepackage[titletoc]{appendix}
\newtheorem{theorem}{Theorem}
\usepackage{hepnames}

\usepackage{xpatch}

\makeatletter
\xpatchcmd\@HepConStyle
 {\edef\@upcode{\updefault}}
 {\ifdefined\shapedefault\edef\@upcode{\shapedefault}\else\edef\@upcode{\updefault}\fi}
 {}{}
\makeatother

\hyphenation{E-re-di-ta-to}

\newcommand{\pnue}{\nu_{\mbox{\tiny e}}}
\newcommand{\apnue}{\overline{\nu}_{\!\;\!\mbox{\tiny e}}}
\newcommand{\pnum}{\nu_{\mbox{\tiny $\mu$}}}
\newcommand{\apnum}{\overline{\nu}_{\!\;\!\mbox{\tiny $\mu$}}}
\newcommand{\pnut}{\nu_{\mbox{\tiny $\tau$}}}

\newcommand{\ee}{\mathrm{e}}

\newcommand{\negphantom}[1]{\settowidth{\dimen0}{#1}\hspace*{-\dimen0}}
\newcommand{\rndot}{\dot{|n\hphantom{|}}\negphantom{$|$}\rangle}
\newcommand{\lndot}{\dot{\langle n\hphantom{\rangle}}\negphantom{$\rangle$}|}
\newcommand{\rxdot}[1]{\dot{|#1\hphantom{|}}\negphantom{$|$}\rangle}

\newcommand{\sectionname}{Section}

\usepackage[backend=biber,sorting=none]{biblatex} 

\addbibresource{biblio.bib}

\hypersetup
{
    pdfauthor = {G.\ Fantini, A.\ Gallo Rosso,
    F.\ Vissani, V.\ Zema},
    pdfsubject={Didactic notes},
    pdftitle={The formalism of neutrino oscillations: an introduction},
    pdfcreator={\LaTeX},
    pdfkeywords={Neutrino, oscillation, mixing,
    Quantum Field Theory},
    hidelinks,
    bookmarks=true
}

\title{The formalism of neutrino
oscillations: an introduction}
\author{G.\ Fantini, A.\ Gallo Rosso,
F.\ Vissani, V.\ Zema}
\date{\today}
\allowdisplaybreaks											
\makeindex

\begin{document}
	\dominitoc
	\frontmatter
	\thispagestyle{empty}
\begin{center}
    \centering
    \null\vspace{2cm}
    {\uppercase{\huge The formalism of neutrino\\ oscillations: an introduction\par}}
    \vspace{4cm}
    \includegraphics[width=0.2\linewidth,bb=0 0 993 993]{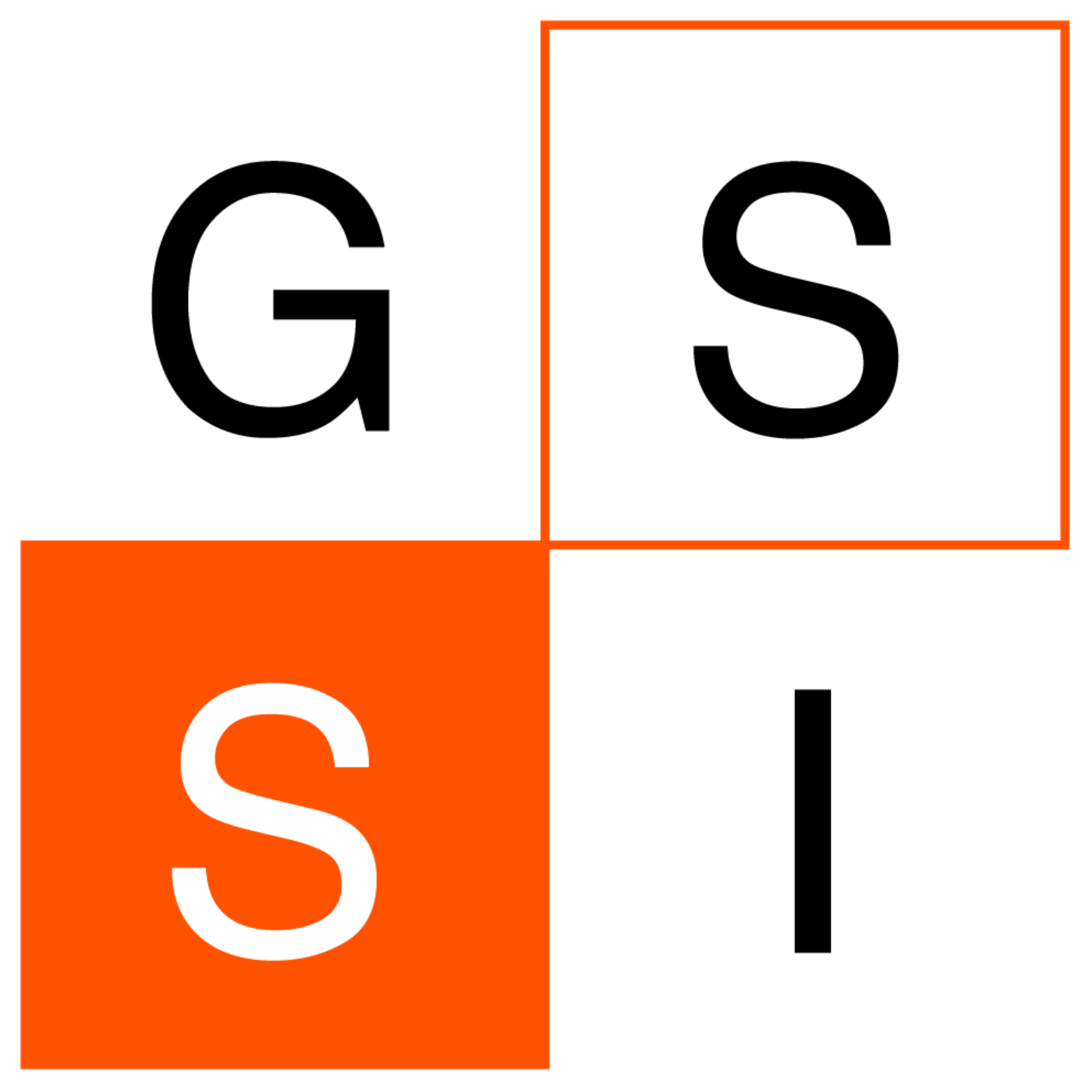}
    \par
    \vspace{4cm}
    {
    \LARGE G.\ Fantini, A.\ Gallo Rosso,
    F.\ Vissani, V.\ Zema
    }\\
    \vspace*{\fill}
    {\Large Gran Sasso Science Institute\\[0.2cm]
    \today}
\end{center}
\clearpage
\thispagestyle{empty}
\hfill
\vfill
\begin{minipage}[b]{0.605\textwidth}
{
	\small\noindent
	{\def\arraystretch{1}\tabcolsep=0em	
	\begin{tabular}{p{3.2cm}r}
		G.\ Fantini\textsuperscript{\dag} &
		{\href{mailto:guido.fantini@gssi.it}
		{guido.fantini@gssi.it}}\\
		{A.\ Gallo Rosso\textsuperscript{\dag\ddag}} &
		{\href{mailto:andrea.gallorosso@gssi.it}
		{andrea.gallorosso@gssi.it}}\\
		{F.\ Vissani\textsuperscript{\dag\ddag}} &
		{\href{francesco.vissani@lngs.infn.it}
		{francesco.vissani@lngs.infn.it}}\\
		{V.\ Zema\textsuperscript{\dag\S}} &
		{\href{vanessa.zema@gssi.it}
		{vanessa.zema@gssi.it}}\\
	\end{tabular}
	}\\[0.5em]
	{\def\arraystretch{1.2}\tabcolsep=0.2em	
	\begin{tabular}{lp{7.2cm}}
		\toprule
		\dag & {Gran Sasso Science Institute\newline
		67100 L'Aquila, Italy.}\\
		\ddag & {INFN -- Laboratori Nazionali del Gran Sasso\newline
		Via G.\ Acitelli, 22 67100 Assergi (AQ), Italy.}\\
		\S & {Chalmers University of Technology,\newline
		Physics Dep., SE-412 96 G\"{o}teborg, Sweden.}\\
		\bottomrule
	\end{tabular}
	}\\[0.6em]
	Preprint of a chapter from:
	Ereditato, A.\ (Ed.) (2018)
	\textit{The State of the
	Art of Neutrino Physics}.
	World Scientific Publishing Company.
}
\end{minipage}

%
\pdfbookmark[0]{Abstract}{Abstract}
\chapter*{Abstract}
\label{sec:abstract}

The recent wide recognition of the existence of neutrino oscillations
concludes the pioneer stage of these studies and poses the problem of
how to communicate effectively the basic aspects of this branch of
science. In fact, the phenomenon of 
neutrino oscillations has peculiar features and requires to master some
specific idea and some amount of formalism. The main aim of these
introductory notes is exactly to cover these aspects, in order
to allow the interested students to appreciate the modern developments
and possibly to begin to do research in neutrino oscillations.

	\chapter*{Preface}
\label{sec:struttura}

The structure of these notes is the following. In the first section, we describe the context of the discussion. Then we will introduce the concept of neutrino mixing and analyze its implications. Next, we will examine the basic formalism of neutrino oscillations, recalling a few interesting applications. Subsequently, we discuss the modifications to  neutrino oscillations that occur when these particles propagate in the matter. Finally, we offer a brief summary of the results and outline the perspectives. Several appendices supplement the discussion and collect various technical details. 

We strive to describe all relevant details of the calculations, in order to allow the Reader to understand  thoroughly and to appreciate the physics of neutrino oscillation. Instead, we do not aim to achieve completeness  and/or to collect the most recent results. We limit the  reference list to a minimum: We cite the seminal papers of this field in the next section,  mention some few books and review papers in the last section, and occasionally make reference to certain works that are needed to learn more or on which we relied to some large extent for an aspect or another. 
These choices are dictated not only by the existence of a huge amount of research work on neutrinos, but also and most simply in view of the introductory character of these notes.

We assume that the Reader knows special relativity and quantum mechanics, and 
some basic aspects of particle physics. 
As a rule we will adopt the system of ``natural units'' of particle physics, defined by the choices
\begin{equation*}
	\hbar=c=1
\end{equation*}
In the equations, the repeated indices are summed, whenever this is not reason of confusion.
Our metric is defined by 
$$
x p=x_\mu p^\mu=x_0\, p_0-\vec{x}\cdot\vec{p}
$$ where 
$x=(x_0,\vec{x}\,)$ and  $p=(p_0,\vec{p}\,)$ are two quadrivectors. Unless stated otherwise, we will use the Dirac (or non-relativistic) representation of the Dirac matrices; see the appendices for technical details.


	\tableofcontents
	
	\mainmatter	
	
\chapter{Introduction}
\minitoc

In this section, the main aspects and 
features of the neutrinos are recalled (Section \ref{sec:ewp}) 
and an introduction to the concept of neutrino oscillations
is offered (Section \ref{sec:no}).
In this manner, the interested Reader can review
the basic concepts and can diagnose or retrieve,
when necessary, the missing information.
This material, along with the appendices, is aimed
to introduce to the discussion of the main content
of this work, exposed in the subsequent three sections.

In view of the introductory character of the present
section, we do not list most of the works of historical
interest. However, there 
are a few papers that should be read by whoever is 
really interested in understanding the roots of the formalism. 
These include  the seminal papers on neutrino oscillations by
Pontecorvo \cite{pt1, pt2}, the one on neutrino mixing by 
Maki, Nakagawa and Sakata \cite{mns}
the papers of Wolfenstein \cite{www} and 
of Mikheyev and Smirnov \cite{ms} on the matter effect.

\section{Overview of neutrinos \label{sec:ewp}} 
We begin with a brief historical outline in 
\sectionname~\ref{sec:ach}, focussed on the basic properties of 
neutrinos and on their characteristic 
interactions, called (charged current) weak interactions or,
formerly, $\beta$ interaction.\footnote{Recall that the term
$\beta$-ray was introduced by Rutherford to describe a type of
nuclear radiation, that we know to be just high energy
electrons or positrons.} 
Then, we offer in \sectionname~\ref{sec:npr}
a slightly more formal overview of some important aspects, 
introducing the hypothesis of non-zero neutrino mass
and showing that neutrino masses 
play a rather peculiar role.  
Finally, we discuss in \sectionname~\ref{sec:rel}
the reasons why neutrinos require us to master the relativistic
formalism and in particular, require a full description of
relativistic spin 1/2 particles --- i.e., Dirac equation.
See the appendix for a reminder of the  main formal aspects
of the Dirac equation, 
and note incidentally that Pauli `invented' the neutrino
just after Dirac's relativistic theory of the electron
(1928) was proposed and before it was fully accepted.

\subsection{A brief history of the major achievements
\label{sec:ach}}
The main aim of this section is just to introduce some concepts and  terms  that are essential for the subsequent discussion; in other words, we use this historical excursion mostly as a convenient  excuse. 
For accurate historical accounts with references, the Reader is invited to consult 
the tables of Ref.~\cite{ponty},
chapter 1 of Ref.~\cite{bily} and Ref.~\cite{rior}.

\subsubsection{Existence of the neutrino}
The first idea of the existence of neutrinos was 
conceived by Pauli in 1930, who imagined them as components of the
nucleus.\footnote{\label{paulos}Before 1930, the prevailing theory
of the nucleus was that it is formed by protons and electrons
tightly bound, e.g., $\mathrm{D}=(2 \Pproton + \Pe)$.

The spins of certain nuclei, as  
$\text{\textsuperscript{6}Li}=(6 \Pproton+3 \Pe)$ or $\text{\textsuperscript{14}N}=(14 \Pproton+7 \Pe)$, 
were predicted to be wrong. Also the $\beta$ decaying nuclei
were predicted to have monochromatic decay spectra, which is,
once again, wrong. 

Pauli improved this model assuming, e.g.,
that $\text{\textsuperscript{6}Li}=(6 \Pproton+3 \Pe
+ 3\nu)$. This assumption was proposed before knowing the existence
of the neutron (funnily enough, Pauli called `neutron' the light
particle $\nu$ that we call today `neutrino').}
The modern theory is due to Fermi (1933), in which 
(anti)neutrinos are \emph{created} in association with $\beta$
rays in certain nuclear decays. From this theory, Wick (1934)
predicted the existence of electron capture; the nuclear recoil
observed by Allen (1942) with $\text{\textsuperscript{7}Be}+
\Pe\to\nu+\text{\textsuperscript{7}Li}$ 
provided evidence of the neutrino.

The first attempt to detect the final states produced by neutrino
interactions was by Davis (1955) following a method outlined by
Pontecorvo (1948). The first successful measurement was by 
Reines and Cowan (1956), using a reaction discussed by Bethe and
Peierls (1934). For this reason, Reines received the Nobel prize
(1995).

\subsubsection{The three families (=copies) of neutrinos}
Pontecorvo argued that the \Pe and \Pmu capture rates are
the same (1947). Then Puppi (1948) suggested the existence of a new
neutrino corresponding to the muon; see also Klein (1948); Tiomno
and Wheeler (1949); Lee, Rosenbluth, Yang (1949).
The fact that the \Pnum is different from the \Pnue
was demonstrated by Lederman, Schwartz, Steinberger in 1962
(Nobel prize in 1988). Evidences of the \Ptau lepton,
a third type of lepton after \Pe and \Pmu, 
were collected since 1974: these are the reasons of the Nobel
awarded to Perl (1995). The corresponding 
tau neutrino \Pnut was first seen by DONUT experiment
(2000), but the number of neutrinos undergoing weak interactions, 
$N_\nu=3$ was known since 1990, thanks to
LEP measurements of the \PZ width.

\subsubsection{Nature of weak interaction and of neutrinos}   
A turning point in the understanding of weak interactions is the
hypothesis that they violate parity, due to 
Lee and Yang (1956) a fact confirmed by the experiment of Wu (1957)
and recognized by the Nobel committee in 1957. This  
was the key to understand the structure of weak interactions 
and it allowed Landau, Lee \& Yang and Salam to conclude that,
for neutrinos, the spin and the momentum have opposite directions 
while, for antineutrinos, the direction is the same one.
One talks also of negative helicity of neutrinos and positive
helicity of antineutrinos.
The final proof of this picture was obtained by the 
impressive experiment of Goldhaber et al.\ (1958). 
Eventually, the theoretical picture was completed 
arguing for an universal vector-minus-axial (V--A) nature 
of the charged-current weak interactions  
(Sudarshan and Marshak, 1958; Feynman and Gell-Mann, 1958).

\subsubsection{Neutrino mixing and oscillations}   
The first idea of neutrino oscillations was introduced 
by Pontecorvo (1957). The limitations of the first proposal
were overcome by the same author, who developed the modern theory
of neutrino transformation in vacuum (1967). 
The new ingredient is the mixing of different families of neutrinos,
introduced by Katayama, Matumoto, Tanaka, Yamada and independently
and more generally by Maki, Nakagawa, Sakata in 1962. The connection
of neutrino mixing with neutrino mass was outlined by 
Nakagawa, Okonogi, Sakata, Toyoda (1963).
Wolfenstein (1978) pointed out a new
effect that concerns neutrinos  propagating in ordinary matter, nowadays
called matter effect; its physical meaning and relevance 
was clarified by Mikheyev and Smirnov (1986). The evidence of
oscillations accumulated from the observation of solar and
atmospheric neutrinos over many years.  
The decisive role of the results of SNO
(Sudbury Neutrino Observatory) and 
Super-Kamiokande as a proof of oscillations was recognized by the 
Nobel committee (2015); however, the number of experiments that
have contributed significantly to this discovery is quite large.

For the above reasons,  
a couple of acronyms are currently used in the physics of neutrinos and in particular in neutrino oscillations:
\begin{enumerate}[itemsep=-0.7ex,partopsep=1ex,parsep=1ex]
	\item PMNS mixing,  after Pontecorvo, Maki, Nagakawa, Sakata to indicate the {\em neutrino (or leptonic) mixing} discussed in the 
next section;
\item MSW effect, after Mikheyev, Smirnov, Wolfenstein to indicate 
the {\em matter effect} described later.
\end{enumerate}

\subsection{Neutrino properties\label{sec:npr}}
	In this section, we offer an introductory discussion of some important neutrino properties. In particular we will discuss the difference between neutrinos and antineutrinos and introduce the  masses of the neutrinos. Although we use the formalism of quantum field theory, we illustrate the results with a pair of pictures that we hope will make the access to the concepts easier.  See also \sectionname~\ref{sec:pera0} and \sectionname~\ref{app:ccm} for more 
	formal details.
	
	\subsubsection{Neutrinos, antineutrinos, their interactions, lepton number}
	When considering an electrically charged particle, say an electron, the difference between this particle and its antiparticle is evident: one has charge $-e$, the other $+e$. What happens when the particle is neutral?  There is no general answer e.g., the photon or the \Ppizero coincide with the their own antiparticle, whereas  the neutron or the neutral kaon \PKzero do not. 
	
	The case in which we are interested is the one of neutrinos.  The charged current weak interactions allow us to  tag  
neutrinos and antineutrinos, thanks to 
the associated charged lepton. In fact, the 
relativistic quantum field theory predicts the existence of 
several processes with the same amplitude; this feature is called {\em crossing symmetry}. A rather important case concerns the six processes listed in \tablename~\ref{tab:processi}.

\begin{table}[h]
	\centering
	{\begin{tabular}{@{}lrclr@{}}
	\toprule
	$\beta^-$ decay 
	& $\Pneutron\to \Pproton \Pelectron \bar\nu$
	&\hphantom{MM}&
	$\beta^+$ decay
	& $\Pproton\to \Pneutron \Ppositron \nu	$ \\
	$\beta^+$ capture
	& $\Pneutron \Ppositron\to \Pproton \bar\nu$ & & 
	$\beta^-$ capture
	& $\Pproton \Pelectron \to \Pneutron  \nu$	\\
	IBD on \Pneutron
	& $\Pneutron \nu \to \Pproton \Pelectron$
	& &
	IBD	& $\Pproton \bar{\nu} \to \Pneutron \Ppositron$	\\
	\bottomrule
\end{tabular}}
\caption{Most common charged-current weak interaction
	processes, characterized by the same amplitude:
	$\beta^\pm$ decay, electron (positron) capture,
	Inverse Beta Decay (IBD).}
\label{tab:processi}
\end{table}

The kinematics of these reactions, however, is not the same; moreover, in some of these cases, the nucleon should be inside a nucleus to trigger the decay and/or the initial lepton should have enough kinetic energy to trigger the reaction. 
The fact that neutrinos and antineutrinos are different means, e.g.,  that the basic neutrinos from the Sun, from $\Pproton\Pproton\to \mathrm{D} \Ppositron \nu$, will never trigger the Inverse Beta Decay (IBD) reaction $\Pproton\bar\nu \to \Pneutron \Ppositron$.
	
	It is easy to see that the above set of reactions is compatible with the conservation of the {\em lepton number}; namely, the net number of leptons (charged or neutral) in the initial and in the final states does not change.

\begin{figure}[t]
\centering
	\includegraphics[width=0.8\textwidth,bb = 0 -1 392 168]{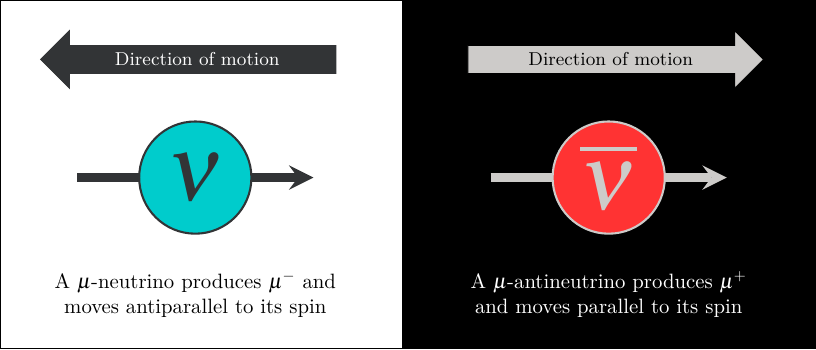}
	\caption{The chiral (or V--A) nature of the charged-current 
		interactions allows us to define what is a 
		neutrino and what is an antineutrino 
		\emph{in the ultrarelativistic limit}, 
		when chirality coincides with helicity and the value of the mass plays only a minor role.}
		\label{fig:hel}
\end{figure}

	In the Fermi theory (or generally in quantum field theory) the leptonic charged current describes the transition from one neutral lepton to a charged one and the other reactions connected by the crossing symmetry.
	The V--A structure of weak interactions means that this current  
	has the form,
	\begin{equation}
	V_\mu^{\mathrm{lept}}-A_\mu^{\mathrm{lept}}=
	\bar{e}\gamma_\mu \nu -\bar{e}\gamma_\mu\gamma_5 \nu=
	2 \bar{e} \gamma_\mu P_{\mathrm{L}} \nu
	\end{equation}
	where we  introduced the chirality projector $P_\mathrm{L}=(1-\gamma_5)/2$. This structure implies that the wave-functions of the neutrinos and of the antineutrinos appear necessarily in the combinations,
		\begin{equation} \label{wuvuf}
		\psi_\nu(x)=e^{-i px} P_{\mathrm{L}}\; u
		\quad\text{and}\quad\psi_{\bar\nu}(x)=e^{-i px}
		P_{\mathrm{R}}\; u
	\end{equation}	
where the 4-spinors $u$ obey the Dirac 
equation and where we have considered plane waves for definiteness, thus  $ p\; x = p_\mu\,x^\mu=E t-\vec{p}\cdot\vec{x}$; 
of course, the energy is $E=\sqrt{p^2+m^2}$ where 
$p=|\vec{p}\,|$ is the momentum and $m$ is the mass.
(The wave-functions  that appear in eq.~\eqref{wuvuf} are proportional to the functions $\psi_i$ 
that appear in the field definition that will be given in \eqref{definella}; their properties concerning the chirality  projectors  
can be derived using the results on charge conjugation described in \sectionname~\ref{app:ccm}.)

In the ultrarelativistic limit $p\gg m$ we have $E\approx p$ and 
 the Dirac hamiltonian that rules the propagation of a massive fermion can be written as,
  \begin{equation}
 H_{\mathrm{D}}=\vec{\alpha}\cdot\vec{p}+\beta m\approx \vec{\alpha}\cdot\vec{p} =
 \vec{\Sigma}\cdot\vec{p}\,  \gamma_5 \approx
 \vec{\Sigma}\cdot\vec{n}\,  \gamma_5 \, E
 \end{equation}
 where $\vec{n}=\vec{p}/p$ is the direction of the momentum. 
 The projection of the spin in 
			the direction of the momentum,
 $	{\vec{\Sigma}\cdot\vec{n}}$,  is called the {\em helicity}.
 Thus,  when the kinetic energy is much larger than the mass, we find that the energy eigenstates given in \eqref{wuvuf} satisfy,
 \begin{equation}
		{\vec{\Sigma}\cdot\vec{n}}\; \psi_\nu(x)\approx  -\psi_\nu(x)
		\quad\text{and}\quad
		{\vec{\Sigma}\cdot\vec{n}} \; \psi_{\bar\nu}(x)\approx  +\psi_{\bar\nu}(x)
			\end{equation}
			Stated in plain words,
we see that, in the ultrarelativistic limit,  
neutrinos have negative helicity whereas antineutrinos have positive helicity. In other terms, we have another way to identify what is a neutrino and what is an antineutrino, as illustrated in \figurename~\ref{fig:hel}.
However, in this manner the question arises: what happens if the mass is non-zero and we invert the direction of the momentum? 
Is it possible to retain a distinction between neutrinos and 
antineutrinos?

\begin{figure}[t]
\centering
\subfigure[Dirac states.]
{
\includegraphics[width=0.3\textwidth,bb=0 0 202 261]{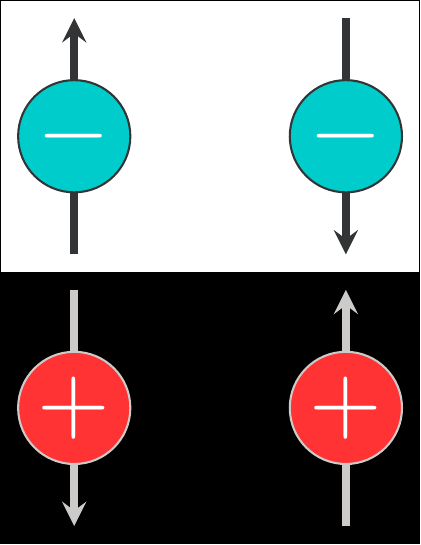}
\label{fig:MassDirac}}
\hspace*{0.15\textwidth}
\subfigure[Majorana states.]{
\includegraphics[width=0.3\textwidth,bb=0 0 202 187]{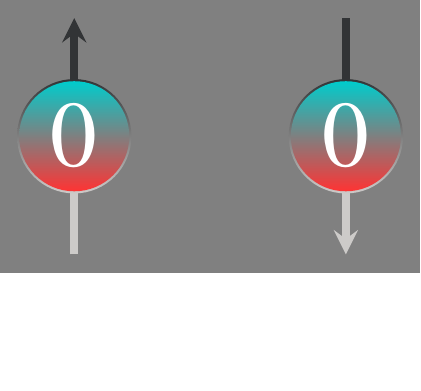}
\label{fig:MassMajo}}
\caption{Massive fields in their rest frames. The arrows show 
		the possible directions of the spin. 
		\protect\subref{fig:MassDirac} shows the 4 states of a Dirac massive field. The signs
		indicate the charge that distinguishes particles and 
		antiparticles, e. g. the electric charge of an electron. 
		\protect\subref{fig:MassMajo} shows the 2 states of a Majorana massive field. 
		The symbol ``zero'' indicates the absence of any  charge (not 
		only of the electric charge): particles and antiparticles 
		coincide.}
\label{fig:DiracMajoDiagr}
\end{figure}
	
	\subsubsection{Neutrino masses}
	The answers to the questions raised just above are not known and to date we have to rely on theoretical considerations. The main hypotheses that are debated are the following two:
\begin{enumerate}[itemsep=-0.7ex,partopsep=1ex,parsep=1ex]
	\item The mass of neutrinos has the same character as the mass of any other charged spin 1/2 particles; in more formal terms, we assume  the same type  of mass originally 
	hypothesized by Dirac for the electron. 
	A closer example is the neutron, that is a neutral particle just as the neutrino.
In a relativistic quantum field theory, this type of mass entails  a strict separation between particle and antiparticle states. More in details,  it means  that in the rest frame there are four distinct states, as for the neutron or the electron, 
namely, 2 spin states for the neutrino and 2 spin states for the antineutrino. This is illustrated in \figurename~\ref{fig:MassDirac}. 
If we accelerate the 2 neutrino states --- depicted in blue  
in \figurename~\ref{fig:MassDirac} ---
at ultrarelativistic velocity and in the direction of the spin, 
one of these state will be allowed to react with the matter through weak interactions, whereas the other state will not; the same is true for antineutrinos.
\item The second hypothesis is the one put forward by Majorana. In this case, there are just 2 spin states in the rest frame; the symmetry under rotations implies that these are two states of the same particle, or in other words, particle and antiparticle coincide. 
 This is illustrated in \figurename~\ref{fig:MassMajo}. This hypothesis can be reconciled with the property of weak interactions, summarized in
 \figurename~\ref{fig:hel}, simply remarking that helicity is {\em not} an invariant quantity  for a massive particle. Therefore, the distinction between neutrinos and antineutrinos is just a feature of ultrarelativistic motions and not a fundamental one. This hypothesis is more economical than the previous one, being based on a smaller number of states, and it is considered plausible by many theorists, for various reasons that we will not examine here in detail.
\end{enumerate}
	
Arguably, 
the question of settling which of these hypotheses is correct is 
the most important open question in neutrino physics to date.
	In principle, it would be possible to observe   
	the difference between the two types of masses in some experiments\footnote{Most plausibly, this can be done  
	through the search of lepton number violating phenomena, such as the decay $(A,Z)\to (A,Z+2)+2 \Pe$, called 
	{\em neutrinoless double beta decay} and discussed elsewhere e.g.\ in \cite{DellOro:2016tmg}.}
	even if we know that the effects  we are searching for experimentally are quite small.  
	
	The difference between the two type of masses is 
	however irrelevant for the description of other important phenomena, including neutrino oscillations. In fact the main experimental evidence arising from the Majorana mass would be the total lepton number violation. On the contrary, neutrino oscillations concern $\nu_\ell \rightarrow \nu_\ell'$ and $\bar{\nu}_\ell \rightarrow \bar{\nu}_\ell'$ transitions, i.e., a transformation from one lepton to another lepton, and not a violation of the total lepton number. This shows that the distinctive feature of Majorana mass is not probed. This was first remarked and proved in Ref.~\cite{sergey} and will be discussed in the following. 
	
\subsection{The major role of relativity for neutrinos\label{sec:rel}}
From the standpoint of the standard model of elementary particle, 
all fermions have a similar status. However,
electrons and neutrinos behave very differently in many situations. One reason of  this difference is just the velocity of the motion. 
The external electrons of an atom 
revolve with a slow velocity $v\sim c\times \alpha$, where  
$\alpha\approx 1/137$ is the fine structure constant:
thus, the role of relativistic considerations is not very central. 
By contrast, relativity is of paramount relevance for neutrinos, 
for several (more or less evident) 
reasons that will be recalled here.

\subsubsection{Smallness of neutrino mass}
Since the beginning it was believed that neutrinos have a small mass
 (Pauli, Fermi, Perrin). Its existence was demonstrated only recently and, despite the fact we do not know its precise value yet, we are sure that it is more than one million times smaller than the mass of the electron. The main experiment that will investigate neutrino mass in laboratory is the KArlsruhe TRItium Neutrino (KATRIN): it aims to study the endpoint region of the tritium decay hoping to improve by 10 times the current upper limit of \SI{2.0}{\electronvolt}, combined results of Mainz and Troitsk experiments. 
In the context of three neutrino oscillations, these results can be compared directly with other ones:  those from SN1987A (or those from pion and tau decay) are (much) weaker than those described above, while 
the most recent combined cosmological analyses claim much tighter limits, that in fact have the potential to discriminate the neutrino mass spectrum.
As discussed above, not only the ``absolute'' value of the neutrino masses, but also the nature of the mass is at present unknown.
 
\subsubsection{Features of the main phenomena of neutrino emission}
Neutrinos are emitted in nuclear transitions where a typical kinetic energy ranges from few keV to some MeV; the lowest energy neutrinos observed by Borexino  (solar neutrinos) 
have few hundreds \si{\kilo\electronvolt}. 
In many high energy processes neutrinos are emitted with much larger 
kinetic energies: the highest energy events attributed to neutrinos are those seen by IceCube 
with energy of few \si{\peta\electronvolt}. In 
all practical cases in which we will be interested here, the kinetic energy is much more than the mass, and 
neutrinos propagate in the 
{\em ultrarelativistic} regime.

\subsubsection{Cross sections growth with energy}
Neutrino interactions, as a rule, increase with their energy $E_\nu$. In fact there is a characteristic constant named after Fermi,  $G_{\mathrm{F}}$, that appears in the amplitude of neutrino interactions. This 
has dimensions of an area (or equivalently an inverse energy squared) thus any  cross section behaves typically as $\sigma \sim  G_{\mathrm{F}}^2 E^2_\nu$ or $G_{\mathrm{F}}^2 m E_\nu$, where $m$ is some characteristic mass. Incidentally, the main feature of neutrinos, namely  
the smallness of their cross sections of interactions (weak interactions) is evident from the numerical 
value of,
\begin{equation}
G_{\mathrm{F}}^2=\SI{5.297e-44}{\centi\meter\squared\per\square\mega\electronvolt}
\end{equation}
This is why these particles have been observed only at relatively high energies.  Moreover, as discussed above  
in most cases (and all cases of interest for the study of oscillations) we can assume that neutrinos are produced in ultrarelativistic conditions.

\subsubsection{Helicity (spin-momentum correlation)}
 In weak interactions ultrarelativistic 
  neutrinos (resp., antineutrinos) have as a rule the spin antialigned
(resp., aligned) with the momentum. This is completely different from what happens to the electrons in atomic physics, where the spin can be both up or down.  
Various aspects of this connection have been 
discussed in \sectionname~\ref{sec:npr}, stressing the assumption $p\gg m$. We add here only one formal remark. When we consider 
 ultrarelativistic motions 
 the wave equations  for neutrinos and antineutrinos 
 can be written 
 as a pair of Weyl equations, 
 namely,
\begin{equation}
i\; \partial_t\; \phi =\mp\, \vec{\sigma}\cdot\vec{p}\; \phi
\end{equation}
where $\sigma_i$ are the Pauli matrices 
and $\vec{p}=-i\vec{\nabla}$. In this formalism, 
the role of helicity is very transparent.
(Formally, this remark  
 concerns the structure of the Dirac equation and becomes evident   
 in the representation of the Dirac matrices in which the chirality
 matrix $\gamma_5$ is diagonal: see \sectionname~\ref{pat} and \ref{frz}.)

\subsubsection{Role of antineutrinos}
Finally, unlike atomic physics where the existence of positrons and anti-baryons can be neglected, in most practical applications neutrinos and antineutrinos  have the same importance:
consider, for instance, big bang or supernovae events that have energies at which the particle-antiparticle production cannot be neglected.
Note that the possibility to create (or to destroy) particles is a specific feature of relativistic phenomena. Indeed, the model of Fermi was proposed since the start as a relativistic quantum field theory.

To summarize, we have described the main reasons why neutrinos are, as a rule, ultrarelativistic and the theory of neutrinos has to be  a relativistic theory. This implies that the formalism to describe neutrinos differs to large extent from the one commonly used for the physics of the electrons in the atoms, i.e., of ordinary matter.  

\section{Introduction to oscillations\label{sec:no}}

As mentioned above, three different 
types of neutrinos (called also flavor) exist.
It is common usage to call neutrino oscillations the observable transformation of a neutrino from one type to another, from the moment when it was produced to the moment when it is detected.\footnote{In a more restrictive and precise sense, 
this term refers to the sinusoidal/cosinusoidal character of some connected phenomena.} 
In his Nobel lecture (1995), Reines depicts this phenomenon by using the vivid allegory of the ``dog which turns into a cat'' which, however, causes a rather disconcerting feeling, being quite far from what we experience in reality.

As remarked immediately by its discoverers, neutrino oscillation is a typical  quantum mechanics phenomenon, that can be easily described resting on the wave nature of neutrinos. 
In order to introduce it as effectively as possible, we consider 3 different physical systems in the following. 
The first is just the propagation of polarized light in a birefringent crystal, the second is the spin states of an electron in a magnetic field, the third one is the neutral kaon system.  All these systems can be considered as sources of precious analogies for us; the last one was originally invoked to introduce neutrino oscillations (Pontecorvo, 1957).

\paragraph{Transformation of the polarization states of the light}

Consider a wave guide (or a transparent crystal) on a table,
oriented along the (horizontal) $x$-axis and assume that 
the two orthogonal directions  
$y$ and $z$ have different refractive indices $n_y$ and $n_z$;
e.g., a birefringent crystal.
Suppose that a plane wave that propagates along the $x$-axis enters the crystal and that it is linearly polarized at $45^\circ$ in the $y-z$ plane, bisecting the 1st-3rd quadrants. 
This polarization is realized when the two oscillating components of the electric field on the $y$ and $z$ axes have a phase difference $\delta = 0$.
 Inside the crystal, the first component propagates according to,
\begin{equation}
 \exp\!\left[ { 2\pi i \left(\frac{x }{\lambda_{y}} -\frac{t}{T}\right)} \right]
 \quad\text{with}\quad
 \lambda_y=\frac{\lambda_0}{n_y}
\end{equation}
and similarly for the $z$-component. Owing to the fact that  the two refractive indices are different, the relative phase between the $y$ and  the $z$ components changes. In particular, if the length of the crystal $L$ satisfies the condition,
\begin{equation}
\delta(L) = 2 \pi (n_y -n_z) \frac{L}{\lambda_0}=\pi
\end{equation} 
the wave will exit with a polarization  
at $135^\circ$, that 
bisects the 2nd-4th quadrants. Thus, the wave 
will be orthogonal to the 
initial wave, the one that entered the crystal. (This device and arrangement is called in optics {\em waveplate}.)

Another description of the same phenomenon 
is as follows.
The definition 
$n={\lambda_0}/{\lambda}$ used above\footnote{We recall that   
$\lambda=2\pi/k$,
$\lambda_0=2\pi c/\omega$ and 
$\omega=2\pi/T$
since the wave behaves as $
 \exp\!\left[ { i \left( k {z } -\omega {t}\right)} \right]$.} 
implies,
\begin{equation}
n=\frac{c}{v_{\Pproton}}
\end{equation}
where  
$v_{\mathrm{p}}=\lambda/T=\omega/k$ is the phase velocity; thus, the  phase velocities are different in the $y$ and $z$ directions. 
We can then say that the components of the wave in the $y$ and $z$ directions propagate with different velocities and this causes the relative change of phase, that modifies the polarization of the propagating wave (and/or of the photons).

An important remark concerns 
the interpretation of this situation in terms of photons,
i.e.\ the impinging photons get transformed  into their orthogonal states.
This conclusion  is quite dramatic indeed and we are entitled to talk of transmutation or transformation of photons.

\paragraph{Transformation of the spin states of an electron}
Consider a region where there is a magnetic field aligned along the $z$ axis with intensity $B$ and an electron whose spin lies in the $x-y$ plane.
It is evident to any physicist that, in this situation, the spin of the electron will remain in the  plane and simply revolve, in a motion of precession around the $z$ axis. 

Let us examine this situation from the point of view of quantum mechanics.
The (matrix elements of the spin) wave-function  that describes 
an electron, polarized in the $x$ direction  
at time $t=0$, is,
\begin{equation}
\psi_{x+}(0)=\frac{1}{\sqrt{2}}
\left(
\begin{array}{c}
1\\
1
\end{array}
\right)
\end{equation}
The (matrix elements of the spin) hamiltonian, that describes the coupling of $\vec{B}$ to the magnetic moment $\vec{\mu}$, is,
\begin{equation}
H=-\vec{B}\cdot\vec{\mu}=
\left(
\begin{array}{cc}
-B \mu & 0 \\
0 & B \mu
\end{array}
\right)
\end{equation}
The state of the electron will be not stationary and 
it is easy to find the solution of the Schr\"{o}dinger equation as,
\begin{equation}
\psi_{x+}(t)=\frac{1}{\sqrt{2}}
\left(
\begin{array}{c}
e^{i \Phi}\\
e^{-i \Phi}
\end{array}
\right)\quad\text{where}\quad\Phi=\frac{B\ \mu\ t}{\hbar}
\end{equation}
This implies for instance that the electron will not remain polarized in the direction of the $x$-axis in the course of the time and in fact, it can eventually turn into an orthogonal state.  
In order to verify this statement, we evaluate the probability to find it in the state with opposite polarization, namely aligned along,
\begin{equation}
\psi_{x-}(0)=\frac{1}{\sqrt{2}}
\left(
\begin{array}{c}
1\\
-1
\end{array}
\right)
\end{equation}
It is straightforward to verify that,
\begin{equation}
P_{x+\to x-}(t)\equiv \left|\: \langle \psi_{x-}(0)  | \psi_{x+}(t) \rangle \: \right|^2=\left| \frac{e^{i \Phi}-e^{-i \Phi}}{2} \right|^2 =
\sin^2\! \left[ \frac{B\ \mu\ t}{\hbar} \right]
\end{equation}
which is non-zero, it is between $0$ and $1$ as any respectable probability,  it has an oscillatory character and in fact it becomes 1 (signaling a full transformation) when $\Phi=\pi/2$.

\paragraph{Transformation of neutral kaons}
Shortly after the discovery of $K$-mesons (kaons) and $\Lambda$-particles, it was realized that a new quantum number, strangeness, is  conserved by strong interactions. The neutral kaon \PKzero  produced e.g., in $\Ppiminus +\Pproton\to \PKzero+\Lambda$, having an internal non-vanishing strangeness quantum number, is different from its own antiparticle \APK. 
Gell-Mann and Pais in 1955 remarked that the common decay channel 
\PKzero or $\APK\to \Ppiplus\Ppiminus$ 
implies necessarily the existence of non-zero transition amplitudes 
$\langle \Ppiplus\Ppiminus | H_{\mathrm{weak}} | \PKzero\rangle$ and $\langle \Ppiplus\Ppiminus | H_{\mathrm{weak}} | \APK\rangle$ at the order $G_\mathrm{F}$. Thus, strangeness is not respected in weak interactions. 
The effective hamiltonian of the two neutral kaons has non-zero  transition  element 
$\langle \PKzero  |  H_{\mathrm{eff}} | \APK\rangle=\delta m -  i\: {\delta \Gamma}/{2}$, namely,  
\begin{equation}
H_{\mathrm{eff}}=
\left(
\begin{array}{cc}
M & \delta m \\
 \delta m^*  & M 
\end{array}
\right)-\frac{i}{2}
\left(
\begin{array}{cc}
\Gamma & \delta \Gamma \\
 \delta \Gamma^*  & \Gamma
\end{array}
\right) \label{bambulo}
\end{equation}
where,
\begin{equation}
	M=\mathcal{O}(1)\quad\text{and}\quad
	\Gamma, \delta m,\delta\Gamma=\mathcal{O}(G_{\mathrm{F}}^2)
\end{equation}
with the non-hermitian part accounting for the weak decays. 
This hamiltonian is non-diagonal, thus the propagation eigenstates differ from the  strangeness eigenstates \PKzero and \APK. These  are indicated with \PKlong and \PKshort, from ``long'' and ``short'' (with reference to their very different lifetimes) and are 
occasionally called  also weak eigenstates.\footnote{In first approximation, they coincide with the CP eigenstates $K_1$ and $K_2$.} 
Even if their mass  difference is so small that it cannot be measured directly, it  entails the occurrence of 
{``virtual transitions 
$\PKzero\leftrightarrows \APK$}'', quoting the words of Gell-Mann and Pais.  More in detail, 
a \PKzero produced at the time $t=0$ is a combination of \PKlong and \PKshort, so 
 it is possible that  it will be 
 detected as a \APK at a subsequent time $t$. This phenomenon is called kaon transformations or transmutation or (with the modern 
language) {\em kaon oscillations}.\footnote{
The first proof was as follows:  a beam of kaons 
originally deriving from $\Pproton+\Pneutron\to \Pproton+\Lambda+\PKzero$
and therefore 
composed by
mesons of positive strangeness, \PKzero, was able upon propagation to produce hyperons of 
negative strangeness, \PSigmaminus.} 

Another interesting behavior of the neutral kaon system 
was predicted in 1955 by Pais and Piccioni: since \PKzero and \APK interact differently with nuclei, the eigenstates of propagation in the ordinary matter are not \PKlong and \PKshort. 
 Thus, when a beam of \PKlong traverses a slab of matter, we will have also \PKshort at the exit. 
This phenomenon is called 
 {\em kaon regeneration}. 

\paragraph{The inception of neutrino oscillations}

After the experimental observation of the kaon transformation  phenomenon, Pontecorvo (1957) asked whether something similar could occur to other systems such as neutrino-antineutrino, neutron-antineutron, atoms-antiatoms. While this does not correspond to the current physical picture of the phenomenon, it is its first specific description in the scientific literature. Later, Maki, Nakagawa and Sakata (1962) mentioned the possible occurrence of {\em virtual transmutation} of neutrinos, again without elaborating the details. This was emended, once again, by Pontecorvo who described the modern  
formalism in 1967. 
It was the phenomenon of kaons regeneration that inspired Emilio Zavattini to ask  Lincoln Wolfenstein about the possible occurrence of something similar in neutrino physics. 
(Note the interesting fact,  the former was an experimentalists and the latter a theorist, just as in the case of Stas Mikheyev and Alexei Smirnov. Similarly, Pontecorvo belonged to a school of physicists where the distinction between theorists and experimentalists was quite vague.)

	\chapter{Leptonic mixing\label{ch:m}}
\minitoc

\section{General considerations\label{lmm}}

\subsection{Definition and context}

\subsubsection{Neutrino flavor states}

The concept of {\em neutrino with given flavor}  identifies the neutral particle associated by weak interactions to the charged leptons with given flavor, namely the electron, the muon or the tau. For what we know to date, the association works in such a manner that the flavor is conserved in the interaction point.  
This allows us to define, e.g., a \Pnue as the particle associated 
to the \APelectron in the $\beta^+$ decay
of \textsuperscript{30}P, or equivalently, the 
state emitted in electron capture processes, say,
$\Pelectron + \text{\textsuperscript{37}Ar} \to \Pnue + \text{\textsuperscript{37}Cl}$.
A similar definition holds true for antineutrinos. 

At the basis of this definition is the assumption, consistent with all known facts,  that charged-current weak-interactions are described by a relativistic quantum field theory, that is Fermi theory at low energies and the standard electroweak model 
--- based on $\mathrm{SU}(3)_\mathrm{C}\times\mathrm{SU}(2)_\mathrm{L}\times\mathrm{U}(1)_{\mathrm{Y}}$ ---
at higher ones. For this reason, many different processes involve to the same type of neutrino, just as in the previous example with \Pnue. Another hidden assumption is that neutrino masses play a negligible role in these interactions, as we will discuss here. 
\paragraph{Neutrino mixing}
 Let us assume that the quantized neutrino fields that have given flavor $\nu_\ell$ 
 {\em do not coincide} with quantized neutrino fields  that have given mass but rather they coincide with linear combinations of fields 
 $\nu_i$ 
 that have given mass $m_i$, namely,
 \begin{equation}
\nu_\ell=\sum_{i=1}^{N}U_{\ell i}\ \nu_i
\quad\text{with}\quad
\begin{cases}
	\ell = \mathrm{e},\mu,\tau & \text{[flavor]}\\
	i=1,2,3 & \text{[mass]}
\end{cases}
\label{mixfi}
\end{equation}
 where $U_{\ell i}$ are the elements of the {\em leptonic mixing matrix}. 
In the following, we will emphasize the case of three light neutrinos, 
$N=3$,  and we will assume that $U_{\ell i}$ are the elements of a $3\times 3$ unitary matrix. These assumptions are consistent:
1.\ with the measured width of \PZzero, that receives a contribution from the 3 light neutrinos;
2.\ with the fact that the neutrinos with given flavors have all the same interactions --- they are {\em universal};
3.\ with cosmological observations. 
The existence of a sizable admixture with other neutrinos would imply new phenomena 
(if they are heavy, it would result into non-unitarity of the $3\times 3$ part of the mixing matrix and it could lead to observable violations of flavor universality) that currently are not observed.  

\subsubsection{Connection with Lagrangian densities} 
The above situation holds true, for instance, if we assume that the 
mass of neutrinos are described by a Dirac  
Lagrangian density,
\begin{equation}
\mathcal{L}_{\mathrm{D}}=- 
\sum_{\ell,\ell'} \left( \overline{\nu_{\ell \mathrm{R}}}
\mathcal{M}_{\ell \ell'}^{\mathrm{D}} \nu_{\ell' \mathrm{L}} +
\overline{\nu_{\ell' \mathrm{L}}} (\mathcal{M}_{\ell \ell'}^{\mathrm{D}})^\dagger  \nu_{\ell \mathrm{R}}\right)
\end{equation}
with,
\begin{equation}
\mathcal{M}^{\mathrm{D}} = V\; \mathrm{diag}(m_1,m_2,m_3)\, U^\dagger
\end{equation}
where $\nu_{\ell' \mathrm{L}}=P_{\mathrm{L}}\nu_{\ell'}$ and $\nu_{\ell \mathrm{R}}=P_{\mathrm{R}}\nu_{\ell}$ are the projections of the neutrino field 
--- later expanded in oscillators, see \eqref{definella}
and \eqref{ridefinella} ---
in the left handed and right handed subspaces, with $P_{\mathrm{R,L}}=(1
\pm\gamma_5)/2$, and $V$ is another mixing matrix, that concerns only right neutrino fields; it can be set equal to $U$ when we are interested only in neutrino oscillations and if we assume that only the known forces are present. The same mixing matrix $U$ stems from 
a Majorana Lagrangian density,
\begin{equation}
\mathcal{L}_{\mathrm{M}}=- 
\frac{1}{2} \sum_{\ell,\ell'} \left( \overline{\nu_{\ell \mathrm{R}}^{\mathrm{c}}}
\mathcal{M}_{\ell \ell'}^{\mathrm{M}} \nu_{\ell' \mathrm{L}} +
\overline{\nu_{\ell' \mathrm{L}}} (\mathcal{M}_{\ell \ell'}^{\mathrm{M}})^\dagger  \nu_{\ell \mathrm{R}}^{\mathrm{c}}\right)
\end{equation}
with,
\begin{equation}
\mathcal{M}^{\mathrm{M}} = U^* \mathrm{diag}(m_1,m_2,m_3) U^\dagger
\end{equation}
where we can define one right-spinor in terms of the left-spinor by using the charge conjugation matrix as follows, 
$\nu^{\mathrm{c}}_{\mathrm{R}}\equiv C \overline{\nu_{\mathrm{L}}}^{\mathrm{t}}$. From this definition,   we derive
$\overline{\nu_R^{\mathrm{c}}}= -\nu^{\mathrm{t}}_{\mathrm{L}} C^{-1}$;
see  \sectionname~\ref{app:ccm} 
for further discussion on the charge conjugation matrix and on its properties.
Let us clarify that the fields $\nu_\ell$ with
$\ell=\ee,\mu,\tau$ that we consider in the rest of our discussion 
are the left chiral fields implied  by ordinary weak interactions and the presence of the chiral projector $P_{\mathrm{L}}$ is not indicated but only to simplify the notations. 

\subsection{Relation between flavor and mass states\label{sec:pera0}}

\subsubsection{Plane waves}
Let us assume that the free neutrinos 
are confined in a cube of volume $V=L^3$ subject to periodicity conditions, so that the momenta are `quantized', i.e., are given in terms of integer numbers $n_x,n_y,n_z$ 
as 
$p_x=n_x\times 2\pi/L $, $p_y=n_y\times 2\pi/L $, $p_z=n_z\times 2\pi/L $. We identify the states fully by means of helicity, 
and use bi-spinors with given helicity, namely, 
$(\vec{\sigma}\cdot\vec{p}/p) \varphi_\lambda=\lambda \varphi_\lambda$, 
where $\lambda=\pm 1$. 
We group the momentum and the helicity in the collective label,
\begin{equation}
\alpha=\vec{p},\lambda
\end{equation}
to shorten a bit the notation. 

In the ``standard''
representation of the Dirac matrices\footnote{Also named 
after Dirac, after Pauli, after both, or also ``non-relativistic'' representation.
Recall that in this representation, the correspondence with ordinary quantum mechanics is more transparent; however, it can be used for any particle, including the ultrarelativistic ones (see \sectionname~\ref{pat}).}  
the {\em plane waves}  of free neutrinos of mass $m_i$ are the eigenvectors 
of the Dirac hamiltonian $H_{\mathrm{D}}$ given by,
\begin{equation}
\psi_i(\vec{x},\alpha)=\frac{\exp( i\, \vec{p}\, \vec{x})}  {\sqrt{2 V}}
\left(
\begin{array}{c}
\sqrt{1+\varepsilon_i}\ \varphi_\lambda \\
\lambda \sqrt{1-\varepsilon_i}\ \varphi_\lambda \\
\end{array}
\right)\quad\text{where}\quad\epsilon_i=\frac{m_i}{E_i}
\label{eq:defepsilon}
\end{equation}
namely $H_{\mathrm{D}} \psi_i=E_i \psi_i$ with 
$E_i =\sqrt{p^2+m_i^2}$. 
We recall incidentally that the bi-spinors are usually grouped into 
four-spinors,
\begin{equation}
u_i(\alpha)=\frac{1}{\sqrt{2}}
\left(
\begin{array}{c}
\sqrt{1+\varepsilon_i}\ \varphi_\lambda \\
\lambda \sqrt{1-\varepsilon_i}\ \varphi_\lambda \\
\end{array}
\right)\label{uuu}
\end{equation}
that are normalized according to,
$u^\dagger u=1$. We use this notation occasionally, see e.g.,  
\eqref{piepolo}.
The single-particle (non-relativistic) normalization conditions hold true:
$\int\mathrm{d}^3 x\; \psi^\dagger \psi=1$ for the plane waves  and $\varphi^\dagger_\lambda \varphi_\lambda=1$ for the bi-dimensional spinors.

In the typical situation when the neutrinos are ultrarelativistic, these  functions take a mass independent universal form,
\begin{equation}
\psi_i(\vec{x},\alpha)\approx \psi(\vec{x},\alpha)
\quad\text{with}\quad
\psi(\vec{x},\alpha)
=\frac{\exp( i\, \vec{p}\, \vec{x})}  {\sqrt{2 V}}
\left(
\begin{array}{c}
\varphi_\lambda \\
\lambda\; \varphi_\lambda \\
\end{array}\label{univrs}
\right)
\end{equation}
The reason why we do not show the dependence of the time is that, immediately below, we introduce quantized fields, given in the interaction   representation.

\subsubsection{Fields and oscillators}
We begin from the quantized field of a neutrino with mass $m_i$,
\begin{equation}\label{definella}
\nu_i(x)=\sum_{\alpha} \left( a_i(t,\alpha)\,  \psi_i(\vec{x},\alpha)+ b_i^\dagger(t,\alpha) \,
\psi_i^{\mathrm{c}}(\vec{x},\alpha)  \right)
\end{equation}
where,
\begin{equation}
\begin{cases}
	a_i(t,\alpha) =a_i(\alpha) e^{-i E_i t}  \\
	b_i(t,\alpha) =b_i(\alpha) e^{-i E_i t}
\end{cases}
\end{equation}
Once again, the index $i$ corresponds to the mass of the neutrino.
Above, we introduced the charge conjugate spinor, 
\begin{equation}
\psi_i^{\mathrm{c}}=C \overline{\psi_i}^{\mathrm{t}}
\end{equation}
see \sectionname~\ref{app:ccm} for a reminder.
 The oscillators $a_i$ and $b_i$  are dimensionless operators that obey the condition 
 \begin{equation}
 \{ a^\dagger(\alpha),  a(\beta)\}=\delta_{\alpha\beta}
 \end{equation}
 namely there is one independent fermionic oscillator each value of $\alpha$.\footnote{In other words, 
 $\delta_{\alpha\alpha'}=\delta_{\vec{p}\vec{p}'} \ 
 \delta_{\lambda \lambda'}$ is a product of Kronecker-deltas;
 recall that in our formalism the momenta are quantized in order to provide a transparent physical interpretation.} 
The above field describes a Dirac neutrino or also a Majorana neutrino, simply replacing $b_i=a_i.$

We are interested in the neutrino field,
\begin{equation}\nu_\ell(x)=U_{\ell i}\ \nu_i(x)
\end{equation} 
where the repeated index $i$ is summed over.
If we consider ultrarelativistic neutrinos $p\gg m_i$, and if we do not 
measure the energy too precisely (so to identify the various mass components), we see from \eqref{univrs} 
that the field with given flavor, that is associated to the corresponding charged lepton, 
can be approximated to,
\begin{equation}
\nu_\ell (x)\approx \sum_{\alpha} \left( a_\ell(t,\alpha)  \,\psi(\vec{x},\alpha)+ b_\ell^\dagger(t,\alpha) \,
\psi^{\mathrm{c}}(\vec{x},\alpha)  \right)
\label{ridefinella}
\end{equation}
with the very important identification,
\begin{equation}
\begin{cases}
	a_\ell(t,\alpha) =U_{\ell i}\ a_i(t, \alpha)   \\
	b_\ell(t,\alpha) =U_{\ell i}^*\ b_i(t,\alpha)
\end{cases}
\label{cirulla}
\end{equation}
It is crucial to note that, in this approximate expression:
\begin{itemize}[itemsep=-0.7ex,partopsep=1ex,parsep=1ex]
\item the plane waves have the same `universal form' $\psi$
valid in the ultrarelativistic limit;
\item the operators with given flavor $\ell$,  namely $
a_\ell$ and $b_\ell$,  
are weighted sums of operators with given mass;
\item the particle ($a_\ell$-type) operators are summed with the matrix $U_{\ell i}$ whereas the antiparticle 
($b_\ell$-type) operators with its conjugate $U_{\ell i}^*$.
\end{itemize}

\subsubsection{Momentum eigenstates}
We can simplify the formulae even further when we consider the fact that the neutrino fields are always multiplied by the 
left chiral projector $P_{\mathrm{L}}=(1-\gamma_5)/2$ in the known weak interactions. For this reason, as it is well known, we will have only 
neutrinos with negative helicity $\lambda=-1$ and antineutrinos with positive helicity $\lambda=+1$ in the ultrarelativistic (UR) limit. 
Therefore, we can write,
\begin{equation}
\begin{cases}
	a_\ell(t,\vec{p},-) =U_{\ell i}\ a_i(\vec{p},-)\  e^{-i E_i t}
	&\text{for UR neutrinos}\\
	b_\ell(t,\vec{p},+) =U_{\ell i}^*\ b_i(\vec{p},+)\  e^{-i E_i t}
	&\text{for UR antineutrinos}
\end{cases}
\label{pera1}
\end{equation}
In this manner, it is possible to define formally 
the states of ultrarelativistic neutrino and antineutrinos as,
\begin{equation}
\begin{cases}
	|\nu_\ell ,\vec{p}\; \rangle\equiv a_\ell^\dagger (0,\vec{p},-) |0\rangle = U_{\ell i}^* |\nu_i,\vec{p}\; \rangle
	&\text{for UR neutrinos}\\
	|\bar\nu_\ell ,\vec{p}\; \rangle\equiv b_\ell^\dagger (0,\vec{p},+) |0\rangle = U_{\ell i} |\nu_i,\vec{p}\; \rangle
	&\text{for UR antineutrinos}
\end{cases}
\label{pera2}
\end{equation}
and also the evolved vectors in the representation of Schr\"{o}dinger, 
\begin{equation}
\begin{cases}
	|\nu_\ell ,\vec{p},t\rangle\equiv a_\ell^\dagger (t,\vec{p},-) |0\rangle = U_{\ell i}^* |\nu_i,\vec{p}\; \rangle e^{-iE_i t}
	&\text{for UR neutrinos}\\
	|\bar\nu_\ell ,\vec{p},t\rangle\equiv b_\ell^\dagger (t,\vec{p},+) |0\rangle = U_{\ell i} |\nu_i,\vec{p}\; \rangle  e^{-iE_i t}
	&\text{for UR antineutrinos}
\end{cases}
\label{pera3}
\end{equation}
A few remarks are in order:
\begin{itemize}[itemsep=-0.7ex,partopsep=1ex,parsep=1ex]
	\item The mixing matrix  enters differently in the relation for neutrinos and antineutrinos states (unless this matrix is real).
	\item It should be noted that the previous 
very important equations, 
derive directly from quantum field theoretical relations.
\item Note that we have omitted the 
helicity labels in the states, or more 
precisely, we have replaced them with an explicit indication of the  
character of the state, either neutrino or antineutrino, which is well defined in the ultrarelativistic limit.
\end{itemize}

\subsubsection{Again on the ultrarelativistic limit}
In order to complete the discussion in \sectionname~\ref{sec:rel} making it more specific, we collect here some final 
  remarks on the ultrarelativistic approximation.
\begin{itemize}
	\item Consider the lowest energy neutrinos that can be detected to date, namely  the solar neutrinos above $\sim \SI{300}{\kilo\electronvolt}$ that can be seen  in Borexino, along with the (very conservative) 
bound of 2 eV on neutrino masses. In this case, the value of the parameter that quantifies the deviation from the UR limit \eqref{eq:defepsilon} is $\epsilon_i < 10^{-5}$.
\item It is possible to  think to various situations when the above assumptions do not apply. For example, very near the endpoint of the $\beta$ spectrum, only the lightest neutrino mass state $\nu_1$ is emitted together with the electron, and not all three of them, simply to conserve energy. However it is not easy to imagine how to form a usable neutrino beam with this composition.
\item More interestingly, the neutrinos produced in the big-bang have now momenta $p\sim kT\approx \SI{0.2}{\milli\electronvolt}$; the two heavier states have masses larger than $\sqrt{|\Delta m^2_{31}|}\approx \SI{50}{\milli\electronvolt}$ and $\sqrt{\Delta m^2_{21}}\approx \SI{8.6}{\milli\electronvolt}$ and are now non-relativistic. However,  a discussion of oscillations is hardly needed, as we can simply treat these states as mass states.
\end{itemize}

Therefore, for all practical applications, we are interested to discuss neutrino oscillations of the ordinary neutrinos only when 
the ultrarelativistic limit applies.


\section{The parameters relevant to oscillations}

\subsection{General considerations}
\subsubsection{Number of parameters in a unitary matrix}
The number of free parameters of a $n\times n$ mixing matrix $U$ satisfying unitarity $U U^\dagger=U^\dagger U=\mathbb{1}$ is easily found taking into account that there are $n$ constraints $\sum_{i=1}^n |U_{\ell i}^2| =1$ and two times $n(n-1)/2$ further constraints $\sum_{i=1}^n  U_{\ell i}^* U_{\ell' i} =0$, with $\ell\neq\ell'$.\footnote{In fact, the  number of complex out-of-diagonal entries of the hermitian matrix $M=U^\dagger U$ is $n(n-1)$.} 
Thus we have $n^2$ real constraints and $n^2$ real 
parameters in the matrix $U$. The determinant is a  phase factor that can be explicitly factored out, 
corresponding to the factorization of the group $\mathrm{U}(n)=\mathrm{U}(1)\otimes\mathrm{SU}(n)$. The same counting can be done even more simply, by considering that a unitary matrix can be written as $U=\exp(i H)$ where $H$ is hermitian; the subgroup $\mathrm{SU}(n)$ corresponds to the subset of the traceless hermitian matrices. Summarizing, in the $2\times 2$ case we have 4 parameters and in the $3\times 3$ case we have 9 of them.

\subsubsection{Overall phases do not matter}  
As we have mentioned just above, the states with given flavor are superpositions of states with given mass and the former evolve in time in a non-trivial manner, acquiring different phase factors. For this reason, flavor transformation occur: this is the conceptual core of  the neutrino oscillation phenomena.

 Note that in quantum mechanics there is the freedom to define at our will which are the overall phase factors of the   
states, and in particular,  which are the phase factors of the states 
with given flavor and those with given mass. 
To be sure, we note that when these phases are changed,  the mixing matrix {\em does change}. In fact, if we change the phases of 
the flavor and of the mass states, according to,
\begin{equation}
|\nu_\ell \rangle' =e^{i\zeta_\ell} |\nu_\ell \rangle
\quad\text{and}\quad
|\nu_i \rangle' =e^{i\zeta_i} |\nu_i \rangle
\end{equation}
we will have,
\begin{equation}
|\nu_\ell \rangle'=  (U_{\ell i}')^* |\nu_i \rangle' 
\end{equation}
with a mixing matrix where we have changed the phases in all rows and columns,
\begin{equation}
U_{\ell i}'=e^{- i\zeta_\ell}\ U_{\ell i}\  e^{i\zeta_i}
\label{ridif}
\end{equation}
On the other hand, it is evident that this redefinition will not change 
the absolute values,
\begin{equation} \label{ant}
|U_{\ell i}'|=|U_{\ell i}|
\end{equation}
and, most importantly, it will not change the probabilities of transition in which we are interested, such as,
\begin{equation}
| \langle  \nu_{\ell_2} |\nu_{\ell_1}, t \rangle |^2 = 
| \text{\reflectbox{$'$}}{\langle  \nu_{\ell_2}} {|\nu_{\ell_1}, t \rangle'} |^2 
\end{equation}
where $\text{\reflectbox{$'$}}\langle  \nu|$ is the bra corresponding to the ket $|\nu\rangle'$;  recall that we have $| \nu_{\ell} \rangle\equiv 
| \nu_{\ell}, t=0 \rangle$.
The possibility to redefine the overall phase factors has however a prominent consequence:
not all the $n^2$ parameters of a unitary matrix are relevant 
for the phenomenon that we are interested to discuss.
Next, we proceed to count the parameters that, instead, do matter.

\subsubsection{Number of parameters relevant to oscillations}
The parameters that are relevant to neutrino and antineutrino oscillations  are those that are invariant under a redefinition of the phases of the flavor fields and of the mass neutrino fields. Evidently these are less than $n^2$, the parameters of a unitary matrix. 
We present three different ways to count them,
in view of the importance of this counting.\footnote{The same counting applies to the quark sector; for this reason, it is common usage to call this subset of  parameters as {\em physical parameters}. However, this terminology is misleading since for neutrinos other phases are potentially measurable --- even if not by means of  (ultrarelativistic) neutrino oscillations.}
\begin{enumerate}[itemsep=-0.7ex,partopsep=1ex,parsep=1ex]
	\item The standard way to count these parameters is to consider that for each flavor or mass field, we can impose a constraint on $U_{\ell i}$ exploiting the redefinition of the phases of the fields.  For instance we can choose to make real a full row and a full column of the mixing matrix; but the two have one parameter in common, thus we can arrange $2n-1$ constraints only.
	\item Another consideration that leads to the same conclusion is that we have $n$ phase factors for the flavor fields and $n$ of them for the mass fields, but  the global phase factors of the flavor fields 
and of the mass fields amount to  a single effective parameter, not to two.
\item As already noted in \eqref{ant}, 
 using different  phases of the fields 
 the modulus of the elements 
of the mixing matrix stays 
unchanged. 
Thus, let us count the number of independent 
parameters $|U_{\ell i}^2|$.
The unitarity relations $\sum_{i=1}^n |U_{\ell i}^2|  =1$ fix one parameter for each  row and similarly for   each column. This consideration implies that the real matrix with elements $|U_{\ell i}^2|$ has only $(n-1)^2$ independent parameters. 
For example, in the $3\times 3$ example, we have four of them,
\begin{equation}
\left(
\begin{array}{lll}
\mathrm{free} & \mathrm{free} & \mathrm{fixed} \\ 
\mathrm{free} & \mathrm{free} & \mathrm{fixed} \\
\mathrm{fixed} & \mathrm{fixed} & \mathrm{fixed} 
\end{array}
\right)
\end{equation}
\end{enumerate}
All in all, the result is that the number of 
free (or independent) parameters is,
\begin{equation}
\text{Number of free parameters}=
n^2-(2n-1)=(n-1)^2
\end{equation}
This means 4 in the   $3\times 3$ case while in the $2\times 2$ case we have only 1 relevant parameter.

\subsubsection{Angles and phases in the mixing matrix}
It is not difficult to identify these free parameters more precisely.
First, consider the subclass of unitary matrices that are also real,  $U=U^*$. It is easy to verify that this corresponds to the orthogonal real matrices, that have $n(n-1)/2$ parameters. These parameters are {\em angles}; e.g., in the $2\times 2$ case we are dealing with a single angle, whereas in the 
 $3\times 3$ case we have the three Euler angles. The remaining physical parameters are {\em phases} relevant to oscillations,
\begin{equation}
\mbox{Number of phases}=(n-1)^2-n(n-1)/2=(n-1) ( n - 2 )/2\quad\text{for}\quad n\ge 2
\end{equation}
In other words, a $2\times 2$ matrix can be 
made real by a suitable choice of phase factors, so that no phases have relevance to oscillations, 
whereas a 
 $3\times 3$ matrix has one physical phase factor relevant to neutrino oscillations. 
 
\subsubsection{Quartets} 
As we will see later, a complex combination of mixing elements, sometimes called {\em quartet}, enters the explicit expressions of 
the oscillation probabilities,
\begin{equation}
\mathcal{Q}_{\ell \ell', ij}= U_{\ell i}\; U_{\ell' i}^*\;  U_{\ell j}^*\; U_{\ell' j} 
\label{quartet}
\end{equation}
and indeed it has the same value for any 
choice of the phases of the $\nu_\ell$ and $\nu_i$ fields --- i.e., it is not convention-dependent. Moreover, it satisfies various interesting properties, 
such as,
 \begin{equation}
 \begin{cases}
 \mathcal{Q}_{\ell \ell', ij}=\mathcal{Q}_{\ell' \ell, ij}^*
= \mathcal{Q}_{\ell \ell', ji}^*\;\\
 \sum_{i} \mathcal{Q}_{\ell \ell', ij}=\delta_{\ell \ell'} |U_{\ell j}^2|
  \;\\
   \mathcal{Q}_{\ell \ell, ij}=|U_{\ell i}^2 U_{\ell j}^2 |
    \; \\
    \dots
    \end{cases}
 \end{equation}
We can usefully define the real and imaginary parts,
\begin{equation}
\mathcal{Q}_{\ell \ell', ij}=R_{\ell \ell', ij}+ i \ I_{\ell \ell', ij}
\quad\text{namely}\quad
\begin{cases}
	R_{\ell \ell', ij} =\mathrm{Re}[ \mathcal{Q}_{\ell \ell', ij} ]\\
I_{\ell \ell', ij} =\mathrm{Im}[ \mathcal{Q}_{\ell \ell', ij} ]
\end{cases} \label{caccase}
\end{equation}
which are of interest since the expression of the oscillation probabilities depend upon these two real quantities, as will be  discussed in \sectionname~\ref{sec:nfl}. These quantities can be always expressed in terms of the mixing angles and of the physical phase factor, however they are interesting on their own, being parameterization independent. Therefore, we examine  here the real part of the quartets $R$ and their imaginary part $I$. 
We will show (focussing on the 3 flavor case) that it is possible to  
calculate both of them 
in terms of the simplest phase-independent quantities, 
namely $|U_{\ell i}^2|$, up to a sign that remains undetermined.

\subsubsection{Expression of \boldmath$R$ and $I$ in terms of $|U_{\ell i}|$}
The symmetry properties are as follows:
the real part is even in the exchange of $\ell \leftrightarrow \ell'$ and
$i\leftrightarrow j$, whereas the imaginary one is odd in both exchanges. Thus,  in the case  $\ell=\ell'$, we have,
\begin{equation}
R_{\ell \ell, ij}=|U_{\ell i}^2 U_{\ell j}^2 |
\quad\text{and}\quad
I_{\ell \ell, ij}= 0
\end{equation}
We are also very interested in the cases when
$\ell\neq \ell'$ and $i\neq j$. The expressions are as follows,
\begin{eqnarray}\label{pranga1}
		R_{\ell \ell', ij} &=& \frac{1}{2}\left( a_k^2-a_i^2-a_j^2  \right) \\
		|I_{\ell \ell', ij}| &=& 2 \sqrt{ 
		p (p-  a_k) (p-a_j) (p-a_i) }
	\label{pranga2}
\end{eqnarray}		
with the following definitions,
\begin{equation} \nonumber
				a_i=|U_{\ell i} U_{\ell' i}|\,;\quad
a_j=|U_{\ell j} U_{\ell' j}|\,;\quad
a_k=|U_{\ell k} U_{\ell' k}|\,;\quad
p=(a_i+a_j+a_k)/2
\end{equation}
where $\ell\neq \ell'$ and 
$i,j$ and $k$ are a permutation of $1,2$ and 3. 
The first expression is immediately obtained by considering  the relation $U_{\ell k} U_{\ell' k}^*=  -U_{\ell i} U_{\ell' i}^*-U_{\ell j} U_{\ell' j}^*$, taking its absolute value and  then 
recalling the definition of $R_{\ell\ell',ij}$.
In order to derive the 
second expression, one can proceed by following steps:
\begin{enumerate}[itemsep=-0.7ex,partopsep=1ex,parsep=1ex]
\item Let us define the following complex numbers,
 $z_i=U_{\ell i} U_{\ell' i}^*$ with $i=1,2,3$, 
 and let us express them as $z_i=a_i \exp(i \theta_i)$, where of course $a_i=|z_i|$.
\item The orthogonality relation 
$U_{\ell k} U_{\ell' k}^* + U_{\ell i} U_{\ell' i}^* + U_{\ell j} U_{\ell' j}^*=0$ reads $z_k+z_i+z_j=0$, that 
can be seen as a triangle in the complex plane.
\item Its area is 
$A=|a_i a_j \sin\theta_{ij}/2|$ where we introduced the angle $\theta_{ij}=\theta_i-\theta_j$ between two sides.
\item
If we rewrite 
$I_{\ell \ell',ij}=\mathrm{Im}[z_i z_j^*]=\mathrm{Im}[ a_i e^{i\theta_i} a_j e^{-i\theta_j}]=a_i a_j \sin\theta_{ij}=2\times A$,  
we realize that the absolute value of this quantity is twice the area
of the above triangle. 
\item But the area  can be  evaluated also with Heron's formula, since  length of the sides is $|z_i|=a_i$ with $i=1,2,3$. 
\end{enumerate}
It is remarkable that just as we have a single phase, we have a single  imaginary quantity in the three neutrino case, e.g.,
\begin{equation} 
I_{\ee\mu,12}=- I_{\ee\tau,12}=+I_{\ee\tau,13}=...
\label{lostess}
\end{equation}
The proof of the first relation goes as follows: let us consider,
\begin{equation}
	I_{\ee\mu,12}=\mathrm{Im}\left[ U_{\ee 1} U_{\mu 1}^* U_{\ee 2}^* U_{\mu 2}\right]
\end{equation} 
From the orthogonality relation, we have,
\begin{equation}
	I_{\ee\mu,12}=
\mathrm{Im}\left[ U_{\ee 1}  U_{\ee 2}^* \left( -U_{\ee 1}^* U_{\ee 2}-U_{\tau 1}^* U_{\tau 2}\right) \right]
\end{equation}
that is also equal to,
\begin{equation}
	-\mathrm{Im}[ U_{\ee 1}  U_{\ee 2}^* U_{\tau 1}^* U_{\tau 2} ]=- I_{\ee \tau,12}
\end{equation}
which is the desired relation. Similarly, we can obtain the other ones.  
An interesting consequence of this result is that, if some element of the mixing matrix is zero, this quantity will be zero. Conversely, a necessary condition for this quantity to be large is that all the elements of the mixing matrix are large. (More on that in the next pages --- see \eqref{jarlsk} and \eqref{cicciri} --- and then in \sectionname~\ref{pipin}.)

\subsection{The standard parameterization}

	
	\paragraph{Explicit expression}
	The leptonic (PMNS) mixing matrix 
	$U$ is conventionally written 
	in terms of the mixing angles $\theta_{12}$, 
	$\theta_{13}$ and $\theta_{23}$ and of the CP-violating phase  $\delta$ that  plays a role in neutrino oscillations.
	The most common convention is,
	\begin{equation}
		U = \left(  
			\begin{array}{ccc} 
				c_{12}c_{13} & s_{12}c_{13} & s_{13}e^{-i\delta} \\ 
				-s_{12}c_{23}-c_{12}s_{13}s_{23}e^{i\delta} & c_{12}c_{23}-s_{12}s_{13}s_{23}e^{i\delta} & c_{13}s_{23} \\ 
				s_{12}s_{23}-c_{12}s_{13}c_{23}e^{i\delta} & -c_{12}s_{23}-s_{12}s_{13}c_{23}e^{i\delta} & c_{13}c_{23}
			\end{array}  
		\right) \label{stdd}
	\end{equation}
	where $s_{ij},c_{ij} \equiv \sin\theta_{ij},\cos\theta_{ij}$ 
	and where the angles lie in the first quadrant whereas the phase $\delta$ is generic, $\delta \in [0,2\pi)$.
	Note the usage of the same phase convention and parameterization of the quark (CKM) mixing matrix even if, of course,
	the values of the parameters are different. This convention makes evident that the CP phase is unphysical if $\theta_{13}\to 0$. Indeed this limit holds if {\em any} mixing angle is zero.

\subsubsection{Expression in terms of product of matrices}
The same matrix can be written as a product of three Euler rotation and a  
matrix of phases as follows,
\begin{equation}\label{mixingmatrix}
U= 
R_{23}\ \Delta \ R_{13}\ \Delta^*\ R_{12}
\end{equation}
where,
\begin{equation}
\begin{aligned}
R_{23} &=
\left(
\begin{array}{ccc}
1&0& 0\\
0& c_{23} & s_{23} \\
0&-s_{23} & c_{23}
\end{array}
\right)\quad
R_{13}=
\left(
\begin{array}{ccc}
c_{13} & 0 & s_{13} \\
0 & 1 & 0 \\
-s_{13}& 0  & c_{13}
\end{array}
\right)\\
R_{12} &=
\left(
\begin{array}{ccc}
c_{12}  & s_{12} & 0 \\
-s_{12}  & c_{12} & 0 \\
0 & 0 & 1 
\end{array}
\right)\quad\hphantom{R_{13}}\negphantom{$\Delta$}
\Delta=
\left(
\begin{array}{ccc}
1 &0& 0\\
0& 1 & 0 \\
0 & 0 & e^{i\delta}
\end{array}
\right)
\end{aligned}
\end{equation}

\begin{figure}[t]
\centering
\includegraphics[width=0.8\textwidth,bb=0 0 274 139]{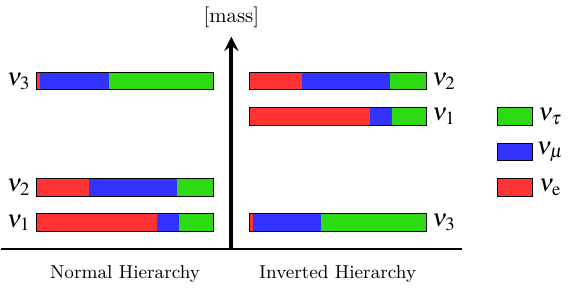}
	\caption{
   Illustration of the mass spectra compatible with 
   the data from neutrino oscillations; left, normal hierarchy; right,
   inverted hierarchy. 
   The length of the colored bars are proportional
   to the flavor content of each mass state, $|U_{\ell i}|^2$. \label{an:fig.hie}}
\end{figure}

\subsubsection{An alternative expression}
By changing the phases, one can obtain apparently different matrices, whose parameters however maintain the same meaning and therefore describe exactly the same oscillation phenomena. An example of such a choice of some interest 
is,
\begin{equation*}
U'= \left(  
			\begin{array}{ccc} 
				c_{12}c_{13} & s_{12}c_{13} & s_{13} \\ 
				-c_{12}s_{13}s_{23}-s_{12}c_{23} e^{-i\delta} & -s_{12}s_{13}s_{23} +c_{12}c_{23} e^{-i\delta}& 
				c_{13}s_{23} \\ 
				-c_{12}s_{13}c_{23}+s_{12}s_{23} e^{-i\delta}& 
				-s_{12}s_{13}c_{23} -c_{12}s_{23}e^{-i\delta}& c_{13}c_{23}
			\end{array}  
		\right) 
\end{equation*}
This new choice is somewhat advantageous for the discussion of neutrinoless double beta decay, since the phase $\delta$ does not play any role for the parameter that summarizes the effect of the light Majorana neutrino masses, i.e.,  
$m_{\beta \beta}\equiv \sum_{i}U_{\ee i}^2 m_i$.\footnote{The parameter $m_{\beta\beta}$  depends only upon the phases of the states with given masses, named ``Majorana phases'', that could be simply and usefully included in $m_i$, 
namely $m_{\beta \beta}=\sum_{i}|U_{\ee i}^2| (m_i e^{i 2 \phi_i})$.}
Below, we exhibit a direct construction of the unitary mixing matrix where the single physical phase factor is singled out, that yields the above form of the matrix.
 
It is quite natural to choose  the phases of the fields in order to make 
real and positive one row and one column. Choosing the ones that cross in $U_{\ee 3}\equiv \sin\theta\equiv s$ we get,
\begin{equation}
    U'' =
    \left(
    \begin{array}{ccc|c}
    & c\;\vec{n} & & s\\
    \hline
    & & & \\
    & u & & c\;\vec{m}\\
    & & & \\
    \end{array}
    \right)
\end{equation}
where of course $c=\cos\theta$. 
Note that in order to 
have a simpler notation, we assume that the 
row and column vectors are evident from the context, and do not
indicate the transpose sign on  the vectors $\vec{m}$ and  $\vec{n}$.
Here, we introduced two real unit vectors that we can decompose as,
\begin{equation}
\vec{n}=(n_1,n_2)=(c_{12},s_{12})\quad\text{and}\quad
\vec{m}=(m_1,m_2)=(s_{23},c_{23})
\end{equation}
The  $2\times 2$ matrix $u$ is still to be determined. Now,
consider the second row $r_2$ of the full matrix $U''$, with components
$r_2=( \alpha \; \vec{n}+ \beta \; \vec{n}_\perp\ ,\ c m_1) $, where 
$\alpha$ and $\beta$ are in general two complex numbers and 
$\vec{n}_\perp=(-s_{12},c_{12})$ is a real vector orthogonal to $\vec{n}$.
When we impose the orthogonality with the first row, we find that 
$\alpha=-s\; m_1$, and when we impose that $||r_2||^2=\sum_i |U_{\mu i}''|^2=1$, we find that $|\beta^2|=m_2^2$. Thus we conclude that the second row is
$r_2=( -s m_1 \; \vec{n}+ e^{i\phi}\ m_2 \; \vec{n}_\perp\ ,\ c m_1) $. For similar reasons, the third row is 
$r_3=( -s m_2 \; \vec{n}- e^{i\psi}\ m_1 \; \vec{n}_\perp\ ,\ c m_2) $. Finally, imposing the orthogonality with the third row, one realizes that $\phi=\psi$. In short, the matrix $u$ defined above is,
\begin{equation}
u=- s\ \vec{m}\otimes \vec{n} + e^{i\phi}\ \vec{m}_\perp\otimes \vec{n}_\perp
\end{equation}
Thus, by choosing the parameters
$\phi=-\delta$, $\theta=\theta_{13}$ and with the phase choice for  $\vec{m}_{\perp}=(c_{23},-s_{23})$,  
we conclude that $U''=U'.$

\begin{figure}[t]
\centering
\subfigure[Quark Mixing Elements.]
{\includegraphics[width=0.35\textwidth,bb=0 0 132 133]{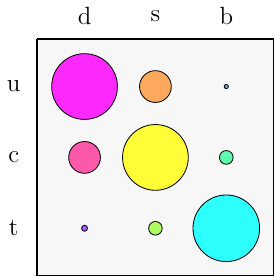}
\label{fig:CKM}}
\hspace*{0.1\textwidth}
\subfigure[Lepton Mixing Elements.]{
\includegraphics[width=0.35\textwidth,bb=0 0 135 133]{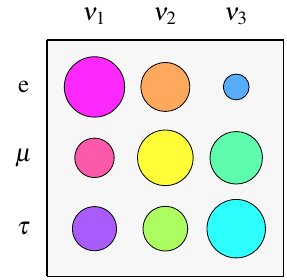}
\label{fig:PMNS}}
\caption{The surfaces of the circles represent  the size of the mixing elements, assuming normal mass hierarchy. From top to bottom,  from left to right:
   \protect\subref{fig:CKM} quark mixing (CKM) elements $|V_{\mathrm{ud}}|,  |V_{\mathrm{us}}|,
   |V_{\mathrm{ub}}|,|V_{\mathrm{cd}}|, ...$;
   \protect\subref{fig:PMNS} lepton mixing (PMNS) elements $|U_{\ee 1}|,  |U_{\ee 2}|,
   |U_{\ee 3}|,|U_{\mu 1}|, ...$ 
      in both panels, the mixing matrices are supposed to be unitary.  
   The hierarchical structure of quark mixing elements contrasts with the one of lepton mixing elements.}
\label{an:fig.mix}
\end{figure}


\subsubsection{Measure of CP violation}
As we have already discussed, for 3 neutrinos 
there is a single imaginary quantity that signals the  presence of a complex mixing matrix, and that, therefore, rules the leptonic CP violation phenomena in neutrino oscillations: see \eqref{lostess}. In the 
standard parameterization, the expression of this quantity is, 
\begin{equation}
J_{\mathrm{CP}}\equiv 
I_{\ee\mu,12}=s_{13} c_{13}^2\; s_{12} c_{12}\; 
s_{23} c_{23}\; s_\delta
\label{jarlsk}
\end{equation}
notice that (consistently with the above discussion) 
this quantity is bound to vanish if anyone of the angles is zero, or if $\delta=0$. It is easy to show that the maximum value 
is $J_{\mathrm{CP}}^{\mathrm{max}}=1/(6\sqrt{3})$, obtained when $\theta_{12}=\theta_{23}=\pi/4$ and $\sin\theta_{13}=1/\sqrt{3}$ and $\delta=\pi/2$; the minimum value is just the opposite one.
The symbol $J$ is used to honor C.~Jarlskog, who originally introduced such a quantity to describe CP violation in quark systems. 

\subsection{What we know on the parameters of neutrino oscillations}

As we will see better below, neutrino oscillations depend upon the 
difference of neutrino masses squared, and this is why the relevant massive parameters are sometimes indicated
symbolically as delta-m-squared.  The 
analysis of oscillation data have allowed us to discover and measure two  different values of delta-m-squared, 
in a manner that  will be recalled shortly later and apart from an important remaining ambiguity. 
The results of these experiments and analyses are 
illustrated graphically in 
\figurename~\ref{an:fig.hie}; note that  two different types of neutrino mass hierarchies (or orderings, or spectra) are compatible with the existing data.

	The values of the parameters of the leptonic mixing matrix, obtained from a global analysis of all oscillation data available 
	in 2016	\cite{capo}, are presented in \tablename~\ref{tab:lisi16}.
We present the best fit values and 
an estimation of the accuracy, obtained from the two sigma 
	ranges as follows: for any parameter $x$, at a given confidence level corresponding to $n$ sigma in the gaussian approximation, the uncertainty is given  \cite{capo} as an asymmetric interval $x_{\mathrm{min}} < x_{\text{best fit}} < x_{\mathrm{max}}$. In order to compute a single number able to give, at first glance, an overview of the order of magnitude of the relative uncertainty $\Delta x / x_{\text{best fit}}$ we set $\Delta x = (x_{\mathrm{max}} - x_{\mathrm{min}}) / 2n$.	
	 Note that these results do not depend strongly upon the type of mass hierarchy, except for the parameter $\theta_{23}$.
	Just for reference, the corresponding angles, in the case 
	of normal mass hierarchy, are,
	$\theta_{13}=8.5^\circ$, 
	$\theta_{12}=33^\circ$, 
	$\theta_{23}=42^\circ$, 
	$\delta=250^\circ$.  For what concerns the value of $\delta$, it was determined by the combined data from T2K and NOvA, that are experiments with appearance channels (T2K was the first able to constrain the CP violation phase). 

	For further illustration, we compare
	in \figurename~\ref{an:fig.mix} the absolute values of the lepton mixing elements 
	(i.e., the PMNS mixing matrix) with  the corresponding values of the quark mixing elements (i.e., the CKM mixing matrix).
	As a direct application, we can evaluate the universal CP violating quantity that has been defined in \eqref{jarlsk}. With the present central values and assuming normal mass hierarchy, we have, 
\begin{equation} \label{cicciri}
J_{\mathrm{CP}} = -0.03
=-32\%\ J_{\mathrm{CP}}^{\mathrm{max}}
\end{equation}
the result for inverted mass hierarchy is similar.

\begin{table}[t]
	\centering
	{\begin{tabular}{@{}lcrc@{}} 
 			\toprule
 			& {Normal (Inverted)} & {Error} & {Units} \\
			\midrule 
  			$\Delta m^2$ & 2.50 (2.46) & $18\,\%$ & $10^{-3}\,\mathrm{eV}^2$\\
  			$\delta m^2$ & 7.37 (7.37) & $2.4\,\%$ & $10^{-5}\,\mathrm{eV}^2$ \\ 
  			\midrule
  			$\sin^2\!\theta_{13}$ &  2.17 (2.19) &  $4.8\,\%$ &   $10^{-2}$ \\ 
			$\sin^2\!\theta_{12}$ &  2.97 (2.97) &  $6.2\,\%$ &   $10^{-1}$ \\ 
			$\sin^2\!\theta_{23}$ &  4.43 (5.75) &  $16\,\%$ &   $10^{-1}$ \\ 
			$\delta$ &  1.39 (1.39) &  19\% &   $\pi$ \\
 			\bottomrule
  		\end{tabular}}
  		\caption{Results of the global analysis of oscillation data of the Bari group (2016) \cite{capo}. The precise meaning of the parameters and the error estimate are discussed in the text.}
\label{tab:lisi16}
\end{table}

	The specific choices of the delta-m-squared parameters 
	used in this analysis is,
	\begin{equation}\delta m^2=m_2^2-m_1^2\;;\quad
	\Delta m^2=|m_3^2-(m_2^2+m_1^2)/2|\end{equation} 
that, denoting the lightest neutrino mass by $m$, 
is equivalent to the following set of relations, 
\begin{equation}
\left\{
\begin{array}{lll}
m_1\ \  &=m & (=\sqrt{m^2+\Delta m^2-\delta m^2/2}) \\
m_2\ \  & =\sqrt{m^2+\delta m^2}  & (=\sqrt{m^2+\Delta m^2+\delta m^2/2}) \\
m_3\ \   &=\sqrt{m^2+\Delta m^2+\delta m^2/2} & (=m)\\
\end{array}
\right.
\end{equation}
where the expression outside (inside) the brackets applies for normal (inverse) mass hierarchy.

Note incidentally 
that the minimum mass is not probed by oscillations.
	It should be stressed that the case of 
	normal mass hierarchy 
	is slighly favored from the 
	present experimental information at $\Delta \chi^2=2.8$
	(namely, about $1.7\sigma$) from the same analysis	\cite{capo}.
	
	\chapter{Vacuum neutrino oscillations\label{ch:vo}}
\minitoc

In the first section (\ref{sec:c2p}), we derive the 
formulae of the oscillation probability, namely the transition probability of neutrino of a given flavor $\ell$ to a flavor $\ell'$, and 
discuss the standard manipulations that are needed to 
understand thoroughly the underlying physics.
In the next section (\ref{sec:ae}) we will consider various applications of these formulae, that will allow to appreciate their usefulness and to explore the flexibility of this formalism.

\section{General formalism\label{sec:c2p}}
Using the results on the mixing matrix $U$ obtained in the previous section, we discuss how to describe the oscillation probabilities in the  situations when the flavor is unchanged (survival or disappearance probability) or instead when the flavor of the neutrino changes (appearance probability). We consider the general case (\sectionname~\ref{sec:nfl}) and also the special case when we have two neutrinos only (\sectionname~\ref{sec:2fl}). Then we consider a formal developments, namely
we discuss neutrino oscillations in 
the context of the field theoretical formalism 
(\sectionname~\ref{sec:fiello}),
we introduce the effective hamiltonians (\sectionname~\ref{sec:hhh}), and finally we discuss which new effects are expected describing the neutrinos as wave-packets (\sectionname~\ref{sec:wp}). 

\subsection{Oscillations with n-flavors and n-mass states \label{sec:nfl}}

\subsubsection{Time dependence and transition amplitude}
The most general case we can consider, describes $n$ neutrino flavor eigenstates $|\nu_\ell\rangle$ as linear combinations of the $n$ neutrino mass eigenstates $|\nu_j\rangle$. The flavor states change in time as the energy eigenstates evolve,
\begin{equation}
|\nu_\ell,t\rangle=\sum_{j=1}^nU^*_{\ell j}|\nu_j,t\rangle 
\quad\text{where}\quad
|\nu_j,t\rangle =|\nu_j,0\rangle\:e^{-i E_j t}
\label{soluzella}
\end{equation}
where we use a somewhat redundant but transparent notation
for $|\nu_j,0\rangle=|\nu_j\rangle$ and similarly for 
$|\nu_\ell ,0\rangle=|\nu_\ell \rangle$.
Therefore, we find easily,
\begin{equation}
|\nu_\ell,t\rangle =\sum_{j=1}^nU^*_{\ell j}\,e^{-i E_j t}|\nu_j,0\rangle =\sum_{j=1}^nU^*_{\ell j}\,e^{-i E_j t} \left( \sum_{\ell'}U_{\ell'j}|\nu_{\ell'},0\rangle \right)
\end{equation}
and eventually,
\begin{equation}
|\nu_\ell,t\rangle =
\sum_{\ell'}  
\left( \sum_{j=1}^n 
U^*_{\ell j}
\,e^{-i E_j t}\,    
U_{\ell'j} 
\right)
|\nu_{\ell'},0\rangle
\label{bum1}
\end{equation}
where we used the basic property of unitary matrices 
$U^\dagger = U^{-1}$.
 From this expression, 
 the formula for the {\em transition amplitude}
 follows immediately, 
\begin{equation}
 \mathcal{U}_{\ell \ell'}(t) \equiv \langle \nu_{\ell'},0|\nu_\ell,t\rangle = \langle \nu_{\ell'},0| \sum_{\ell''} \left( 
 \sum_{j=1}^n U^*_{\ell j}\, e^{-i E_j t}\, U_{\ell''j}\right) |\nu_{\ell'',0}\rangle
 \end{equation}
 leading to,
 \begin{equation}
 \mathcal{U}_{\ell \ell'}(t) =
 \sum_{j=1}^n U^*_{\ell j}\,e^{-i E_j t} \,U_{\ell'j}
 \label{bum2}
 \end{equation}
  The reason why we prefer to indicate this amplitude with the  symbol $\mathcal{U}$, rather than  with the common symbol $\mathcal{A}$ used by other authors, will be discussed in \sectionname~\ref{sec:fiello} and \ref{sec:hhh}.
  
\subsubsection{Probabilities}
With few more straightforward manipulations\footnote{In the
specific,
\begin{eqnarray*}
P_{\nu_\ell \rightarrow \nu_{\ell'}} &=& \mathcal{U}_{\ell\ell'}\mathcal{U}^*_{\ell\ell'} =  \left(\sum_i U^*_{\ell i}U_{\ell'i}e^{-i E_i t} \right)\left(\sum_j U^*_{\ell j}U_{\ell'j}e^{-i E_j t} \right)^* = \\
 &=&\sum_{ij} U^*_{\ell i}\;U_{\ell'i}\;U^*_{\ell'j}\;U_{\ell j}\; e^{-i(E_i-E_j)t}= \sum_{ij}\mathcal{Q}_{\ell\ell',ij}^* e^{-i(E_i-E_j)t}
 \end{eqnarray*}
 Finally, we note that 
$\mathcal{Q}_{\ell\ell',ii}=|U_{\ell i}^2| |U_{\ell' i}^2| $.} 
we obtain the probability,
\begin{equation}
P_{\nu_\ell\rightarrow\nu_{\ell'}}(t)=|\mathcal{U}_{\ell\ell'}(t)|^2
=\sum_{i=1}^n|U_{\ell i}|^2|U_{\ell' i}|^2+\sum_{i\neq j}^n   \mathcal{Q}_{\ell\ell',ij}^* \;e^{-i(E_i-E_j)t}
\label{Pll'}
\end{equation}
that has a compact expression 
using the quartet
$\mathcal{Q}_{\ell\ell',ij}$, 
defined and discussed in the previous section ---
see \eqref{quartet}.
 Noting that  
 $\mathcal{Q}_{\ell\ell',ij} = \mathcal{Q}_{\ell\ell',ji}^*$, we see 
 that the second term in the last expression can be
 rewritten as,
 \begin{equation}
 P_{\nu_\ell\rightarrow\nu_{\ell'}}(t) 
 =\sum_{i=1}^n|U_{\ell i}|^2|U_{\ell' i}|^2+ 
\sum_{i> j}^n 2\; \mbox{Re}[  \mathcal{Q}_{\ell\ell',ij}^* \;e^{-i(E_i-E_j)t}\; ]
\end{equation}
At this point, it is useful to consider two opposite limits. The first one is when the time is very small, so that the phases are negligible. From $|\mathcal{U}_{\ell\ell'}(0)|=\delta_{\ell \ell'}$, 
we get the useful identity,
\begin{equation}
\delta_{\ell \ell'}
 =\sum_{i=1}^n|U_{\ell i}|^2|U_{\ell' i}|^2+ 
\sum_{i> j}^n 2\; \mbox{Re}[  \mathcal{Q}_{\ell\ell',ij}^*  ]
\label{identa}
\end{equation}
The second limit is when the time is very large, so that all the phases 
$(E_i-E_j) t\gg 1$; we suppose that there is no degeneracy in the energy level (i.e., none of the masses are equal). In this case, we can consider the average over time,
\begin{equation}
\langle P_{\nu_\ell\rightarrow\nu_{\ell'}}\rangle\equiv 
  \lim_{T\to \infty } \frac{1}{T}\int_0^T  P_{\nu_\ell\rightarrow\nu_{\ell'}}(t) \:\mathrm{d} t=
\sum_{i=1}^n|U_{\ell i}|^2|U_{\ell' i}|^2
\end{equation}
The {\em averaged} value of the probability is an important quantity whose physical meaning will be discussed in \sectionname~\ref{sec:ave}. Adopting the definition of the phase,
\begin{equation}
\varphi_{ij}=\frac{E_i-E_j}{2}\ t
\end{equation}
along with the identity given in eq.\ \eqref{identa}
we find  the final, general expression \cite{strumiavissani},
\begin{equation}
P_{\nu_\ell\rightarrow\nu_{\ell'}}=\delta_{\ell\ell'}- \sum_{i>j} \left( 4\;\mbox{Re}[\mathcal{Q}_{\ell \ell', ij}]\;\sin^2 \varphi_{ij}+2\; \mbox{Im}[\mathcal{Q}_{\ell \ell',ij}]\;\sin 2\varphi_{ij} \right)
\end{equation}
A further simplification of the expression is possible in
the $3 \times 3$ case, since the imaginary part of the quartet, 
 $I_{\ell \ell',ij}$, can be rewritten 
 using \eqref{lostess} and  \eqref{jarlsk},
\begin{equation} 
I_{\ell \ell',ij} =  J_{\mathrm{CP}}   
\sum_{\ell'' , k} \epsilon_{\ell \ell' \ell''}\ \epsilon_{ijk} =
\begin{cases}
	J_{\mathrm{CP}} &\text{for
	$\ell,\ell',i,j=e,\mu,1,2$; etc.}\\
	-J_{\mathrm{CP}} &\text{for
	$\ell,\ell',i,j=e,\mu,2,1$; etc.}\\
	 0 &\text{for $\ell = \ell'$ or $i=j$}\\
\end{cases}
\end{equation}
where $J_{\mathrm{CP}} $ is the single (Jarlskog) invariant that quantifies 
CP violation.
Thus, 
with the definitions of \eqref{caccase}, the expression for the probabilities of   
disappearance (or survival, $\ell=\ell'$) and of appearance ($\ell\neq\ell'$), 
valid in the three flavor case, read,
\begin{equation}
\begin{cases}\displaystyle
\label{cit}
	P_{\nu_\ell\rightarrow\nu_{\ell}} = 1 - \sum_{i>j}  4\; 
	| U_{\ell i}^2 U_{\ell j}^2|\;\sin^2 \varphi_{ij}
	&[\ell=\ell'] \\ \displaystyle\vphantom{\sum_i^\text{M}}
	P_{\nu_\ell\rightarrow\nu_{\ell'}} = - \sum_{i>j}  4\;R_{\ell \ell', ij}\;\sin^2 \varphi_{ij}
\pm 8 \; J_{\mathrm{CP}}   
\prod_{i>j} \sin \varphi_{ij} &[\ell\neq\ell']
\end{cases}
\end{equation}
 where for the second formula we used a few standard  trigonometrical manipulations and where the sign in front of the CP violating part   
 is $\pm=\sum_{\ell''} \epsilon_{\ell \ell' \ell''}$.
 These equations can be possibly expressed in terms of the 
 elements of the mixing matrix squared by using 
 \eqref{pranga1} and \eqref{pranga2} if one wishes so, but recall that the sign of $J_{\mathrm{CP}}$ is not fixed by those relations.

\subsection{The case with 2 flavors and 2 masses\label{sec:2fl}}
The simplest (but very useful!) model one can think of in order to explain neutrino oscillations is the one with two flavor states. 
We consider for definiteness the two mass eigenstates $|\nu_i\rangle \quad i={2,3}$ and the flavor eigenstates $|\nu_\ell \rangle \quad \ell={\mu,\tau}$, which is a reasonable approximation of a 
real situation (as we discuss later in \sectionname~\ref{ss:2f}). 
There is just a single parameter that fully describes the relevant part of the 
mixing matrix.
Thus, we can assume the following natural parametrization,
\begin{equation}
\left. 
  \begin{pmatrix} \Pnum \\ \Pnut \end{pmatrix}
  =	
  \begin{pmatrix} U_{\mu2} & U_{\mu3} \\ U_{\tau2} & U_{\tau3} \end{pmatrix} 
  \begin{pmatrix} \nu_2 \\ \nu_3 \end{pmatrix}
\right.
\quad\text{where}\quad
U=\begin{pmatrix} c & s \\ -s & c \end{pmatrix} 
\end{equation}
where we indicate with $c \equiv \cos(\theta)$ and $s \equiv \sin(\theta)$ the cosine and sine of the single $\theta$ parameter: the mixing angle.
We have,
\begin{equation}
|\Pnum\rangle =U_{\mu i} | {\nu}_i\rangle
\quad\text{and}\quad
|\APnum \rangle =U_{\mu i} | \bar{\nu}_i\rangle
\end{equation}
since the unitary matrix is real in the $2\times 2$ case: $U_{\ell i}^*=U_{\ell i}$. 
Now we evaluate the so called \textit{survival probability}, namely the probability to detect a $|\Pnum \rangle$ after the neutrino propagation in vacuum for a time $t$.  
The amplitude is,
\begin{equation}
\mathcal{U}_{\mu\to \mu}(t)\equiv
\langle \Pnum | \Pnum ,t \rangle = \sum_{j,k}
\langle \nu_k | U_{\mu k}
U^{*}_{\mu j}\, e^{-iE_jt} |\nu_j \rangle
= \sum_k |U_{\mu k}|^2\, e^{-i E_k t}
\end{equation}
Note in passing that the very same expression
is valid both for models $n$ mass states, if we simply allow the sum to 
run on all the possible values of $k$ and keep the values $E_k$ distinct; but now let us proceed with the simplest case where $n=2$.
We are just one step from calculating the survival probability,
\begin{equation}
P_{\pnum \rightarrow \pnum} \equiv |\ \mathcal{U}_{\mu\to \mu}(t)\ |^2 = \left| \; 
(1-|U_{\mu 3}|^2)+|U_{\mu 3}|^2\,e^{-i(E_3 - E_2)t}\; \right|^2
\label{2fla2mas}
\end{equation}
where we have used the unitarity condition $\sum_b |U_{ab}|^2 = 1$ valid $\forall a $. If we define $\varphi \equiv (E_3-E_2)t/2$ we can rewrite such a probability in a much more useful manner,
\begin{equation}
\begin{aligned} &P_{\apnum \rightarrow \apnum} =\\=& 
|1-|U_{\mu 3}|^2(1-e^{i2\varphi})|^2 =
|1-e^{i \varphi} |U_{\mu 3}|^2(e^{-i\varphi}-e^{i\varphi})|^2 =\\
=& |1+ 2 i e^{i \varphi} |U_{\mu 3}|^2\sin\varphi |^2 
=  |e^{-i \varphi}+ 2 i  |U_{\mu 3}|^2\sin\varphi |^2=\\
=&|\cos\varphi+i\sin\varphi (2 |U_{\mu 3}|^2 -1) |^2 =
1-4|U_{\mu 3}|^2(1-|U_{\mu 3}|^2)\sin^2(\varphi)
\end{aligned}
\end{equation}
The dependence of this expression on time (length) and mass spectrum is completely embedded in the $\varphi$ parameter and the energy has the standard expression $E_i = \sqrt{m^2_i + p^2}$ where $p$ is the momentum of the incoming neutrino.
It is quite useful to insert the numerical value of the physical constants in the formulae in order to make their meaning transparent.
In the ultra-relativistic limit the approximation 
the energy difference between two mass eigenstates of given momentum can be approximated to,
\begin{equation} 
E_3 - E_2 =\frac{E_3^2-E_2^2}{E_3+E_2}=
\frac{m_3^2-m_2^2}{E_3+E_2}
\simeq \frac{m^2_3-m^2_2}{2E} \equiv \frac{\Delta m^2}{2E} 
\end{equation}
where $E_2\approx E_3\approx p\approx E$.
If we plug this into the $\varphi$ definition we get,
\begin{equation} \sin^2(\varphi) = 
\sin^2\left(\frac{(E_3-E_2) t}{2 E}\right) 
\approx \sin^2\left(\frac{\Delta m^2  t}{4 E}\right)
\end{equation}
Since we are dealing with ultrarelativistic neutrinos (namely $p \gg m_i$), using the definition of natural units where $\hbar = c = 1$ we can rewrite the argument of the sine in previous expression, i.e., the phases, as,
\begin{equation} \frac{\Delta m^2 c^4 L}{4 E \; \hbar c} = \frac{\Delta m^2 c^4}{\mathrm{eV}^2} \frac{L}{\mathrm{km}} \frac{\mathrm{GeV}}{E} \left[\frac{\mathrm{eV}^2 \ \mathrm{km}}{4 \mathrm{GeV} (\hbar c)}\right]\end{equation}
and then recalling $\hbar c \sim \SI{197.396}{\mega\electronvolt\femto\meter}$ we get,
\begin{equation}  \frac{\mathrm{eV}^2\times \mathrm{km}}{4\,\mathrm{GeV} \times (\hbar c)}  \simeq  \frac{\mathrm{eV}^2\times 10^3\,\mathrm{m}}{4\times10^9\,\mathrm{eV} \times (197.396\times10^6\,\mathrm{eV} \times 10^{-15}\,\mathrm{m} )} \simeq 1.267 \label{notorio}
\end{equation}
the notorious number that appears in almost every paper on neutrino oscillation.

\subsubsection{A situation between 2 and 3 flavors}\label{3come2}
It might be interesting to stress analogies and differences with a slightly more complex model. Let us consider a world (a theory) not very different from the one we live in, where three mass eigenstates exist but two of them are degenerate. This translates in the following assumption,
\begin{equation}m_2 - m_1 = 0 \end{equation}
We are interested in computing the survival probability in such a system,
\begin{equation} \langle\nu_\ell|\nu_\ell(t)\rangle = e^{-iE_1t}\left[\left(|U_{\ell 1}|^2 + |U_{\ell 2}|^2\,e^{-i(E_2-E_1)t}\right)+|U_{\ell 3}|^2\,e^{-i(E_3-E_1)t} \right] \end{equation}
where we factored out the first phase contribution. Under the assumption of degeneracy of $2 \leftrightarrow 1$ states one can drop the phase arising from the second term of the sum. Using the unitarity of the mixing matrix, we have, 
\begin{equation} \label{p2f} P_{\ell \ell} = \left| \left(1-|U_{\ell 3}|^2\right)+|U_{\ell 3}|^2\,e^{-i(E_3-E_1)t} \right|^2 \end{equation}
that coincides with \eqref{2fla2mas}, obtained in the 2 flavor case. Note that the condition of degeneracy should not hold exactly; for our purpose it is sufficient that the milder condition $(E_2-E_1)t\ll 1$ holds true, which means that the distance between production and detection is small enough. In this sense, we can say that the two flavor formula can be used to describe (under suitable conditions) three flavor situations. It is worth noting that in this approximation, the oscillation probabilities of neutrinos and those of antineutrinos are the same, thus CP violation effects are not visible. This conclusion can be usefully presented also in another manner: the probability in \eqref{p2f} depends only on the three parameters $U_{\ell3}$, but one of them fixed by the unitarity condition $\sum_\ell |U_{\ell3}|^2=1$: thus, the oscillation probabilities depend only upon two physical mixing angles, as it is clear from the standard parameterization. This implies that the third mixing angle can be put to zero without changing the physics and the Jarlskog invariant $J_{CP}$ vanishes, as evident from \eqref{jarlsk}. 

\subsection{Oscillations in field theoretical formalism\label{sec:fiello}}
It is useful to develop in some 
detail the connection with  the 
field theoretical formalism. 
Let us consider the case when a neutrino is produced or detected by charged current weak interactions; in this case, 
the key quantity is just the matrix elements of the neutrino field between an initial (or final) neutrino (or antineutrino) and the vacuum.

Let us begin from the simple case of  a neutrino mass state. In this case, 
indicating explicitly the indices of mass $i,j$ and the indices of the 4-spinor $a,b$, 
the matrix element of interest is,
 \begin{equation} 
\langle 0 | (P_{\mathrm{L}})_{ab} \hat{\nu}_{a j}(x) | \nu_i,\vec{p}\, \rangle
\stackrel{\mathrm{UR}}{=}
\psi_a(\vec{x},\vec{p},-)\times \delta_{ij} e^{-i E_i t}
\label{pl1}
\end{equation}
where as usual we consider the ultrarelativistic limit, so that the one-to-one connection between helicity and chirality holds true; note that the presence of the chirality projector  leads to a negative helicity $-1$ for neutrinos.
The universal function $\psi(\vec{x},\vec{p},-)$ is given in 
\eqref{univrs}. 
 The matrix element shown in \eqref{pl1} 
is diagonal in the mass indices, as it is {\em a priori} evident from the fact that  the states with given mass and momentum are  stationary states; furthermore, it 
has the  
characteristic space-time dependence   of 
de Broglie wave, namely, $\sim e^{ i (\vec{p}\cdot\vec{x}-E_i t)}$.

Now, we compare this result with the one that we obtain when we have a state with given flavor and a field with another flavor. Using 
\eqref{mixfi} and \eqref{pera2}, we find,
 \begin{equation} 
 \begin{aligned}
 	\langle 0 | (P_{\mathrm{L}})_{ab} \hat{\nu}_{a \ell'}(x) | \nu_\ell,\vec{p}\, \rangle
&\stackrel{\mathrm{UR}}{=}
U_{\ell' j} U_{\ell i}^*
\langle 0 | (P_{\mathrm{L}})_{ab}\,\hat{\nu}_{b j }(x) | \nu_i,\vec{p}\,\rangle 
\\
&\stackrel{\mathrm{UR}}{=} \psi_a(\vec{x},\vec{p},-)\times U_{\ell' i} U_{\ell i}^* \,e^{-i E_i t}\\
&\stackrel{\hphantom{\mathrm{UR}}}{\equiv}
 \psi_a(\vec{x},\vec{p},-)\times 
\mathcal{U}_{\ell \ell'}(t) 
 \end{aligned}
\label{pl2}
\end{equation}
 In this matrix element, 
 the spinorial function $ \psi_a$
 is multiplied by transition amplitude
 $\mathcal{U}_{\ell \ell'}(t)$, that we have obtained previously.
 Comparing  \eqref{pl2} and \eqref{pl1}, 
 we see that the,  for flavor states, the 
 transition amplitude
  $\mathcal{U}_{\ell \ell'}(t)$ 
 plays the same role as  
 the ordinary phase factor $e^{-iE_i t}$ 
 for mass states.
 
 Of course, the above field theoretical  formalism leads 
 to the same result of the previous section. Moreover, it suggests the fact that the amplitude is a unitary matrix, as we can easily check by a direct computation. Using the definition of the amplitude in \eqref{bum2}, we have,
  \begin{equation} 
  \begin{aligned}
  	{\left[\mathcal{U}\,\mathcal{U}^\dagger\right]}_{\ell_1\ell_2}
  	& = 
 \sum_{\ell} \mathcal{U}_{\ell_1 \ell}\,\mathcal{U}_{\ell_2 \ell}^*
 =
 \sum_{\ell,j,k}   \left[ U_{\ell_1 j}^* \; e^{-iE_j t}\; U_{\ell j} \right]
  \left[ U_{\ell_2 k} \; e^{iE_k t}\; U_{\ell k}^* \right]=\\
  &= \sum_{j,k}   U_{\ell_1 j}^*\; e^{-iE_j t} \ 
  U_{\ell_2 k} \; e^{iE_k t} \ \delta_{jk}=
   \sum_{j}   U_{\ell_1 j}^*\, U_{\ell_2 j}  = \delta_{\ell_1 \ell_2}
  \end{aligned}
  \label{evoluto}
 \end{equation}
where we have repeatedly used the unitarity of the mixing matrix.\footnote{From a mathematical point of view, we can reach the same conclusion even more simply, 
noticing that  $\mathcal{U}(t)$ is a product of three unitary matrices: $\mathcal{U}(t)=U^*\mbox{diag}[ \exp(-i E_i t)] U^{\mathrm{t}}$.}
From a point of view of the physical interpretation, it is interesting to consider the diagonal term, 
 \begin{equation}
1=[\mathcal{U}\ \mathcal{U}^\dagger ]_{\ell \ell}=
\sum_{\ell'}  | \mathcal{U}_{\ell \ell'}|^2=\sum_{\ell'}  P_{\nu_\ell\to \nu_{\ell'}}
  \end{equation}
namely, the condition that the oscillation 
probabilities summed over all possible final states,  give just 1 --- that is, the neutrino does not disappear, it changes only flavor.

\subsection{The vacuum hamiltonians\label{sec:hhh}}
The neutrinos with given mass propagate according with the standard
free hamiltonian. In quantum field theory and with the notations of \eqref{definella} this hamiltonian for the $i$-th mass state is just,
\begin{equation}
\begin{aligned}
{\mathbf{H}}_{i}^{\mathrm{free}} &=
\int \mathrm{d}^3 x :\!\! \mathcal{H}_{i}^{\mathrm{free}}\!:\\
&=
\sum_{\vec{p}} \sqrt{p^2 + m_i^2}\sum_{\lambda=\pm 1}
[a_i^\dagger (\vec{p},\lambda)\,a_i  (\vec{p},\lambda) + 
b_i^\dagger (\vec{p},\lambda)\,b_i  (\vec{p},\lambda) ]
\label{calenda}
\end{aligned}
\end{equation}
two remarks are in order:
1.~we use the boldface to stress the operatorial character;
2.~the term with the $b$-oscillators should be omitted for Majorana neutrinos.
Therefore, we obtain 
the obvious solutions for the evolution of the energy  eigenstates $|\nu_i,t\rangle$ in the Schr\"{o}dinger representation, those 
used e.g., in \eqref{soluzella}.

The corresponding solutions for the flavor states, given again in \eqref{soluzella}, require to assume 
\eqref{pera1}, \eqref{pera2}: these equations, as discussed in \sectionname~\ref{sec:pera0}, are valid in the ultrarelativistic limit.  Under these conditions, we can usefully introduce a matrix on flavor space that describes the evolution of the state. This is obtained 
as follows, 
\begin{equation} 
i \frac{\mathrm{d}}{\mathrm{d}t} |\nu_\ell, t\rangle = U_{\ell j}^*
\; i \frac{\mathrm{d}}{\mathrm{d} t}  |\nu_j, t\rangle = 
U_{\ell j}^*
E_j  |\nu_j, t\rangle = 
U_{\ell j}^*
E_j  U_{\ell' j}  |\nu_{\ell'}, t\rangle
\end{equation}
which leads to,
\begin{equation}
i \frac{\mathrm{d}}{\mathrm{d}t} |\nu_\ell, t\rangle
	\equiv \left[ H_0 \right ]_{\ell \ell'} 
 |\nu_{\ell'}, t\rangle
\end{equation}
where of course the repeated indices are summed. In matricial notation, we have,
\begin{equation} 
 H_0= U^*\ \mathrm{diag}(E)\ U^{\mathrm{t}}\quad\text{where}\quad
 \mathrm{diag}(E)_{ij}=
 \begin{cases}
 	E_i & \text{if $i = j$}\\
	0 & \text{if $i \neq j$}\\
 \end{cases}
\end{equation}
With this matrix, the results of 
\eqref{bum1} or \eqref{bum2} can be presented\footnote{The calculation of $e^{-i H_0 t}$ can be performed by the Taylor expansion, $e^{-i H_0 t}=1+(-i H_0 t)+1/2  (-i H_0 t)^2+\dots$, noticing that $H_0^n= U^*\ \mathrm{diag}(E)^n\ U^{\mathrm{t}}$.} 
as,
\begin{equation} 
 |\nu_\ell, t\rangle =  \left[ e^{-i H_0 t} \right ]_{\ell \ell'}   |\nu_{\ell'}, 0\rangle \quad\text{or}\quad \mathcal{U}(t)=e^{-i H_0 t}
\end{equation}
respectively. 
Therefore, we see that the transition amplitude
can be regarded as an evolutor, in the quantum mechanical sense; compare also with \eqref{evoluto}.
For the antineutrinos, we need to replace $U_{\ell i}$ with $U_{\ell i}^*$ but otherwise the results are identical. 

In summary, we can describe the flavor transformation of neutrinos, caused by the effect of the relative phases of the neutrinos with given mass and by a non-trivial mixing matrix, 
by  introducing  the 
``effective hamiltonians'' valid in the ultrarelativistic limit, 
\begin{equation} 
H_{0,\nu}\equiv U^* \mathrm{diag}(E) U^{\mathrm{t}}\quad\text{and}\quad
H_{0,\bar{\nu}}\equiv U \mathrm{diag}(E) U^\dagger
\label{bibolott}
\end{equation}
for the propagation of free neutrinos and antineutrinos respectively, where we adopted the matricial notation and defined the diagonal matrix,
\begin{equation} 
\mathrm{diag}(E)_{ii} = E_i
\end{equation}
The ``effective hamiltonians'' $H_{0,\nu}$ and $H_{0,\bar\nu}$ are commonly called {\em vacuum hamiltonians} for two reasons:
	1.~to emphasize that the flavor transformation occur for neutrinos and antineutrinos that propagate in vacuum; and
	2.~to remark the difference with the (additional) matter hamiltonian term that will be introduced and discussed later on.
Note that, again in the ultrarelativistic approximation ($p \sim E$) one can write,
\begin{equation} E_i = \sqrt{m^2_i + p^2} \simeq p + \frac{1}{2} \frac{m^2_i}{E} \end{equation}
The first term gives rise to the diagonal matrix $p\times \mathbb{1}$ and this can be dropped, because it gives rise to an overall phase factor, common to all the flavor states and thus irrelevant for oscillations. Thus, one can equivalently write,
\begin{equation} \label{equib}
H_{0,\nu}= \frac{1}{2 E}U^* \mathrm{diag}(m^2) U^{\mathrm{t}}
\quad\text{or also}\quad
H_{0,\nu}= \frac{1}{2 E}U^* \mathrm{diag}(\Delta m^2) U^{\mathrm{t}}
\end{equation}
where e.g., $\mathrm{diag}(\Delta m^2)_{ii}=m_i^2-m_1^2$. 

Note that the ``effective hamiltonians''  depend explicitly upon the value of the momentum; thus, they should be thought more properly as sets of matrix elements of a true hamiltonian between states with given momentum --- i.e., plane waves. This is why the symbol $H_0$ is not in written in bold-face, as e.g., \eqref{calenda}.

\subsection{Oscillations and wave packets\label{sec:wp}}

The usage of the plane waves to derive oscillation formulae sometimes generates
confusion. In fact, it is not evident how to define the transit time or the distance between
production and detection with plane waves, that are not localized. We discuss here how to improve the description.

\subsubsection{Scalar wave packet}
Let us consider a packet of scalar waves  in one dimension,
propagating along the $z$ axis,
\begin{equation}
f(z,t)=\sum_{q} e^{i (q z -E_q t)} f(q)
\end{equation}
we recall that we assume that the particle is in a box so that its momenta are quantized. 
Suppose that this packet has typical momentum $p$ (namely, $|f(q)^2|$ is localized around the point  
$q=p$, where it attains its maximum) and then let us expand the energy around this point,
\begin{equation}
E_{p+q}\approx E_p + q v_p\quad\text{where}\quad
v_{p}=\frac{\mathrm{d} E_p}{\mathrm{d} p}
\end{equation}
We can change variable $q=p+k$ and rewrite,
\begin{equation}
f(z,t)\approx   e^{i (p z -E_p t)}  \sum_{k} e^{i k (z -v_p t)}   f(p+k)\equiv e^{i (p z -E_p t)}  F(z -v_p t)
\label{bagg}
\end{equation}
where we introduced the auxiliary function, $F(z)\equiv  \sum_{k} e^{i k z }   f(p+k)$. 
The two factors of \eqref{bagg} are amenable to the following interpretation: 
the first one describes a (de Broglie) plane wave, the second  one describes the position of the particle in the space. 
Two remarks are in order:
\begin{enumerate}[itemsep=-0.7ex,partopsep=1ex,parsep=1ex]
\item The function $f(z,t)$ is approximately an eigenstate of the the momentum $-i \:\mathrm{d}/\mathrm{d} z$ if 
$p\times | F(z-v_p t)|\gg \left| \mathrm{d} F(z-v_p t)/\mathrm{d} z \right| $, i.e., if 
the function $F(z-v_p t)$ does not vary much when it is non-zero. 
The same condition implies that $f(z,t)$ is also an approximate 
eigenstate of the energy, assuming the dispersion relation dictated by relativity.\footnote{First we multiply both sides by $v_p=v=p/E$ and note that the l.h.\ side $p^2/E\, | F(z-v t)|$ is smaller than $E | F(z-v t)|$. In the r.h.\ side, instead, we have $v\times  \left| \mathrm{d} F(z-v t)/\mathrm{d} z \right| =
 \left| \mathrm{d} F(z-v t)/\mathrm{d} t \right|$. Thus we obtain  $E | F(z-v t)|\gg \left| \mathrm{d} F(z-v t)/\mathrm{d} t \right|$, that implies that $f(z,t)$ is an approximate 
 eigenvalue of the energy $i\:\mathrm{d} /\mathrm{d} t$.}
 \item The modulus of the function is, $|f(z,t)|^2\approx |F(z -v_p t)|^2$. 
 Note that   in the ultrarelativistic regime in which we are particularly interested, we have $E\approx p c$ in good approximation and the velocity is just $v_p\approx c$. The wave packet becomes in this limit non-dispersive --- i.e.,  
 it propagates maintaining its shape.
 \end{enumerate}
 
 \subsubsection{Neutrino wave packet} Let us consider the state of a neutrino with flavor $\ell$ propagating 
 with momentum along the $z$ axis. 
 Summing over the component of the momentum along $z$, call it $q$, we have,
 \begin{equation}
 |\nu_\ell,g\rangle=\sum_{q} g(q)|\nu_\ell, q\rangle
 \end{equation}
 where $g(q)$ is a adimensional factor such that
 $\sum_q | g(q)|^2=1$. 
 Now we calculate the transition from this state to the vacuum, 
 following the  same manipulations described in
 \sectionname~\ref{sec:fiello}.
 We obtain, 
 \begin{equation}
 \begin{aligned}
 \langle 0| P_{\mathrm{L}} \hat{\nu}_{\ell'}(x)  |\nu_\ell,g\rangle &
 \stackrel{\mathrm{UR}}{=} U_{\ell' j}  U_{\ell j}^* \sum_q g(q) \psi(z,  q,-) e^{-i E_j t}\\
 & = U_{\ell' j}  U_{\ell j}^* \sum_q g(q) \frac{e^{i (q z- E_j t)}}{\sqrt{V}}\   u_-
 \label{piepolo}
 \end{aligned}
 \end{equation}
where $E_j=\sqrt{q^2+m_j^2}$. 
The ultrarelativistic 4-spinor of negative helicity\footnote{In the ``standard'' representation of the Dirac matrices, 
 this is just,
 \begin{equation}
 u_-=\frac{1}{\sqrt{2}}
 \left(
 \begin{array}{c}
 0\\
 +1\\
 0\\
 -1
 \end{array}
 \right)
 \end{equation} }
 $u_-$ factors out; so it does the dependence upon the helicity. Thus,  
the considerations concerning the scalar wave packet 
can be applied quite directly. Suppose that the typical momentum of the wave packet 
is $p$. Then, define the auxiliary wave function concerning a massive neutrino, 
\begin{equation}
\mathcal{G}(z)\equiv \sum_q \frac{e^{i q z}}{\sqrt{V}}\,g(p+q) \,u_-
\end{equation}
Finally, as in \eqref{bagg}, we conclude,
\begin{equation}
 \langle 0| P_{\mathrm{L}} \hat{\nu}_{\ell'}(x)  |\nu_\ell,g\rangle  \stackrel{\mathrm{UR}}{=} 
  U_{\ell' j} \  e^{i (p z-E_j t)} \  \mathcal{G}(z-v_j t)\  U_{\ell j}^*  
  \label{maina}
 \end{equation}
This is our main formula, where the formalism previously described 
has been enhanced to include a description of the  
wave packet. In fact, when we set $g(p+q)=\delta_{q,0} $,  we get 
$ \mathcal{G}\to \psi$, namely we have plane waves. If, furthermore, we replace $v_j\to c$, we end with \eqref{pl2}. Note that the r.h.\ side of \eqref{maina}
can be equated to the $\ell'$ component of the wave-function,  
\begin{equation}
[\Psi^{\ell}(t) ]_{\ell'}=U_{\ell' j}   e^{i (pz-E_j t)}  \mathcal{G}(z-v_j t) U_{\ell j}^*\label{parello}
 \end{equation}
 that describes  a neutrino $\ell$ at $t=0$ that evolves  in the course of the time.
Let us proceed with a detailed discussion of \eqref{maina}.
 
\begin{figure}[t]
\centering
   \includegraphics[width=0.7\textwidth,bb=0 0 319 214]{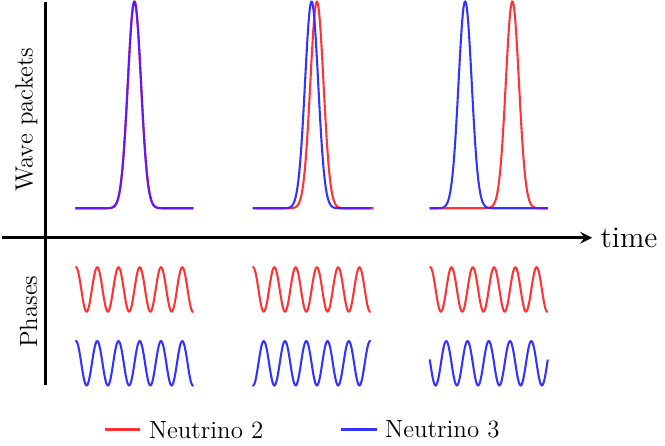}
   \caption{
   Physics of oscillations illustrated with the 
   \Pnum--\Pnut system. We assume the existence of 
   two mass components  $\nu_2$ and $\nu_3$, indicated 
    in red and blue respectively.  
   The the arrow indicates the direction of the time.
We display  the wave packets --- plots over the arrow --- and the phases  of the two 
   mass components --- plots below the arrow --- at three representative  times. 
   1.~Initially (left), the two wavepackets are overlapped; 
   until the two mass components are in phase, 
   the state is $|\Pnum\rangle$. 2.~Then (middle), 
    the two wavepackets are still overlapped; 
   when the two 
   mass components get counter-phased, the state becomes  $|\Pnut\rangle$. 3.~After a long time  (right), when the mass components are fully separated, we have a 50/50 flavor mixture.
   \label{an:fig.ill}}
\end{figure}
 
\subsubsection{Discussion of the formula}
Neutrino masses enter \eqref{maina} or equivalently 
\eqref{parello}
 through the energy 
$E_j=\sqrt{p^2+m^2_j}$ and the velocity
$v_j=p/E_j$. 
Thus, they produce two kinds of effects: the relative phases get different after a time $(E_1-E_2) t_{\mathrm{osc}} \sim 1$ and
the wave packets separate after a time $(v_1 -v_2) t_{\mathrm{dec}}\sim \sigma_z$, where $\sigma_z$ is the
size of the wave packet. 
These are two different effects: interferences connected to the phase change, namely 
neutrino oscillations and separation of the wave packets, namely decoherence. 
It is important to stress that oscillations occur when the components of the wave packet associated to the different mass states overlap. 
Using $\sigma_E\approx c\,\sigma_p\sim c/\sigma_z$,
we have,
\begin{equation}
t_{\mathrm{osc}}\sim t_{\mathrm{dec}} \times \frac{\sigma_E}{E}
\end{equation}
Therefore, if the energy uncertainty $\sigma_E$ is much less than the value of the energy $E\approx p c$,
neutrino oscillations will happen earlier and decoherence will happen later. 

Equivalently, we can carry on the discussion by introducing and comparing  three lengths: propagation length (i.e., distance between production and detection) $L=c\,t$; oscillation length $L_{\mathrm{osc}}=c\, t_{\mathrm{osc}}$; 
coherence length  $L_{\mathrm{dec}}=c\, t_{\mathrm{dec}}$.
Similarly, it is useful to note 
the two phenomena mentioned above corresponds to 
different velocities among the individual components: oscillations 
are caused by the difference  among 
phase velocities $E_j/p$; decoherence is caused by 
the difference among group velocities $\mathrm{d} E_j/\mathrm{d} p=p/E_j$ instead.

{\subsubsection{Summary and illustration} At this point we have a general picture of 
what happens to ultrarelativistic neutrinos during their 
propagation:
\begin{enumerate}[itemsep=-0.7ex,partopsep=1ex,parsep=1ex]
	\item The light neutrinos are produced according to
	the mixing matrix, since we assume that their kinetic  energy is much larger than their mass.
	\item Initially, we can assume
in excellent approximation that the  components with different masses, and thus also their wave packets, travel
with velocity $c$.  
Thus it is appropriate to use $\mathcal{G}(z-ct)$ for all mass components in \eqref{maina} or \eqref{parello}.
Then, we rewrite \eqref{parello} as follows,
\begin{equation}
[\Psi^\ell]_{\ell'}= \mathcal{U}_{\ell \ell'}(t) \  \mathcal{G}(z-ct)
\end{equation}
and we see that the last, flavor independent factor, that describes the shape of the wave-function, is multiplied by 
the usual transition amplitude 
\begin{equation}
\mathcal{U}_{\ell \ell'}(t) =U_{\ell j}^*\,  e^{i (p z - E_j t)}\, U_{\ell' \! j} 
\end{equation}
The key feature of this phase is that the  
differences of phases $(E_i-E_j)t$ deviate from zero and oscillations in proper sense occur.
\item[$2'.$] Subsequently,  the differences of phases of the components 
$(E_i-E_j)t$ appearing in $\mathcal{U}_{\ell \ell'}$ 
become very large. Thus, the phase factors oscillate very rapidly when we vary, even slightly, the distance or the energy at the source or at the detection.  For this reason, the effect of the oscillatory terms 
is not anymore measurable in practice.
\item Eventually, the wave packets separate and the description   
becomes even simpler conceptually.
When the different components do not overlap anymore, oscillations are completely lost (i.e., not only for practical purposes). Interestingly,  neutrinos with different masses could be in principle  detected separately in this stage even if, in practice, this is extremely difficult. If they are not measured, the probability of transmutation that follow from   step $2'.$ and step 3.\ are the same.
\end{enumerate}
The steps 1., 2.\ and 3.\ are illustrated graphically in 
\figurename~\ref{an:fig.ill}.

\section{Applications and examples\label{sec:ae}}

  \begin{figure}[t]
  \centerline{
  \includegraphics[width=0.8\linewidth,bb=0 0 800 535]{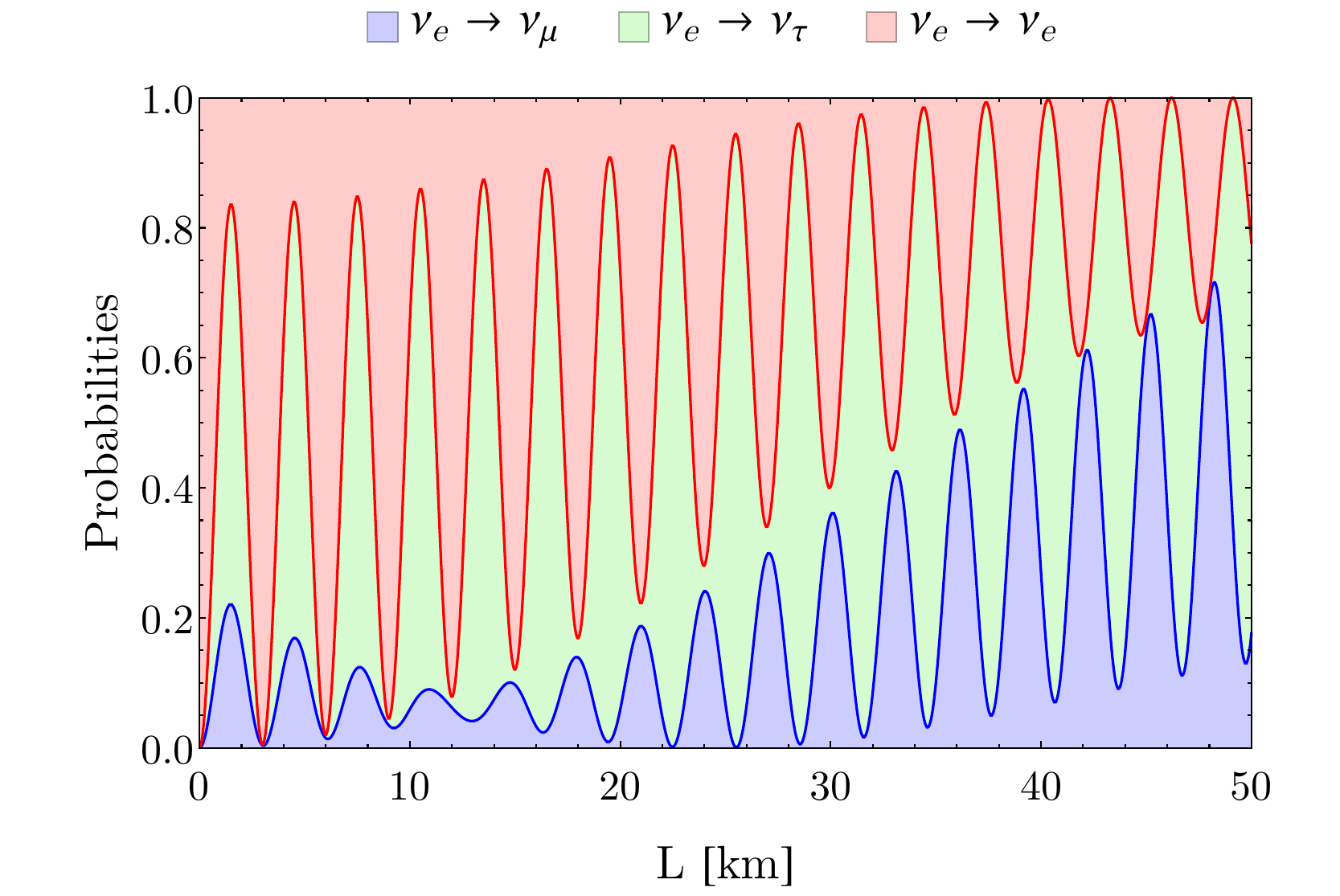}} 
   \caption{In this plot we consider 
   an electron neutrino of \SI{3}{\mega\electronvolt} that propagates for several kilometers. For each fixed distance, 
   the probability that the electron neutrino   
   becomes a muon (or tau) neutrino 
   in the course of the propagation  
   is given by the blue (or green) region. The probability that it remains an electron neutrinos is given by the red region. The plot emphasizes that the sum of the three probabilities is one, i.e., there is no loss of probability. 
      \label{an:fig.bl}}
\end{figure}

We will proceed to illustrate the formulae given above by considering 
several specific applications and particular cases. However, we would like to begin  simply by justifying better the name of  ``oscillations'' given to this phenomenon, especially in the modern literature. 
We have seen the appearance of a several 
oscillatory functions, however this point is illustrated much better by a plot. In \figurename~\ref{an:fig.bl}, we show the  probabilities that an electron neutrino remains such or changes in the course of its propagation, with the parameters given in \tablename~\ref{tab:lisi16}.

\subsection{Why two flavor formulae are so useful \label{ss:2f}}
Two flavor oscillation formulae in vacuum allow us to discuss the main facts concerning 
the observed neutrino oscillations. In order to see how, recall that:
	1.~electron antineutrinos are produced in nuclear fission  
reactors and then detected; similarly
2.~muon neutrinos are produced by charged pion decays (either naturally or artificially) and then detected.

 In both cases, we are interested in the probability of survival. Let us consider,
\begin{equation}
|\APnue\rangle =U_{\ee i} | \bar{\nu}_i\rangle
\quad\text{and}\quad
|\Pnum \rangle = U_{\mu i}^* | {\nu}_i\rangle
\end{equation}
When the distances are not too large, as quantified below, only  the third neutrino, the one that has the larger mass difference with the other two states, causes oscillations. 
The other two neutrinos mass states have effectively the same mass. 
In other words, the discussion of Section~\ref{3come2} applies to the real situations, due to the value of the oscillation parameters given in \tablename~\ref{tab:lisi16}.
  Let us recall the amplitudes of oscillations of interest, namely, 
\begin{equation}
\langle \APnue| \APnue, t\rangle =  \sum_{i=1}^3 |U_{\ee i}|^2 e^{-i E_i t}
\approx e^{-i E_1 t}\left[1- |U_{\ee 3}|^2  +  |U_{\ee 3}|^2 e^{i (E_3 -E_1) t}\right]
\end{equation}
where we used the above assumption, along with unitarity. The 
two flavor formulae follow,
\begin{eqnarray}
P_{\apnue \to \apnue }&=&1-4 |U_{\ee 3}^2| \left( 1 -|U_{\ee 3}^2|
\right) \sin^2\varphi\\
P_{\pnum \to \pnum }&=&1-4 |U_{\mu 3}^2| \left( 1 -|U_{\mu 3}^2| \right) 
\sin^2\varphi
\end{eqnarray}
and adopting the standard parameterization
(\eqref{stdd} and \tablename~\ref{tab:lisi16})
we can replace,
\begin{equation}
|U_{\ee 3}|=\sin\theta_{13}
\quad\text{and}\quad|U_{\mu 3}|=\cos\theta_{13} \sin\theta_{23}
\end{equation}
Using these simple formulae,
Daya Bay  and Reno collaborations (reactor experiments)
measured  $\theta_{13}\sim 9^\circ$; 
Super-Kamiokande and MACRO (atmospheric neutrino 
experiments\footnote{Neutrinos are produced at a height of 
some $H=15-\SI{20}{\kilo\meter}$ height in the atmosphere. Those coming from the horizontal direction, travel $L\approx (2 H R_\oplus)^{1/2}=450-\SI{500}{\kilo\meter}$ while the vertical ones travel much less.}) and subsequently K2K and MINOS (accelerator experiments) 
measured $\theta_{23}\sim 45^\circ$.
Due to the fact that $\Delta m^2_{31}\approx 
\SI{2.5e-3}{\electronvolt\squared}$,
oscillations manifested in these experiments, and indeed the phases in the conditions of these experiments are large enough,
\begin{equation} \small
\varphi=1.267\; \frac{\Delta m^2_{31} L}{E_\nu} \approx 1
\quad\text{if}\quad
\begin{cases}
	L = \SI{1}{\kilo\meter},\ E_\nu = \SI{3}{\mega\electronvolt}
	& \small \text{[reactor $\nu$]}\\
	L = \SI{700}{\kilo\meter},\ E_\nu = \SI{2}{\giga\electronvolt}
	& \small \text{[long baseline]}\\
\end{cases}
\end{equation}
The average value of the electron survival probability for solar neutrinos,   
that follows from the above formula including the effect of $\theta_{13}$ is, 
is $\langle P_{\pnue\to \pnue}\rangle =
\langle P_{\apnue\to \apnue}\rangle = 
1-(\sin^2 2\theta_{13})/2
=0.95$: this is too small to account for solar neutrino disappearance. In fact, it is the effect of the splitting between the other two mass states that explains solar neutrinos!
This hypothesis 
was tested by KamLAND reactor experiments at distances of the order of $L\sim \SI{100}{\kilo\meter}$,  
that determined precisely $\Delta m^2_{21}\approx
\SI{7.4e-5}{\electronvolt\squared}$ using the formula,\footnote{This equation follows by averaging to zero the fast oscillation due to $\Delta m^2_{31}$, namely, adopting the following approximation, 
$ P_{\apnue \to \apnue } = 
| \; 
|U_{\ee 1}^2|  + |U_{\ee 2}^2| e^{-i2\; \varphi_{21}}  + |U_{\ee 3}^2| e^{-i2\;  \varphi_{31}}
\, |^2\approx
| \; 
|U_{\ee 1}^2|  + |U_{\ee 2}^2| e^{-i2\; \varphi_{21}}\, |^2  + |U_{\ee 3}^4|
$.}
\begin{equation}
P_{\apnue \to \apnue }=\sin^4\theta_{13} + \cos^4\theta_{13} ( 1  - \sin^2 2\theta_{12} \sin^2\phi) 
\end{equation}
where this time $\phi=1.267\,\Delta m^2_{21}\, L/E_\nu$. 
The value of the last angle,
that describes the composition of $\nu_e$ in terms of the last two mass states, is $\theta_{12}\approx 33^\circ$.
The ambiguity $\theta_{12} 
\leftrightarrow 90^\circ-\theta_{12}$
due to the occurrence of 
$\sin^22\theta_{12}$ 
in the formula is resolved by high energy solar neutrinos, that are affected by 
matter (MSW) effect and will be discussed in the next section.

The above considerations show that, despite their simplicity, 
the two-flavor formulae (or their direct extensions) allow us to understand a great deal of facts concerning the evidences of neutrinos oscillations. See 
\tablename~\ref{tab:lisi16} for an updated summary of the values of the oscillation  parameters.

\subsection{A special case: maximal mixing\label{pipin}}

\subsubsection{Definitions}
The cases of maximal mixing are of special interest. 
We begin the discussion with the two flavor case.
One assumes that a pair of neutrinos, say $\mu$ and $\tau$, is described by,
\begin{equation}
|\Pnum\rangle= \frac{|\nu_1\rangle+|\nu_2\rangle}{\sqrt{2}}
\quad\text{and}\quad
|\Pnut \rangle= \frac{-|\nu_1\rangle+|\nu_2\rangle}{\sqrt{2}}
\end{equation}
namely,
\begin{equation}
|\nu_\ell\rangle=U_{\ell i}^* |\nu_i\rangle
\quad\text{with}\quad 
U=
\frac{1}{\sqrt{2}}
\left(
\begin{array}{cc}
+1 & +1 \\
-1 & +1
\end{array}
\right)
\end{equation}
so that the mixing elements satisfy $|U_{\ell i}|^2=1/2$. This simple case is a good approximation of some situations of physical interest.
The maximal mixing in the $3\times 3$ case is a bit more intricate. It is  based on the assumption that the mixing matrix is,
\begin{equation}
U=
\frac{1}{\sqrt{3}}
\left(
\begin{array}{ccc}
1 & 1 & 1 \\
1 & \omega & \omega^2 \\
1 & \omega^2 & \omega \\
\end{array}
\right)
\quad\text{where}\quad\omega=e^{i\; 2 \; \pi/3}
\label{sopra}
\end{equation}
so that $\omega^3=1$ and all matrix elements satisfy $|U_{\ell i}|^2=1/3$. This case is to be thought as a toy model since it does not correspond to any situation of physical reality but it is anyway useful as test bed, in particular to better understand CP violating phenomena.
In fact, we find 
$J_{\mathrm{CP}}=
\mbox{Im}[ U_{\ee 1} U_{\mu 1}^* U_{\ee 2}^* U_{\mu 2}]=1/(6 \sqrt{3})$, namely, the maximal amount of CP violation, as discussed after 
\eqref{jarlsk}. In fact, it can be checked by direct calculation that,
by adopting a suitable redefinition of phases as in \eqref{ridif}, 
the mixing matrix given in \eqref{sopra} 
is {\em equivalent} to the standard form 
given in \eqref{stdd} with the values 
$\theta_{12}=\theta_{23}=\pi/4$, $\sin\theta_{13}=1/\sqrt{3}$ and $\delta=+ \pi/2$.

\subsubsection{Two flavor case}
Using the oscillation probability of \eqref{Pll'} and the definition of maximal mixing, it is easy to show that,
\begin{equation}
P_{\mu \tau}=P_{ \tau\mu}
=\sin^2 \phi=
1-P_{\mu\mu}=1-P_{\tau\tau}
\quad\text{where}\quad\phi= \frac{\Delta m^2 L}{4 E}
\end{equation}
Thus:
\begin{enumerate}[itemsep=-0.7ex,partopsep=1ex,parsep=1ex]
\item the probabilities of appearance and disappearance are all connected;
\item the averaged values are $\langle P_{\ell \ell'} \rangle  =1/2$ for $\ell,\ell'=\mu,\tau$;
\item it is possible that the original type of neutrino fully disappears, when $\phi=\pi/2$.
\end{enumerate}

\subsubsection{Three flavor case} 
We recall that (by definition) 
all matrix elements satisfy $|U_{\ell i}|^2=1/3$; moreover, it is interesting to impose also the additional condition,\begin{equation}\phi=\phi_{21}=\phi_{32}=\phi_{31}/2\end{equation} 
or equivalently
$\Delta m^2=m_2^2-m_1^2= m_3^2-m_2^2=(m_3^2-m_1^2)/2$. This is a rather particular case that is not 
realized in nature but has some didactic interest for what concerns CP (or better T) violation.
We have,
\begin{equation}
P_{\ell\ell}=\frac{(1+2 \cos\phi)^2}{9}\quad\text{and}\quad
P_{\ell\ell'}=\frac{1-P_{\ell\ell}}{2} + \frac{2\; \xi}{3 \sqrt{3}}\sin\phi \,
(1-\cos\phi)
\end{equation}
where $\ell\neq \ell'$. The last term describes CP-violation and the sign $\xi$ is,
\begin{equation}
\xi=-1\quad\text{for}\quad\ell\ell'=\ee\mu,\mu\tau,\tau
e\quad\text{and}\quad\xi=+1
\quad\text{for}\quad \ell\ell'=\mu \ee,\tau\mu,\ee \tau 
\end{equation}
It is noticeable that when $\phi=2\pi/3$ we have,
\begin{equation}
\left(
\begin{array}{ccc}
P_{\ee\ee } & P_{\ee \mu } & P_{\ee \tau } \\
P_{\mu \ee } & P_{\mu \mu } & P_{\mu \tau } \\
P_{\tau \ee } & P_{\tau \mu } & P_{\tau \tau } 
\end{array}
\right)=
\left(
\begin{array}{ccc}
0& 0& 1 \\
1 & 0 & 0 \\
0 & 1 & 0 
\end{array}
\right)
\end{equation}
therefore, the chosen combination of mixing angles and of phases of oscillations produces the maximum possible deviation 
from the case when the oscillation probability is symmetric, $P_{\ell \ell'}=P_{\ell' \ell}$.
In other words, this is the maximum amount of T violation, that is, the   maximum difference  between the cases when $\nu_\ell\to \nu_{\ell'}$ and $\nu_{\ell'}\to \nu_{\ell}$ with $\ell\neq \ell'$.

\subsection{A 2 flavor case of oscillations with neutrino interactions}
When we have muon and tau neutrinos of high energies
we should  consider the fact  that they oscillate, approximatively  with maximal mixing $\theta_{\mbox{\tiny max}}=45^\circ$, and, also, they interact (differently) with matter. 
(This is relevant, for example, for the neutrinos produced in the center of the Sun due to the annihilation or decay of hypothetical dark matter particles \cite{aless}.)
After charged current interactions, neutrinos convert into the corresponding charged leptons that are eventually slowed down and absorbed. This can be formally described including an imaginary term in the hamiltonian, that  accounts for their disappearance at the original energy.\footnote{Considering again the analogy with optics, this term is, in principle, similar to the complex term that is introduced in the refractive index definition, $n\equiv n_1+in_2$, when a light signal propagates in a dispersive medium; $n_2$ accounts for the possible absorption of the signal. There is, also, a stringent analogy with the kaon system, where the complex term in the hamiltonian allows the disappearance of the particle, e.g.\ 
for a decay process --- see \eqref{bambulo}. Note that we consider  charged current interactions; the inclusion of neutral current interactions is more complicated, because a term that describes the appearance of a neutrino in the final state should be introduced.}
  Thus,
\begin{equation}
H=U(\theta_{\mathrm{max}}) 
\left(
\begin{array}{cc}
-\kappa & 0 \\
0 & \kappa 
\end{array}
\right) U^{-1}(\theta_{\mathrm{max}})-  \frac{i}{2 } 
\left(
\begin{array}{cc}
\Gamma_\mu & 0 \\
0 & \Gamma_\tau
\end{array}
\right)
\end{equation}
with,
\begin{equation}
\kappa=\frac{\Delta m^2}{4 E}
\end{equation}
and where $\Gamma_{\mu,\tau}$ are the interaction rates with matter.
Setting,
\begin{equation}
\Gamma=\frac{\Gamma_\mu+\Gamma_\tau}{2}\qquad
\gamma=\frac{\Gamma_\mu-\Gamma_\tau}{4}
\end{equation}
we get the convenient form of the hamiltonian,
\begin{equation}
H=
\left(
\begin{array}{cc}
-i\gamma & \kappa \\
\kappa & +i \gamma
\end{array}
\right)-  \frac{i \Gamma}{2 } 
\left(
\begin{array}{cc}
1 & 0 \\
0 & 1
\end{array}
\right)
\end{equation}
In the case of constant density, we can easily calculate the evolutor $\mathcal{U}=\exp(-i H r)$, where $r$ is the distance
travelled by the neutrino, proceeding as follows. The second part,  proportional to the unit matrix, is trivially exponentiated. For the rest, let us begin by  assuming the parameter $\gamma$ as imaginary.
The resulting matrix is hermitian and can be easily diagonalized.  Its eigenvalues $\lambda_{\pm}$ and mixing angles are,
\begin{equation}\lambda_{\pm}=\pm \alpha\ , \ 
\cos2\theta=\frac{i\gamma}{\alpha}, \ 
\sin2\theta=\frac{\kappa}{\alpha}
\quad\text{where}\quad\alpha=\sqrt{\kappa^2-\gamma^2}
\end{equation}
At this point, it is easy to exponentiate the full hamiltonian. The result is,
\begin{equation}
\mathcal{U}=
e^{-\Gamma r/2} 
\left(
\begin{array}{cc}
\cos\varphi -\frac{\gamma}{\alpha} \sin\varphi & -i \frac{\kappa}{\alpha} \sin\varphi  \\
-i \frac{\kappa}{\alpha} \sin\varphi  & \cos\varphi +\frac{\gamma}{\alpha} \sin\varphi  
\end{array}
\right)
\quad\text{with}\quad\varphi=\alpha r
\end{equation}
Finally, we use analytical continuation to come back to the 
case when $\gamma$ is real, i.e., the case in which we are actually interested. We find the explicit formula for $\Pnum\to \Pnut$ conversion,
\begin{equation}
P_{\pnum\to \pnut}=
\frac{e^{-r\, \Gamma }}{1-\varepsilon^2}\ \sin^2\!\left(\displaystyle\frac{\Delta m^2\; r}{4\;  E_\nu} \sqrt{1-\varepsilon^2} \right)
\quad\text{with}\quad\varepsilon=\frac{\Gamma_\mu-\Gamma_\tau}{\Delta m^2/E_\nu}
\end{equation}
It is possible to find in Ref.~\cite{aless} the numerical values of the interaction rates $\Gamma$ and more discussion of this and similar situations.

\subsection{Electron survival probability in three flavors}
The electron survival probability in the full three flavor regime contains an imprint of the type of mass hierarchy. 
The goal of the future experiment JUNO is just 
to probe this effect, see Ref.~\cite{capo2} for a very detailed discussion.
Here we derive the oscillation probability and demonstrate this effect. The formal manipulations are slightly more demanding than the previous ones and for this reason we use a brief and special notation.
The oscillation amplitude for $\APnue\to \APnue$ in 3 flavors is,
\begin{equation}
\begin{aligned}
\mathcal{U}_3 &= e^{-iE_3t}\left[C_{12} C_{13} \,e^{i f_{31}} + S_{12} C_{13}\, e^{i f_{32}}   + S_{13}\right] \\ &=e^{-iE_3t}\left[C_{13}\; \mathcal{U}_2 + S_{13} \right]
\end{aligned}
\end{equation}
where the corresponding expression in 2 flavor (or, if $\theta_{13}=0$) is,
\begin{equation}
\mathcal{U}_2=C_{12}\, e^{i f_{31}} + S_{12} \, e^{i f_{32}}  
\end{equation}
and where we simplified the notation by using the standard 
parameterization and setting,
\begin{equation}
C_{ij}=\cos^2\theta_{ij}, \ 
S_{ij}=\sin^2\theta_{ij}\quad\text{and}\quad f_{ij}=( E_i-E_j ) t\approx \frac{m_i^2-m_j^2}{2 E} L
\end{equation}

In this manner we get,
\begin{equation}
\begin{aligned}
P_3&= C_{13}^2 |\mathcal{U}_2|^2 + S_{13}^2 + 2\, C_{13}  S_{13} \,\mbox{Re}[ \mathcal{U}_2]\\
&= C_{13}^2 P_2 + S_{13}^2 + 2\, C_{13}  S_{13}\,{\mbox{Re}[
\mathcal{U}_2]}
\end{aligned}
\end{equation}
where we set by definition,
\begin{equation}
P_2=|\mathcal{U}_2|^2\quad\text{and}\quad P_3=|\mathcal{U}_3|^2
\end{equation}
The piece in which we are mostly interested is,
\begin{equation}
\mbox{Re}[ \mathcal{U}_2]=C_{12} \cos( f_{31})  + S_{12} \cos( f_{32})  
\end{equation}
The two large phases of oscillations, that is  $ f_{31}$ and 
 $f_{32}$, differ but only by a small amount. This is given by
the phase of the `solar' (or better KamLAND's) neutrino oscillations, 
\begin{equation}
f_{31}-f_{32}=f_{21}\equiv f 
\end{equation}
It is convenient to introduce a 
special choice of the {\em common part} of the 
large phase factor $\Phi$, such that 
the slow and the fast oscillations get neatly separated.
This choice is \cite{nuno}, 
\begin{equation}
 f_{31}=  \alpha\,\Phi + S_{12}\, f\quad\text{and}\quad
 f_{32}=  \alpha\,\Phi -  C_{12}\, f
\end{equation}
where $\alpha=\pm 1$ corresponds to normal/inverse hierarchy, respectively. 
In fact, we have, 
\begin{equation}
	\begin{aligned}
	\mbox{Re}\left[ \mathcal{U}_2\right]  =& C_{12} \cos\left( \Phi + \alpha S_{12} \, f\right)  + S_{12} \cos\left(  \Phi -  \alpha C_{12} \, f\right)=\\
	 =& \cos  \Phi\ \left[\ C_{12}  \cos( S_{12} \, f )+S_{12}  \cos( C_{12} \, f) \right]+\\&-\alpha
\sin \Phi\;\left[\ C_{12}  \sin\left(  S_{12} \, f \right)- S_{12}  \sin\left( C_{12} \, f \right)\right]
\end{aligned}
\end{equation}
This expression is consistent with the following definition of $\varphi$,
\begin{equation}
\begin{array}{l}
 \cos\varphi\ |\mathcal{U}_2| \equiv   C_{12}  \cos\left( S_{12} \, f \right)+S_{12}  \cos\left( C_{12} \, f \right) \\
 \sin\varphi\ |\mathcal{U}_2|  \equiv C_{12}  \sin\left( S_{12} \, f \right)- S_{12}  \sin\left( C_{12} \, f \right)
\end{array}
\end{equation}
This special definition allows us to recast  the term of interest in a 
very neat manner,
\begin{equation}
{\mbox{Re}\left[ \mathcal{U}_2\right]} = {|\mathcal{U}_2|}\times \cos\left(  \Phi+\alpha \varphi  \right)
\end{equation}
In this manner, we have separated the effects of the 
fast and the slow oscillations in the terms ${\mbox{Re}[ \mathcal{U}_2]}$: 
the phase $\Phi$ is large and produces the fast oscillations, whereas 
the phase $\varphi$ is small, and varies  only with  
the solar phase $f$, just as the term 
$|\mathcal{U}_2|=\sqrt{P_2}$.

Summarizing, our expression for the electron antineutrino survival probability is,\footnote{In order to compare with  
Ref.~\cite{capo}, the following replacements are needed 
$\Phi\to 2 \Delta_{\ee\ee}$, $f\to 2 \delta$, $C_{ij}\to c_{ij}^2$, $S_{ij}\to s_{ij}^2$.}
\begin{equation}
\begin{aligned}
P_3 &= C_{13}^2 |\mathcal{U}_2|^2 + S_{13}^2 + 2\,C_{13}  S_{13} \,\mbox{Re}[ \mathcal{U}_2] \\
& = C_{13}^2 P_2 + S_{13}^2 + 2\, C_{13}  S_{13} \sqrt{P_2}   
\cos( \Phi+\alpha \varphi  ) 
\end{aligned}
\end{equation}
The last terms shows that in the region of 
$\Phi$-driven  fast  oscillations, 
 the prediction of the 
 two mass hierarchies are slightly `out of tune' between them; 
 the detection of this effect requires a very good energy resolution.
 Moreover, the overall factor $S_{13}=\sin^2\theta_{13}$ suppresses 
 this effect, which contributes to 
 make the task of JUNO demanding.

 \subsection{The limit of fully averaged oscillations\label{sec:ave}}
 
 An important limiting case is when we have vacuum oscillations and 
 the phases of oscillation are very large. In this case, the oscillatory (cosinusoidal) terms get averaged to zero in any practical situation: e.g., when the production or the detection points are not perfectly known, when there is a distribution of initial energies and this is not perfectly measured in the detector, {\em etc.} The case is known as {\em averaged oscillations} and sometimes also as Gribov-Pontecorvo regime \cite{gp}. Strictly speaking, one cannot speak of `oscillations' in proper sense. However, this case is of wide physical interest: for instance, it is important for low energy solar neutrinos (when `matter effects' are negligible) or also for high energy neutrino from cosmic sources, as those that have been plausibly measured with IceCube. 
 
 The formulae for averaged oscillations are very simple and were given in Ref.~\cite{bilenky},
 \begin{equation}
 P_{\nu_\ell\to \nu_{\ell'}}= P_{\bar{\nu}_\ell\to {\bar \nu}_{\ell'}} =
 \sum_{i=1}^{3} | U_{\ell i} |^2 
  | U_{\ell' i} |^2 
  \end{equation} 
Of course the averaged probabilities are subject to unitarity constraints as any other set of oscillation probabilities, e.g.: 
$\sum_{\ell'=\ee,\mu,\tau}  P_{\nu_\ell\to \nu_{\ell'}}=1$, or just as the 
elements of the mixing matrix  squared $| U_{\ell i} |^2$.  
However, 
owing to the symmetry  in the exchange $\ell\leftrightarrow \ell'$, 
the averaged probabilities 
depend only upon three parameters, and not four as 
 $| U_{\ell i} |^2$. The 
 probabilities can be arranged 
 in a symmetric matrix,
\begin{equation} \label{puppalo}
\mathcal{P}=\left( \begin{array}{ccc}
\frac{1}{3}+2 P_0 & \frac{1}{3}-P_0+P_1 & \frac{1}{3}-P_0-P_1 \\
 & \frac{1}{3}+\frac{P_0}{2}-P_1+P_2 & \frac{1}{3}+\frac{P_0}{2}-P_2 \\
 &  & \frac{1}{3}+\frac{P_0}{2}+P_1+P_2 \end{array} \right) 
 \end{equation} 
 where we introduced 3 new parameters, that  
 can be expressed in terms of the usual ones as follows,
\begin{eqnarray}
	P_0 &=& \frac{1}{2} \left\{ (1-\epsilon)^2 \left[ 1- \frac{\sin^2 2\theta_{12}}{2} \right] + \epsilon^2-\frac{1}{3} \right\}\\
	P_1 &=& \frac{1-\epsilon}{2} \left\{ \gamma \cos 2\theta_{12}+\beta \frac{1-3 \epsilon}{2} \right\}\\
	P_2 &=&\frac{1}{2} \left\{ \gamma^2+\frac{3}{4} \beta^2 (1-\epsilon)^2 \right\} 
\end{eqnarray}
and the small quantities $\alpha,\beta,\gamma,\epsilon$ (only the last one is known to date) are,
\begin{equation}
	\begin{cases}
		\epsilon=\sin^2 \theta_{13}\\
		\alpha=\sin \theta_{13} \cos \delta \sin 2 \theta_{12} \sin 2 \theta_{23}\\
		\beta=\cos 2 \theta_{23}\\ \displaystyle
		\gamma=\alpha-\frac{\beta}{2} \cos 2\theta_{12} (1+\epsilon)
	\end{cases}
\end{equation}
It is interesting that the averaged 
oscillation probabilities depend upon the CP-violating phase and also upon the departure of $\theta_{23}$ from $45^\circ$, through $\cos2\theta_{23}$. However, 
 there are only 3 free parameters, thus one combination of the usual parameters is degenerate. 
 More precisely
$P_0$ is well known; $P_1$ contains most uncertainties; 
$P_2$ is positive and very small. 
The above results are taken from Ref.~\cite{dgdg}, where further discussion, references, applications and numerical results can be found.

	\chapter{Neutrino oscillations in matter\label{ch:mo1}}
\minitoc

In this section, we introduce and discuss 
the matter term of Wolfenstein. Then we present a 
few analytical solutions of the equation of propagation, including the one discussed by Mikheyev and Smirnov. The aim is to  illustrate the characteristic features of oscillations due to the matter term.  Finally, we consider some physical  applications.

\section{The matter (or Wolfenstein) term}
\subsection{Origin}

Vacuum oscillations are caused by the relative {\em phases} that the components of a neutrino with given flavor obtain in the course of the time $t$. This is a typical undulatory or quantum phenomenon, as it is evident from the  connection between energies $E_i$ of each component 
and the frequencies of oscillation $\nu_i$  of each component: $E_i=h\, \nu_i$. Indeed, as we have shown in \sectionname~ \ref{sec:hhh},  
vacuum oscillations can be described by an effective hamiltonian.
In this section we discuss a new point:  electron neutrinos that propagate in a material medium receive a special, additional phase of scattering, as first remarked by Wolfenstein. We can think to this new phase as an additional term in the hamiltonian of propagation, or also as a flavor-specific refraction term.

 Consider the scattering amplitude of 
 neutrinos onto electrons at rest 
 and more specifically consider the case when the neutrino momentum is unchanged (=forward scattering),  
\begin{equation}
\nu_\ell(\vec{p}\,)\ \Pelectron \to \nu_\ell(\vec{p}\,)\ \Pelectron
\end{equation}
This amplitude of scattering 
is maximum when $\ell=\Pe$, due to charged current interactions.
Thus, electronic neutrinos receive a special phase, that contributes to the effective hamiltonian and eventually also to oscillations.  
Also neutral currents give an additional phase due to the forward scattering of neutrinos with any particle  $X$ of the medium
 (excepting neutrinos themselves) 
$\nu_\ell(\vec{p}\,) X \to \nu_\ell(\vec{p}\,) X$:  
however this contribution is the same for all types of neutrinos and therefore it contributes only to an irrelevant overall phase factor for all neutrinos.\footnote{Two remarks are in order:
1.~If we have also new (sterile) neutrinos that do not interact, these  
phases become relevant. 
2.~If the density of neutrinos is very high, the scattering of neutrinos onto neutrinos can cause additional effects; however their complete description still eludes our understanding.} 

The most important case when this new phase is  relevant
is when the matter, where neutrinos propagate,  contains electrons at rest or almost at rest. We will focus the subsequent discussion on this case. 
However, in principle, one can have also effects due to polarization, to collective motions, etc.

\subsection{Formal derivation of the matter term}

We are interested to describe 
the contribution of Fermi interactions to the forward scattering of 
an electron neutrino. The idea is to 
evaluate the amplitude of forward scattering in terms of the matrix element of the hamiltonian.
Thus, we need a formal  description of the hamiltonian, of the 
state of the neutrino and of the state of the medium.  
We show how to derive, under reasonable assumptions, the standard result (namely, the Wolfenstein or matter term).
 
\subsubsection{Fermi hamiltonian for neutrino-electron interaction}
The hamiltonian density of the Fermi interactions contains the following interaction between
electrons and electron neutrinos,
\begin{equation}
\mathcal{H}_{\nu_{\ee\ee}}^{\mathrm{cc}}=+\frac{G_\mathrm{F}}{\sqrt{2}}\ 
\bar{\nu}_{\ee}\gamma_a (1-\gamma_5) \ee \ \bar{\ee} \gamma^a (1- \gamma_5) \nu_{\ee}
 \end{equation}
where $G_{\mathrm{F}}> 0$ whose sign is predicted by the standard electroweak model. This implies the following hamiltonian,
written in terms of relativistic quantum fields,
\begin{equation}
\mathbf{H}_{\nu_{\ee\ee}}^{\mathrm{cc}}=\sqrt{2}\, G_{\mathrm{F}} \int \mathrm{d}^3 x\ 
\bar{\nu}_{\mathrm{eL}}\gamma_a \nu_{\mathrm{eL}} \ \bar{\ee} \gamma^a (1- \gamma_5) \ee
 \end{equation}
 where:
\begin{enumerate}[itemsep=-0.7ex,partopsep=1ex,parsep=1ex]
 \item we have integrated over the space coordinates to pass from the hamiltonian density to the hamiltonian operator;
 \item we have introduced the chiral projector $P_{\mathrm{L}}=(1-\gamma_5)/2$ and defined $\nu_{\ee\mathrm{L}}=P_{\mathrm{L}}\nu_{\ee}$;
 \item we have applied the Fierz transformation described in 
 \sectionname~\ref{frz}.
\end{enumerate}
\paragraph{Matrix element for forward scattering}
Let us consider the previous hamiltonian at 
a given time, say, $t=0$. 
 Then, we consider its
matrix element  between the {\em same} initial and final state $|s\rangle$, namely,
 \begin{equation}
 \langle s| 
\mathbf{H}_{\nu_{\ee\ee}}^{\mathrm{cc}}(t=0) | s \rangle
\quad\text{with}\quad
| s \rangle=|\nu_{\ee},{\ee}_1,{\ee}_2,{\ee}_3\dots\rangle
 \end{equation}
The state $| s \rangle$  describes a single electron neutrino, localized around $\vec{x}=\vec{x}_{0}$, and 
a large number of electrons almost at rest, localized around the positions $\vec{x}=\vec{x}_{i}$ with $i=1,2,3,\dots $ 
(we have in mind a neutrino propagating in 
the matter e.g.\ of  the Sun or of the Earth). 
The only operator containing  the electron fields 
that has large matrix elements is the time component
of the vectorial part $\bar{\ee} \gamma_0 \ee={\ee}^\dagger \ee$,
 that is 
the electron density operator.\footnote{The other operators are 
${\ee}^\dagger \vec{\alpha}\, \hat{\ee}$, that is proportional to the current and thus is suppressed with $\beta=v/c$,  
${\ee}^\dagger \vec{\alpha}\gamma_5 {\ee}$ that is proportional to the polarization, and that we assume to be small, and finally 
${\ee}^\dagger\gamma_5 \ee$ that is even smaller.}   
 Thus, the relevant part of the matrix element is, 
\begin{equation}
 \langle s| 
\mathbf{H}_{\nu_{\ee\ee}}^{\mathrm{cc}}(t=0)
| s \rangle= \sqrt{2}\, G_{\mathrm{F}} \int \mathrm{d}^3 x\ 
 \langle s| \ 
\mathbf{n}_{\ee}(\vec{x})\ \mathbf{n}_{{\nu}_{\ee\mathrm{L}}}(\vec{x})
\ | s \rangle \label{cacabum}
 \end{equation}
 where we indicate 
the density operators of the electron and of the neutrino with 
$\mathbf{n}_{\ee}(\vec{x})={\ee}^\dagger \! (0,\vec{x}){\ee}(0,\vec{x}) $ and 
$\mathbf{n}_{{\nu}_{\ee\mathrm{L}}}(\vec{x})={{\nu}_{\ee\mathrm{L}}}^\dagger \! (0,\vec{x}) {\nu}_{\ee\mathrm{L}} (0,\vec{x}) $. 
We factorize the state $|s\rangle$ and 
consider separately the matrix elements of the two operators.   We have,
\begin{eqnarray}
	\langle {\nu}_{\ee\mathrm{L}} | \mathbf{n}_{{\nu}_{\ee\mathrm{L}}}(\vec{x})  | {\nu}_{\ee\mathrm{L}} \rangle &=& |\Psi_{{\nu}_{\ee\mathrm{L}}}(\vec{x},\vec{x}_{0})|^2\\
	\langle {\ee}_1, {\ee}_2,{\ee}_3\dots| \mathbf{n}_{\ee}(\vec{x})  | {\ee}_1,{\ee}_2, {\ee}_3\dots \rangle &=&  \sum_{i} |\Psi_e(\vec{x},\vec{x}_{i})|^2
\end{eqnarray}
where we have introduced the wave-functions of the neutrino and of the electrons, 
emphasizing the position of each particle. We supposed for simplicity that the electrons are distributed similarly, but around different positions.

In the case of interest, the individual electrons are distributed in small cells, e.g., atomic sizes of $\sigma_{\ee} \sim \SI{1e-8}{\centi\meter}$, whereas the 
neutrino is distributed in a much larger scale $\sigma_\nu$. Another important quantity is the {\em macroscopic}  electron density
that we denote by $n_{\ee}(\vec{x})$. We suppose that the scale  of variation of the macroscopic electron density, $\Delta r_e$  is much larger  than the size of the neutrino wave-function. For the Sun, e.g., this is $\Delta r_{\ee}\sim R_\odot/10$ and also in the Earth it is typically km size or more.
Summarizing, we suppose that,
\begin{equation}
\Delta r_{\ee}\gg \sigma_\nu \gg \sigma_{\ee}
\end{equation} 

In the above assumptions
we can evaluate the integral in \eqref{cacabum}. 
We divide the region where the neutrino wave-function is non-zero in many cubes 
$a=1,2,3\dots$. Each cube has size  $\Delta x$ such that $\sigma_\nu\gg \Delta x \gg \sigma_{\ee}$ and 
 volume $\Delta x^3$. In each cube there are a lot of electrons, while 
the neutrino density is almost constant with value $ |\Psi_{{\nu}_{\ee\mathrm{L}}}(\vec{x}_a,\vec{x}_{0})|^2$.
If we integrate the electron density 
in the cube with label $a$, we have $\int_a \mathrm{d}^3 x\sum_i |\Psi_{\ee}(\vec{x},\vec{x}_{i})|^2=N_{\ee}(\vec{x}_a)\gg 1$,  namely the number of electrons inside the cube. This is almost the same for all $a$, because we suppose that $\Delta r_{\ee}\gg \Delta x$.
We can then approximate: $N_{\ee}(\vec{x}_a)\approx n_{\ee}(\vec{x}_{0}) \Delta x^3$,
where we indicate the position $\vec{x}_{0}$ of the neutrino since we are interested  in the value of the electronic density where the neutrino is.  
Thus,  the integral is,
\begin{equation}
	I(\vec{x}_0) = \int \mathrm{d}^3 x \sum_{i} |\Psi_{\ee}(\vec{x},\vec{x}_{i})|^2\times|\Psi_{{\nu}_{\ee\mathrm{L}}}(\vec{x},\vec{x}_{0})|^2 
\end{equation}
which can be solved as,
\begin{equation}
\begin{aligned}
	I(\vec{x}_0) &\approx \sum_a   N_{\ee}(\vec{x}_a)
|\Psi_{{\nu}_{\ee\mathrm{L}}}(\vec{x}_a,\vec{x}_{0})|^2\approx\\
&\approx n_{\ee}(\vec{x}_{0}) 
  \sum_a   
\Delta x^3\ |\Psi_{{\nu}_{\ee\mathrm{L}}}(\vec{x}_a,\vec{x}_{0})|^2
\approx
  \\ &\approx  n_{\ee}(\vec{x}_{0})\times \int \mathrm{d}^3 x\  |\Psi_{{\nu}_{\ee\mathrm{L}}}(\vec{x},\vec{x}_{0})|^2
  =n_{\ee}(\vec{x}_{0})
\end{aligned}
\end{equation}
Therefore, the matrix element of interest, for a single neutrino 
located around $\vec{x}=\vec{x}_0$, is just,
\begin{equation}
\langle \mathbf{H}_{\nu_{\ee\ee}}^{\mathrm{cc}} \rangle 
=\sqrt{2} G_{\mathrm{F}}\ n_{\ee}(\vec{x}_{0})
 \end{equation}
 Following the above derivation, it is easy to see that an electron antineutrino has the same matrix element but with the opposite sign, due to the fact that 
 the operator $\mathbf{n}_{\nu_{\ee\mathrm{L}}}$ counts the net number of electron neutrinos: therefore, an antineutrino counts $-1$.
 
\subsection{Hamiltonians of propagation including the matter term}

 Thus, we should add to the $3\times 3$ vacuum hamiltonians, that describe free neutrino/antineutrino propagation, also the matter hamitonian
 that acts on flavor states, 
 \begin{equation}\label{mss}
H_{\mathrm{MSW}} =\pm  \sqrt{2}\, G_{\mathrm{F}} n_{\ee}(\vec{x})  \left(\begin{array}{ccc}
1 & 0 & 0 \\
0 & 0 & 0 \\
0 & 0 & 0 
\end{array}
 \right)
 \quad
 \text{where}
 \quad
 \begin{cases}
 	+ & \text{for $\nu$}\\
 	- & \text{for $\bar{\nu}$}
 \end{cases}
 \end{equation}
 where $\vec{x}$ is the position of the neutrino and 
 $n_{\ee}(\vec{x}) $ the number density of electrons. 
 To be sure, note that this term has the right dimension, namely ``energy'', since 
 the Fermi constant has $1/\text{energy}^2$ and the electron density is $1/\text{volume}=\text{energy}^3$
 in natural units. 
 Note that we consider a {\em linear} contribution in the Fermi constant (whereas cross sections or decay widths depend upon $G_{\mathrm{F}}^2$). This fact shows that we are considering amplitudes and/or wave-functions rather than probabilities and witnesses the quantum nature of the phenomenon.  A last remark is in order: 
 this new term can be regarded as a contribution to the 
 refraction of the electronic neutrino wave.
 
Summarizing, the flavor of 
ultrarelativistic neutrinos  of given energy $E$ that propagate 
in a medium with electronic density  $n_{\ee}(x)$,
changes according to  
neutrino and antineutrino hamiltonians that have one part due to vacuum oscillations and another one due to the Wolfenstein (or matter) term.  For neutrinos, this is given by, 
\begin{equation}\label{por}
H_\nu=U^*\,\mathrm{diag}(k)\,
U^t + V(x)\,
\mathrm{diag}(1,0,0)
\end{equation}
where the symbols $U^*$ and $U^t$ indicate the complex conjugate and the transpose of the mixing matrix, respectively, and where,
\begin{eqnarray}
	k_i&=&m_i^2/(2 E)\\
	V(x)&\equiv& \sqrt{2} G_{\mathrm{F}} n_{\ee}(x)
\end{eqnarray}
For antineutrinos we need  to replace,
\begin{equation}
U\to U^* \quad\text{and}\quad V\to -V
\end{equation}
We have adopted a shorthand $k$ for the quantity that describes the effect that are purely due to neutrino masses (i.e., of vacuum oscillations). 

Before discussing some specific example and some solutions,
we note that these hamiltonians contain the energy $E=p c$ (and/or the momentum) of the neutrino and its position $x$, too. This description might seem to be in contradiction with the Heisenberg indetermination principle, i.e., with some basic facts of  quantum mechanics. However, as we have discussed above, neither of these observable is exactly measured. The idea is that, in actual circumstances, neutrinos travel in wave packets, which are supposed to be non dispersive. 

\subsection{Remarks\label{sez:rmk}}

\subsubsection{Numerical value}
The numerical value of the matter term and of the vacuum term, given in the same units of inverse meter, is,
\begin{eqnarray}
	V&=&\sqrt{2}\, G_{\mathrm{F}} n_{\ee}= \frac{3.868\times 10^{-7}}{\mathrm{m}}  \times  \frac{n_{\ee}}{\mathrm{mol}/\mathrm{cm}^3}\\
k&=&\frac{\Delta m^2}{2 E}= \frac{2.533}{\mathrm{m}}
\times \frac{\Delta m^2}{\mathrm{eV}^2} \times \frac{\mathrm{MeV}}{E}
\end{eqnarray}

Let us remark that, when the matter term is small in comparison to the other one, we reduce to the case of vacuum oscillations. In order to clarify when this happens, let us examine the ratio of these two terms. We find,
\begin{equation}
\frac{V}{k}\equiv 
\frac{\sqrt{2}\; G_{\mathrm{F}} n_{\ee} }{\Delta m^2/(2 E)}\approx 
\left( \frac{\rho\ Y_{\ee}}{100\,\mathrm{mol}/\mathrm{cm}^3} \right)
\left( \frac{8\times 10^{-5}\,\mathrm{eV}^2}{\Delta m^2} \right)
\left( \frac{E}{5\,\mathrm{MeV}} \right)
\label{100}
\end{equation}
where $\rho$ is the mass density and $Y_{\ee}$ is the fraction of electrons and/or of protons (since ordinary matter is neutral). The 
values of the parameters are those needed to interpret solar neutrino data: 1.~the central density of the Sun is expected to be about $150\,\mathrm{g}/\mathrm{cm}^3$, with some fraction of Helium;
2.~the numerical value of $\Delta m^2$ 
is close to the one given in \tablename~\ref{tab:lisi16}; 3.~the energy thresholds of Super-Kamiokande and SNO are close to \SI{5}{\mega\electronvolt}.  The  comparison in \eqref{100}
shows that at low neutrino energies $E\ll \SI{5}{\mega\electronvolt}$, the matter (Wolfenstein) term has a negligible effect whereas at high energy it is relevant.

\subsubsection{A comment on CP and T violation}
Let us compare the dependence of certain 
probabilities of conversion 
in  various cases,
\begin{equation}
\begin{aligned}
P_{\pnum\to \pnue}= & F(U, V;E_\nu,L)\\
P_{\apnum\to \apnue}= & F(U^*, -V;E_\nu,L)\\
P_{\pnue\to \pnum}= & F(U^*, V;E_\nu,L)\\
P_{\apnue\to \apnum}= & F(U, -V;E_\nu,L)
\end{aligned}
\end{equation}
where $F$ is a certain function. When we exchange the initial and final state, we consider T-reversal which implies $U\neq U^*$. 
Furthermore, we see that there are also 
C and CP violating effects when we exchange particles and antiparticles, that should be  attributed to the exchange $V\to -V$.  
In fact, 
ordinary matter contains electrons, and therefore, it induces C and CP violating effects  in the neutrino propagation that are of  `environmental nature', 
on top of the CP (or T) violating effects of fundamental nature, due to the fact that the leptonic mixing matrix is complex. 

\begin{figure}
\begin{center}
\includegraphics[width=9cm,bb=0 0 800 542]{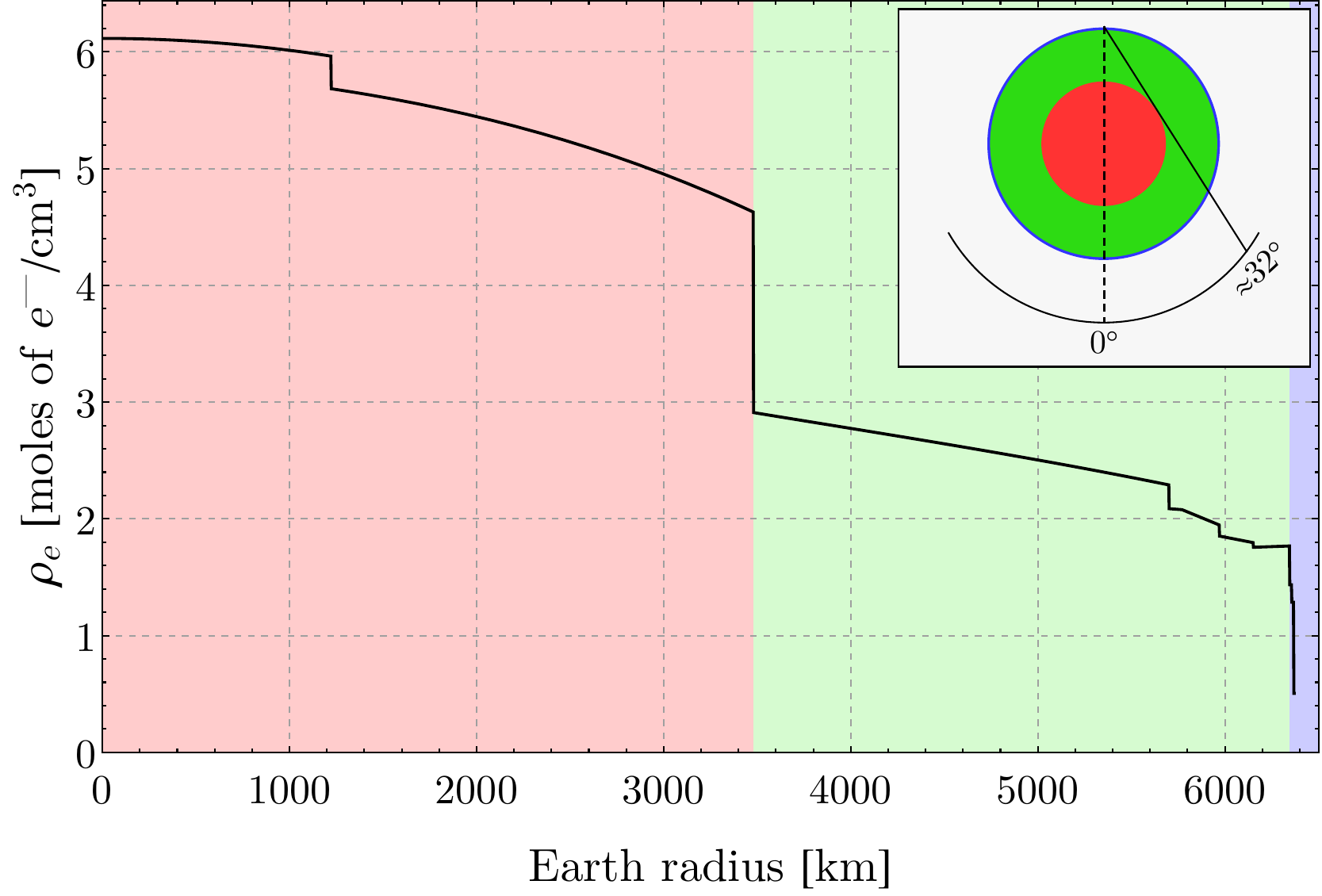}
\caption{Density profile of the Earth according to the
PREM model \protect\cite{prem}.
Different colors correspond to different Earth layers: 1.~inner ad outer core in red; 2.~mantle in green; and 3.~crust in blue.
Visualizing the shell structure in scale, we notice that a neutrino
can cross the core only if its angle is smaller than
$\approx 32^\circ$.
\label{figprem}}
\end{center}
\end{figure}


\subsubsection{Amplitude of three flavor neutrino oscillations}

In order to obtain a general expression for the oscillation amplitude in matter in the three flavor case, we can consider the mixing matrix $U= 
R_{23}\ \Delta \ R_{13}\ \Delta^*\ R_{12}$, given in \eqref{mixingmatrix}, such that the propagation hamiltonian in \eqref{por} is,
\begin{equation}\label{acci}
H_\nu=R_{23} \ \Delta \ \tilde{H} \ \Delta^* \ R_{23}^{\mathrm{t}}
\end{equation}
with,
\begin{equation}\label{Htilde}
 	\tilde{H}=R_{13}\ R_{12}\ \mathrm{diag}(k) \ R_{12}^{\mathrm{t}}\ R_{13}^{\mathrm{t}} + V(x)\mathrm{diag(1,0,0)}
 \end{equation}
 where $\tilde{H}$ is independent\footnote{The definition of $\tilde{H}$ in \eqref{Htilde} obtains from $\left[\Delta,\ R_{12}\right]=0$ and $\left[V,\ R_{23}\right]=\left[V,\ \Delta \right]=0$.} on $\theta_{23}$ and $\delta$.
The formal expression of the amplitude of the three flavor neutrino oscillations when
 the matter effect is included, 
 can be hence written as\footnote{
If we divide the interval $[0,t]$ in $n$ equal intervals of size $\mathrm{d}t=t/n$, and indicate with
$t_j=j \mathrm{d}t$ the discrete values of the time, where $j$ is an integer with $n\ge j>0$, we can define the time-ordered product as,
$$
\mbox{Texp}\left[ -i \int_0^t \mathrm{d}\tau H(\tau) \right] = \lim_{n \to \infty }
e^{-i\, \mathrm{d}t\, H(t_n) }
e^{-i\, \mathrm{d}t\, H(t_{n-1}) } \dots
e^{-i\, \mathrm{d}t\, H(t_1) }
$$}, 
 
\begin{equation}
		\mathcal{U}=\mathrm{Texp}\left[-i\int \mathrm{d} t\; {H}_\nu(t)\right] \equiv R_{23} \Delta \left(
									\begin{array}{ccc}
									u_{11} & u_{12} & u_{13} \\
									u_{21} & u_{22} & u_{23} \\
									u_{31} & u_{32} & u_{33} 
									\end{array}
											\right) \Delta^* R_{23}^{\mathrm{t}} \label{cagon}
\end{equation}
where $u_{ij}$ depend upon $\Delta m^2_{23}$, $\theta_{13}$,
 $\Delta m^2_{12}$, $\theta_{12}$ and on the type of mass hierarchy (normal/inverted), that can be easily fixed changing a discrete parameter $h=\pm 1$.
 When we solve the evolution equations with a certain assumption on the matter density, we calculate the 
complex numbers $u_{ij}$ and therefore 
the amplitudes and the probabilities of oscillations.
For example, at the web address,
\begin{quote}
{\small \url{http://pcbat1.mi.infn.it/~battist/cgi-bin/oscil/index.r}}
\end{quote}
the interface of the `Neutrino Oscillations  Simulator' allows the user to plot and to download the probabilities 
for neutrinos or antineutrinos that propagate from two points on the surface of the  Earth. The solutions are 
obtained numerically, as described in Ref.~\cite{gb}, see in particular Eqs.~(1), (3), (5), (6)  there and using the Preliminary reference Earth model (PREM) of the electronic Earth density illustrated in \figurename~\ref{figprem}. 

\section{Analytical solutions of the equations of propagation}

For a generic profile of electronic density, it is always possible to resort to numerical methods to solve the equations of propagations. However, the solutions have an oscillatory character and  this makes the problem difficult; we need to use a sufficiently small step, that can be eventually time consuming. 

However, there are various cases when the propagation equations have exact or approximate analytical solutions, that are useful to describe the physics. We will discuss two of them  in the following.  
The first is simply the case when we have a constant density; this can be applied in sequence to layers of constant density, finding eventually the overall amplitude. 
The second case instead is when the density varies slowly, which is called the adiabatic case; this admits a simple solution, which moreover is physically important.

\subsection{Constant matter density  and `resonance'}\label{constdensity}
The case when we have two flavors 
($2\times 2$ matrices) and matter with constant density is particularly instructive. We have,
\begin{equation}
U(\theta)=\left(
\begin{array}{cc}
c & s \\
-s & c
\end{array}
\right)\quad\text{and}\quad
k\equiv{k_2-k_1}=\frac{\Delta m^2}{2 E}
\end{equation}
where we can assume 
$\Delta m^2>0$ and $0<\theta<\pi/2$.

Note that, for the description of oscillations,  it is harmless  
if the hamiltonian of propagation is modified by adding a real constant: in fact, this constant yields the same phase for all flavor states, that is irrelevant for the flavor transformations. Thus, let us choose the constant in such a manner that our hamiltonian is traceless, i.e., let us subtract the term $-\mathrm{tr}(H)/2\cdot\mathbb{1}$, getting\footnote{This procedure relies on the same principle that allows us to change the overall factor  
into the vacuum hamiltonian --- see e.g., \eqref{equib}.},
\begin{equation}\label{constH}
H= \frac{k}{2}\, U\, \mathrm{diag}(-1,1)\ U^{\mathrm{t}} + \frac{V}{2}\, \mathrm{diag}(1,-1)=
\frac{1}{2} 
\left(
\begin{array}{cc}
V-k c_2 & k s_2 \\
k s_2 & k c_2-V
\end{array}
\right)
\end{equation}
where we use 
the obvious shorthands $c_2=\cos 2\theta$ and $s_2=\sin 2\theta$.

At this point, we can simply {\em redefine} the angles and the parameters $k$, in such a manner that this hamiltonian coincides formally with a vacuum hamiltonian. This is obtained by setting, 
\begin{equation}\label{eqmat}
\begin{cases} 
k c_2-V\equiv k_m c_{2m} \\
k s_2\equiv k_m s_{2m} 
\end{cases}\quad
\Longleftrightarrow\quad
\begin{cases}\displaystyle
	t_{2m}=\frac{s_2}{c_2-V/k }\\
	k_m =  k \sqrt{   s_2^2 + (c_2 -V/k)^2 }
\end{cases}
\end{equation}
In this manner, we can just reinterpret 
the solutions, that have been  
already obtained for the case of vacuum oscillations. 
We need only to examine the so called 
mixing angle in matter  $\theta_m$ and the parameter $k_m\equiv 
 \Delta m^2_m/(2 E)$.
 
  Let us discuss  the most interesting feature. 
  Suppose that  $V>0$ (as it is usually the case) and 
  $k c_2$  has  the same sign (namely, suppose $\theta<\pi/4$). 
By varying the energy of the neutrino $E$,  we vary $k$ from $0$ to $\infty$. Thus, 
it is always possible to realize the case  
$c_{2m}=0$, namely, the case when $\theta_{m}=\pi/4$ (maximal mixing). This case is called {\em resonance} and the corresponding energy is called the resonance energy. The resonance point corresponds to the minimum splitting between the energy levels of the mass eigenstates. The possibile transition between the two energy levels (let us call them $\lambda_{\pm}$) is usually defined {\em level crossing}. From \eqref{eqmat} is evident that the minimum distance between $\lambda_{\pm}=\pm k_m/2$ occurs for $V=kc_2$.
Note that:
1.~this is possible 
 whichever the value of the mixing angle in vacuum;
2.~at the resonance energy, also $k_m$ and thus $\Delta m^2_m$ 
 get the minimum value.
 
Summarizing, we will have three limiting situations for neutrinos,
\begin{equation}
\begin{cases}
	V\ll k  &\Rightarrow \theta_m\approx  \theta \\
V= k c_2 &\Rightarrow \theta_m\approx  45^\circ \\
V\gg k  &\Rightarrow \theta_m\approx  90^\circ \\
\end{cases}
\label{icasi}
\end{equation}
while for the antineutrinos instead, the sign of $V$ will be opposite and the mixing angle in matter will be smaller than the one in vacuum. 
We leave to the interested Reader a complete  study of   $\Delta m^2_m$, $\theta_m$  and $P_{\ee \mu}(k_m, \theta_m)$, 
as the parameter $V$ is varied.

\subsection{Adiabatic propagation / MSW effect\label{solr}}
Generally, as the neutrino proceeds along its path, the electron density varies. The case when the variation is slow (in a sense to be made precise later) is particularly important and in this case it is possible to 
obtain a solution of the equations of propagation. This is called the adiabatic solution. Let us proceed step by step:
\begin{enumerate}[itemsep=-0.7ex,partopsep=1ex,parsep=1ex]
	\item First of all we rewrite the hamiltonian, by introducing the local eigenvalues. We rewrite,
	\begin{equation}
i \partial_t \nu = [\, U_m\, \mathrm{diag}(k_m)\, U_m^\dagger \, ] \ \nu
\end{equation}
where $\nu$ is the vector of the flavor states, $\nu=(\Pnue,\Pnum, \Pnut)^{\mathrm{t}}$, 
while $U_m$ is the mixing matrix that depends upon the position and/or the time along the path.
\item Now we introduce the local mass eigenstates $n=(n_1,n_2,n_3)^{\mathrm{t}}$,
\begin{equation}
\nu\equiv U_m n\;\Rightarrow \;i \partial_t n = [ \mathrm{diag}(k_m) - i
(U_m^\dagger \partial_t U_m)] n
\end{equation}
If the second term was not there, it would be easy to solve the problem.
\item In order to understand the physics, we focus from here on the
$2\times 2$ case, when all matrices are real  
and say $\nu=(\Pnue,\Pnum)^{\mathrm{t}}$ and  
$n=(n_1,n_2)^{\mathrm{t}}$. After a few direct calculations, we find that  the propagation equation is\footnote{We have added the term  $-\mathrm{tr}[\mathrm{diag}(k_m)]/2\cdot\mathbb{1}$, that modifies only the overall phase, irrelevant for neutrino oscillations.}, 
\begin{equation}\label{mattevol}
i \partial_t 
\left(
\begin{array}{c}
n_1 \\
n_2
\end{array}
\right)
= 
\left(
\begin{array}{cc}
-k_m/2 &\quad -i \partial_t \theta_m \\
i \partial_t\theta_m &\quad k_m/2
\end{array}
\right)
\left(
\begin{array}{c}
n_1 \\
n_2
\end{array}
\right)
\end{equation}
where the effect of the change in the electron density is contained 
in  $\partial_t \theta_m$.
\item Now we introduce the definition,
\begin{equation}\label{adpar}
\quad\text{Adiabaticity parameter:}\quad \gamma=\left| \frac{2 \dot \theta_m}{k_m}\right|
\end{equation}
and we call `adiabatic condition' the one when 
$\gamma$ is always small, so that the new term is negligible and we can 
easily read the solution. In order to have a more precise idea, we note that the case when $k_m$ is minimum, i.e., at the `resonance', 
the adiabaticity parameter is,
\begin{equation}
V(x_{\mathrm{res}})= k c_2 \;\Rightarrow \;
\gamma(x_{\mathrm{res}})= 
\left| \frac{\dot{V}}{ V} \ \frac{c_2}{k\; s_2^2}\right|
\label{ader}
\end{equation}
The case when the condition of adiabaticity is violated significantly is a bit more complex. However, it can be still treated analytically\footnote{This was obtained in Ref.~\cite{parke} and it is 
reviewed, e.g., in Ref.~\cite{r-k}. 
We will not need its explicity expression but 
will discuss it very briefly later on in \sectionname~\ref{sccsn}.},
and the effects are described by a negative exponential in the parameter  $1/\gamma$.
\item Suppose that $\gamma$ is small. The local mass eigenstates
$n_1$ and $n_2$
will not change and evolve only getting a phase,
\begin{equation}
n_1(t) = e^{i \xi}\ n_1(0)\quad
n_2(t) = e^{-i \xi}\ n_2(0)
\quad\text{where}\quad \xi=\int \frac{k_m}{2}\mathrm{d} t 
\label{nada}
\end{equation}
This is the formal statement that describes the case of `adiabatic propagation' in our two flavor system.
At this point it is easy to find the propagation amplitude, e.g.,
$\mathcal{U}_{\ee\to \ee}= c c_m e^{i \xi} +s s_m e^{-i
  \xi}$ (In order to derive this equation, we begin from 
$\nu=U_m n$, we find that
$\Pnue=c_m n_1 + s_m n_2$ and then we calculate 
$\Pnue(t)$ using $n_1(t)$ ed $n_2(t)$, thanks to the linearity of the 
equation; finally, we use the definition, 
$\mathcal{U}_{\ee\to \ee}=\langle \Pnue, \Pnue(t)
\rangle$, or in other words we take the scalar product).
\item Since the phases $\xi$ are large, we can average to zero the 
interference terms in the formulae of the probability of electron survival, getting,
\begin{equation}
P_{\ee\ee}=c^2 c_m^2 + s^2 s_m^2 
\label{adeb}
\end{equation}
this is the formula that is usually employed and in which we are interested. 
We recall that the case when the interference terms drop out is a typical classical average. In the three flavor case the obvious generalization is  $P_{\ee\ee}=|U_{\ee i}^2| |U^m_{\ee i}|^2$ and will be discussed in the last part of this section. 
\item For practical purposes, it is useful to note that when $V\gg k$, the electron neutrino is $\Pnue\approx n_2$, since, as just explained, it propagates in matter with the combination $\Pnue(t)=c_mn_1(t)+s_mn_2(t)$ and for $V \gg k$, $\theta_m\approx 90^\circ$, as discussed in \sectionname~ \ref{constdensity}. Since the central density in the Sun is about \SI{150}{\gram\per\cubic\centi\meter}, this is a good approximation for the propagation of solar neutrinos above about \SI{5}{\mega\electronvolt}. 
\end{enumerate}

Let us note that the above description is strictly analogous to the situation of a two-level hamiltonian that depends upon one parameter. This analogy can help to understand the propagation in the adiabatic regime, as discussed below.
 
 \subsubsection{Time dependent hamiltonian in adiabatic approximation} 
Let us consider a time dependent hamiltonian and introduce the eigenstates and eigenvalues
\begin{equation}
H(t) |n(t)\rangle=E_n(t)  |n(t)\rangle
\end{equation}
and suppose the hamiltonian is not degenerate. 

Let us derive w.r.t.\ time,
\begin{equation}
\dot{H} |n\rangle + H \rndot
=\dot{E_n} |n \rangle + E_n \rndot
\end{equation}
then we multiply by $\langle m|$ with $n\neq m$, 
\begin{equation}
\begin{aligned}
\mathrm{l.h.s.} &=
\langle m|\dot{H} |n\rangle + \langle m| H \rndot = \langle m|\dot{H} |n\rangle + E_m \langle m \rndot \\
\mathrm{r.h.s.} &=\dot{E_n} \langle m| n \rangle + E_n \langle m \rndot=E_n \langle m \rndot
\end{aligned}
\end{equation}
so we find,
\begin{equation}
\langle m\rndot=\frac{\langle m|\dot{H} |n\rangle }{E_n-E_m}\quad \text{if}\quad n\neq m
\end{equation}
Thus the eigenstate acquires overlap with other eigenstates if the derivative of $H$ is large in comparison to the energy differences.

Instead in the case $n=m$ we have,
\begin{equation}
\langle n| n\rangle=1\:\Rightarrow\: \lndot n\rangle+ \langle n \rndot =0
\:\Rightarrow\: \langle n \rndot =- \langle n \rndot^*
\end{equation}
in other words, this number is imaginary. We define,
\begin{equation}
\eta_n =i \langle n \rndot \in \mathbb{R}
\end{equation}

Now we can write the generic states as,
\begin{equation}
|\psi\rangle=\sum_{n} e^{-i S_n}c_n |n\rangle\quad\text{where}\quad
S_n(t)=\int_0^t d\tau\ E_n(\tau)
\end{equation}
Its time evolution is dictated by the Schr\"{o}dinger equation
$i\dot{|\psi\rangle}=H |\psi\rangle$, that implies,
\begin{equation}
\begin{aligned}
&\sum_{n}  \dot{S_n} e^{-i S_n}c_n |n\rangle     + 
  i\sum_{n}  \left( 
e^{-i S_n}\dot{c}_n |n\rangle   + 
e^{-i S_n}c_n \rndot 
 \right)\\
 =&\sum_{n} e^{-i S_n}c_n  H\, |n\rangle 
\end{aligned}
\end{equation}
the first and last term cancel out. Multiplying by $\langle m| $, we have
\begin{equation}
\dot{c}_m= - \sum_n e^{i (S_m- S_n) }c_n \langle m \rndot = -c_m \langle m\rxdot{m}
-    \sum_{n\neq m} e^{i (S_m- S_n) }c_n \langle m \rndot 
\end{equation}
Using the above result,
\begin{equation}
\dot{c}_m= -c_m \langle m\rxdot{m}  -    \sum_{n\neq m} c_n \ \frac{\langle m|\dot{H} |n\rangle }{E_n-E_m} e^{i (S_m- S_n) } 
\end{equation}

If the second term is negligible, namely in the adiabatic case, we have,
\begin{equation}
\displaystyle 
c_m(t)=c_m(0) \exp\left[  i \int_0^t \mathrm{d}\tau \ \eta_m(\tau) \right]
\end{equation}
namely, $|c_m|^2$ does not change with time.

From the time evolution of the mass eigenstates in \eqref{mattevol}, we have,
\begin{equation}\label{n2ev}
	i\partial_x n_2=\frac{k_m}{2}\left(i\frac{2\dot{\theta}_m}{k_m}n_1+n_2 \right)=\frac{k_m}{2}\left(i \gamma n_1+n_2 \right)
\end{equation}
where $\gamma$ is the adiabatic parameter in \eqref{adpar}. Since in the adiabatic approximation the limit $\gamma \ll 1$ holds, the mass eigenstate time evolution is,
\begin{equation}
\begin{cases}
	i\;\partial_x n_1 = -k_m\, n_1/2 \\
	i\;\partial_x n_2 = k_m \, n_2/2
\end{cases}
\end{equation}
i.e.\ the time evolutions (time and spatial evolutions are equivalent in the relativistic regime) of the two eigenstates are independent one from another. As a consequence, in the adiabatic approximation, if $\Pnue \approx n_2$ in the production point (e.g. for solar neutrinos), $\Pnue$ remains in that eigenstate during all the adiabatic evolution.

To summarize, the case of adiabatic propagation applies if the probability of transition between the two eigenstates is always negligible, even when the two levels get as close as possible without crossing each other  (a sub-case called ``avoided crossing'' in quantum mechanics).

\subsubsection{Effect of the Earth on neutrino propagation}

As we have just seen,  in the case of solar $\Pnue$ we expect to receive on Earth just the mass eigenstate $\nu_2$ 
above a certain energy
(adiabatic conversion). However, it is possible that neutrinos
reach the detector passing through the Earth, i.e.\ in night time.
  Therefore, it is interesting to calculate the probability of conversions  from $\nu_2$ to $\Pnue$, in two flavors approximation.
 This has a very simple and useful expression in the case of constant matter density, that we discuss here. 
 
 In the two flavor basis, for the time independent hamiltonian in \eqref{constH}, the evolutor that describes the effect of the propagation in the Earth is,
 \begin{equation}\label{prop}
 \begin{aligned}
 \mathcal{U}&=U_m\,\mathrm{diag}[e^{i\varphi_m}, e^{-i\varphi_m} ]\ U_m^{\mathrm{t}}\\&=
 \left(
 \begin{array}{cc}
 c_m + i s_m \cos 2\theta_m & -i s_m \sin 2\theta_m \\
 -i s_m \sin 2\theta_m & c_m - i s_m \cos 2\theta_m 
 \end{array}
 \right)
 \end{aligned}
 \end{equation}
 where $\varphi_m=\Delta m^2_m L/(4 E)$, $c_m=\cos\varphi_m$, $s_m=\sin\varphi_m$. Consider the case of solar neutrinos; 
the initial state is $\nu_2$, that in flavor space is represented by $(\sin\theta,\cos\theta)$ and the final state is 
$\Pnue$ that in flavor space is represented by $(1,0)$. The relevant matrix element of the evolutor is,
\begin{equation}
\begin{aligned}
 \mathcal{U}_{2\to \ee}&=(c_m+i s_m \cos 2\theta_m ) \sin\theta- i s_m \sin 2\theta_m \cos\theta\\&=
 c_m \sin\theta-is_m \sin(2 \theta_m-\theta)
 \end{aligned}
 \end{equation}
 The last factor can be rewritten using the formulae for the 
 mixing angles in matter,
 \begin{equation}
 \sin(2 \theta_m-\theta)=\sin\theta\ \frac{1+\varepsilon}{\sqrt{(1+\varepsilon)^2-4 \varepsilon \cos^2\theta }}
 \end{equation}
  where as usual $\varepsilon=V/k=2 E V /\Delta m^2$. 
With a bit of algebra, we obtain the desired result, 
\begin{equation}\label{eq:184}
P_{\nu_2\to \pnue}=
| \mathcal{U}_{2\to \ee}|^2= 
\sin^2\theta\left[1+ {4 \varepsilon\ \cos^2\theta} \ 
{\sin^2\left(\frac{\Delta m^2 L}{4 E}  \xi\right)}\; /\; {\xi^2}    \right]
\end{equation}
with,
\begin{equation}
{\xi}=\sqrt{ (1+\varepsilon )^2 - 4 \varepsilon \cos^2\theta}
\end{equation}
where of course we have in mind the so-called ``solar neutrino mass splitting'' (the lowest one) and $\theta$ is in good approximation just
$\theta_{12}\approx 33^\circ$.

\section{Applications and examples}
\subsection{High energy atmospheric neutrinos}

In this section, taken from Ref.~\cite{carol}, we 
examine the oscillations of high energy 
atmospheric neutrinos. This is interesting, since 
one could discriminate the neutrino mass hierarchy by 
observing the details of the matter effect.

We begin by using some simplifying assumptions, 
postponing derivations and refinements. 
Let us consider  oscillations with a single scale, 
and let us consider oscillations in constant matter density.
In fact, the second hypothesis  is rather inaccurate in the conditions in which we are interested for the experiments, and a more accurate evaluation of the oscillation probabilities  
for the actual calculations is required. However, a qualitative discussion based on simple-minded analytical results is sufficient to  correctly understand the relevant features. 

\begin{figure}[t]
\centering
\subfigure[Normal hierarchy.]
{\includegraphics[width=0.48\textwidth,
bb=0 0 800 539]{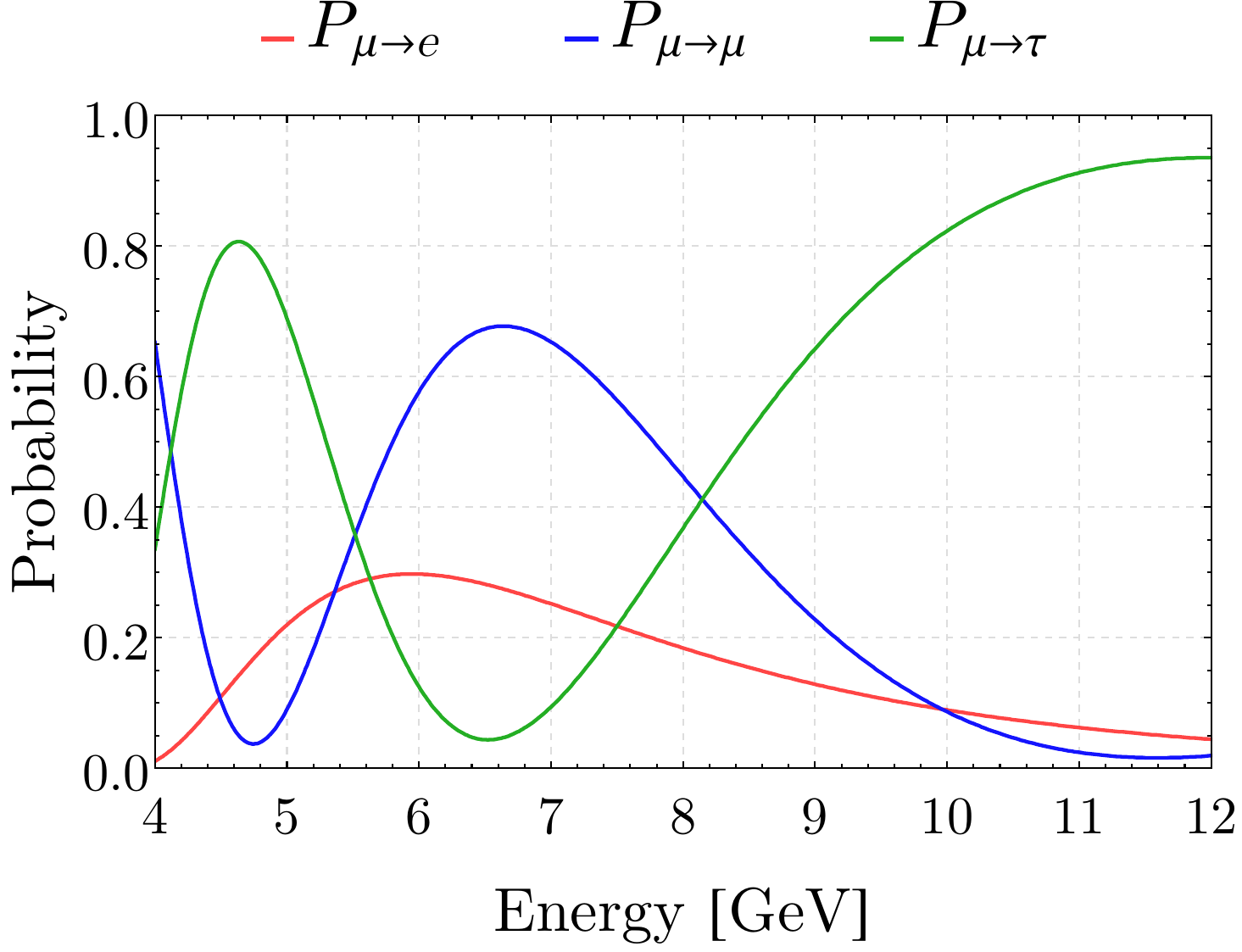}
\label{fig:7kNH}}
\subfigure[Inverted hierarchy.]{
\includegraphics[width=0.48\textwidth,
bb=0 0 800 539]{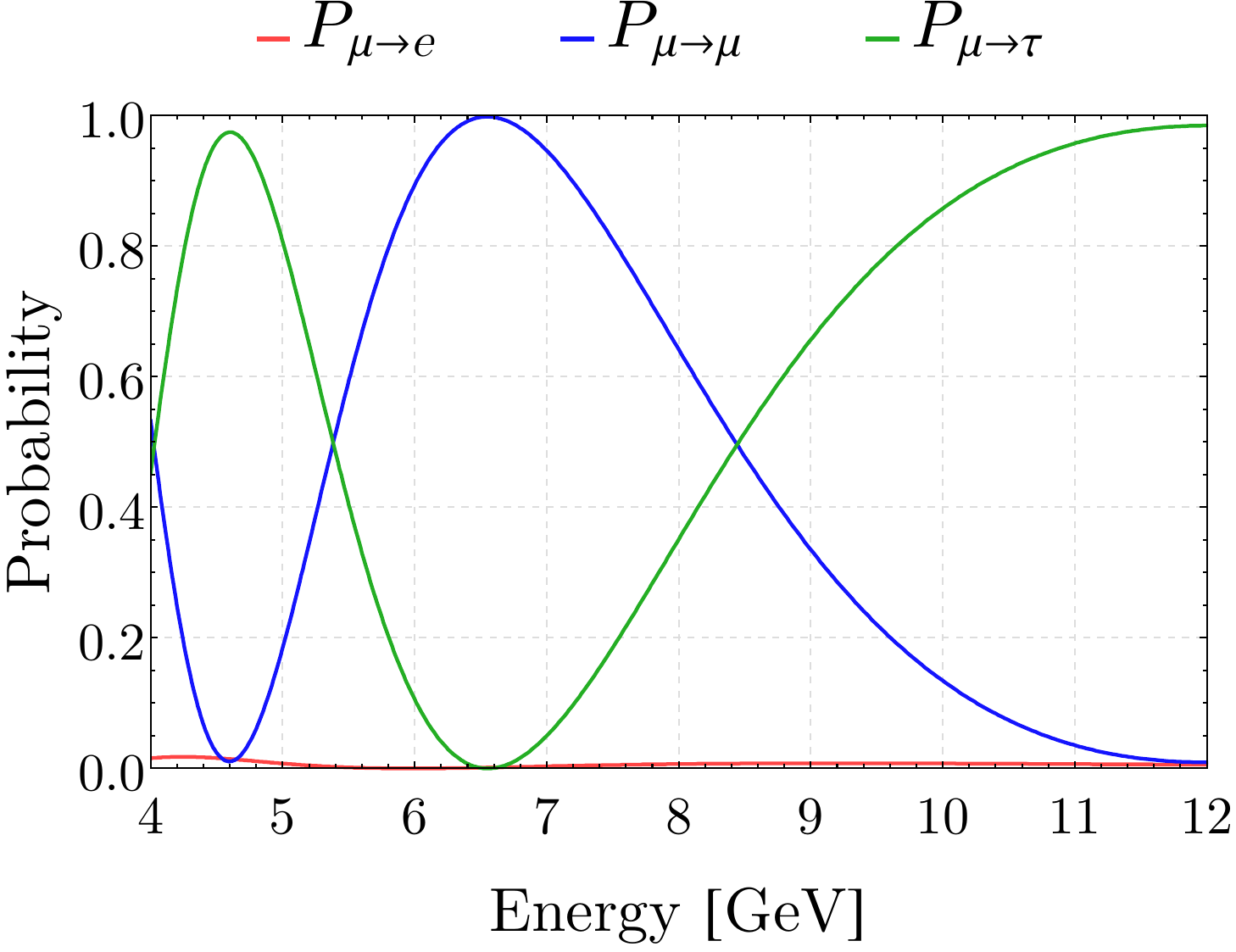}
\label{fig:7kIH}}
\caption{Oscillation probabilities in the Earth for a path of  
$L=7000 \,\mathrm{km}$: the blue line is 
$P_{\pnum\to\pnum}$; the red line  $P_{\pnum\to 
\pnue}$; the green one $P_{\pnum\to\pnut}$. The two panels
refer to normal and inverted hierarchy respectively.
The increase of $P_{\pnum\to \pnue}$ for normal hierarchy, 
due to matter effect, causes the decrease of
$P_{\pnum\to\pnum}$.}
\label{7k}
\end{figure}

Under the above assumptions,
the probability that a muon neutrino converts into an electron neutrino  
is simply,  
\begin{equation}
P_{\mu \mathrm{e}}=\sin^2\theta_{23}\;\sin^22\widetilde{\theta_{13}}\;\sin^2\widetilde{\varphi} 
\quad\text{with}\quad\widetilde{\varphi} =\frac{\widetilde{\Delta m^2} L}{4 E}  \label{aq}
\end{equation}
Most of the results in which we are interested
follow from the above simple formula.  
In \eqref{aq}, we introduced the usual matter-modified mixing angle and squared-mass difference,
\begin{equation} \label{peta}
\begin{cases}
	\sin2\widetilde{\theta_{13}}=\sin2{\theta_{13}}\ / \Delta\\
\cos2\widetilde{\theta_{13}}=(\cos2{\theta_{13}} -\varepsilon) / \Delta\\\widetilde{\Delta m^2}={\Delta m^2}\times \Delta 
\end{cases}
\end{equation}
where,
\begin{equation}
\Delta=\pm \sqrt{ (\cos2\theta_{13}-\varepsilon)^2 + \sin^22\theta_{13}} 
\end{equation}
The sign of $\Delta$ is  
matter of convention;
the ratio between matter and vacuum term is,
\begin{equation}\label{msw}
\varepsilon\equiv \pm \frac{\sqrt{2} G_{\mathrm{F}} n_{\ee} }{\Delta m^2/(2 E)}
\approx \pm \frac{\rho}{\SI{5.5}{\gram\per\cubic\centi\meter}}\times
\frac{Y_{\ee}}{1/2}\times 
\frac{\SI{2.4e-3}{\electronvolt\squared}}{\Delta m^2}\times
\frac{E}{\SI{5.5}{\giga\electronvolt}}
\end{equation}
where $G_\mathrm{F}$ is the Fermi coupling 
and we identify $\Delta m^2$ with $\Delta m^2_{23}$.
Now, instead, the sign is important: it is plus for normal hierarchy and minus for inverted hierarchy.
Considering the average matter density of the Earth
$\rho=\SI{5.5}{\gram\per\cubic\centi\meter}$ and $Y_{\ee}=1/2$, we get 
$n_{\ee}= 1.7\times 10^{24}\Pelectron\mathrm{cm}^{-3}$ for  the electronic density.
Thus, the 
characteristic length of MSW theory is,
\begin{equation}
L_* \equiv \frac{1}{\sqrt{2} G_{\mathrm{F}} n_{\ee}}\sim \SI{1000}{\kilo\meter}
\end{equation}
We see that, for normal hierarchy, the maximum of $P_{\mu \ee}$ 
 is obtained when:
1.~$\Delta$ is as small as possible, in order to maximize 
 $\sin2\widetilde{\theta_{13}}$; moreover
2.~the phase of propagation is $\widetilde{\varphi} \sim \pi/2$.
These conditions are met when the neutrino energy and the propagation distance are,
\begin{equation}
E_{\mathrm{max}}=\frac{\Delta m^2 L_*}{2}\cos2\theta_{13}\quad\text{and}\quad
L_{\mathrm{max}}=\frac{\pi L_*}{\tan2\theta_{13}}
\end{equation}
or, numerically,
\begin{equation}
\label{masul}
	E_{\mathrm{max}}\sim \SI{5.5}{\giga\electronvolt}
	\quad\text{and}\quad
	L_{\mathrm{max}}\sim \SI{9000}{\kilo\meter}
\end{equation}
In the case of inverted hierarchy, the matter effect depresses $P_{\mu \ee}$, that becomes negligible.

As we have told, one would like to 
observe the matter effect.
From the point of view of the experiments, one direct approach is to 
check the simple prediction concerning 
$P_{\mu \ee}$. 
However, there are some practical advantages 
to detect muons rather
than electrons. Then, let us consider the  survival probability $P_{\mu \mu}$, focussing again
on the normal hierarchy case.  
Let us discuss the case when the first 
local maximum of $P_{\mu \mu}$ is as small as possible.
This results  from $P_{\mu \tau}$ and from $P_{\mu \ee}$, since 
$P_{\mu \mu}=1-P_{\mu \tau}- P_{\mu \mathrm{e}}$. 
We are interested in the case when the minimum of $P_{\mu \tau}$ happens in the vicinity of the energy identified in \eqref{masul}. When the phase of oscillation of $P_{\mu\tau}$ is close to the vacuum phase, the condition $\Delta m^2 L/(2 E_{\mathrm{max}})=2\pi$ gives  $L\sim \SI{6000}{\kilo\meter}$. This suggests that the distance that amplifies the matter effect on $P_{\mu\mu}$ is between 6000 and \SI{9000}{\kilo\meter}, as confirmed by more complete analyses.
These points are illustrated in \figurename~\ref{7k}, that was obtained using the web interface mentioned in \sectionname~\ref{sez:rmk}.

\subsubsection{Description of atmospheric neutrino oscillations 
at various levels of accuracy}  
The general case requires a numerical solution based on \eqref{cagon}. However, 
when the ``solar'' $\Delta m^2_{12}$ is set to zero --- i.e., when its effects
 are negligible --- the only non-zero out-of-diagonal elements $u_{ij}$ in \eqref{cagon}
 are $u_{13}$ and $u_{31}$.
The CP violating phase $\delta$ drops out from the 
probabilities $P_{\ell\ell'}=|\mathcal{U}_{\ell'\ell}|^2$, that moreover  
becomes symmetric, $P_{\ell\ell'}=P_{\ell'\ell}$ for each $\ell,\ell'=e,\mu,\tau$.
Therefore, in this approximation 
we have 3 independent probabilities and all the other ones are fixed. 
We can choose, e.g., 
\begin{equation} \label{3e}
\begin{cases}
P_{\ee\mu}=\sin^2\theta_{23} |u_{13}|^2\\ 
P_{\ee\tau}=\cos^2\theta_{23} |u_{13}|^2\\ 
P_{\mu\tau}=\sin^2 \theta_{23} \cos^2 \theta_{23} |u_{33}-u_{22}|^2
\end{cases}
\end{equation}
so that, e.g., $P_{\ee\ee}=1- P_{\mu \ee}-P_{\tau \ee}=|u_{11}|^2$. From these formulae we obtain,
 \begin{equation} \label{3eb}
P_{\mu \ee}=\sin^2\theta_{23}\ (1-P_{\ee\ee})\quad\text{and}\quad
P_{\mu\tau}=\frac{1}{4}\sin^2 2 \theta_{23} \left|1-\sqrt{P_{\ee\ee}}\ e^{i\hat\varphi}\right|^2
\end{equation}
where $\hat\varphi$ is a (rapidly varying) phase factor. 
Two important remarks are in order: 
\begin{enumerate}[itemsep=-0.7ex,partopsep=1ex,parsep=1ex]
\item The last equation shows that  $P_{\mu \ee}$ is large in the region 
where $P_{\ee\ee}$ is small, and that $P_{\mu\tau}$ remains close to zero 
in the first non-trivial minimum near $\hat\varphi=2\pi$, even when $P_{\ee\ee}\approx 0.3-0.4$ due to matter effect. 
\item The sign of 
$\Delta m^2$ controls the sign of the vacuum hamiltonian; thus, switching between the two 
mass hierarchies or switching between neutrinos and antineutrinos has the same effect; e.g.,
$P_{\ee\mu}(\text{IH})=P_{\bar{\ee}\bar{\mu}}(\text{NH})$.
\end{enumerate}
The first remark is consistent with our numerical findings, that $P_{\mu \ee}$ is amplified and 
 $P_{\mu\tau}$ does not deviate strongly from its behavior in vacuum 
 in the conditions that are relevant for our discussion.

Proceeding further with the approximations, and considering at this point 
the case of constant matter density, we obtain simple and closed expressions. For the case of  
 normal mass hierarchy, they read,
 \begin{equation} \label{3e2}
 \begin{cases}
 	u_{13}=u_{31}=-i \sin\widetilde{\varphi}  \sin2 \widetilde{\theta_{13}}\\
 	u_{11}=\cos\widetilde{\varphi} + i \sin\widetilde{\varphi} \cos2 \widetilde{\theta_{13}}=u_{33}^*\\
 	u_{22}=\cos\widetilde{\varphi}' + i \sin\widetilde{\varphi}\,' 
 \end{cases}
 \end{equation}
 where,
\begin{equation}\widetilde{\varphi}\,'=\frac{{\Delta m^2}  L}{4 E} (1+\varepsilon)\end{equation}
From  \eqref{3e} and \eqref{3e2}, we recover the expression of \eqref{aq}, used in the above discussion.
In the approximation of constant matter density, the phase $\hat\varphi$ entering 
the expression of the probability $P_{\mu\tau}$ 
is given by $\sqrt{P_{\ee\ee}}\cos\hat\varphi\equiv \cos\widetilde\varphi \cos\widetilde\varphi\,'-
 \sin\widetilde\varphi \sin\widetilde\varphi\,'\cos 2\widetilde{\theta_{13} }$. This is close to 
 the vacuum phase when 
 $\varepsilon$ is large or small in comparison to 1: in fact,  we have
 $\cos 2\widetilde{\theta_{13} }\sim \pm 1$ and  
 $\widetilde\varphi\sim \pm \Delta m^2 L/(4 E) (1-\varepsilon)$ from 
 \eqref{peta}, 
 so that $\cos\hat\varphi\sim 
 \cos[\Delta m^2 L/(2 E)]$.

\subsection{Solar neutrinos}
Let us discuss how the adiabatic solution applies to solar neutrinos. 
First of all, we note that the expected matter density profile is approximately exponential, $n_{\ee}\propto \exp(-10 x/R_\odot )$, where  $R_\odot=\SI{7e8}{\meter}$
is the solar radius.  Considering the mass difference squared, 
$\Delta m^2_{12}=\SI{8e-5}{\electronvolt\squared}$ and $\sin^2 2\theta_{12}\sim 0.8$, 
\eqref{ader} reads,
\begin{equation}
\gamma\approx  \left| \frac{10\ c_2}{k R_\odot\ s_2^2}\right|\ll 1 
\end{equation}
for any possible relevant energy  $E=0.1-\SI{20}{\mega\electronvolt}$. Thus, the local mass eigenstates
$n_1$ and $n_2$ remain always such, just as described in \eqref{nada}; therefore, the result of \eqref{adeb} follows. 

At this point we can interpret solar neutrinos.
Let us begin by noting that we expect, {\em a priori}, that
matter effects are important only for solar neutrinos of high energy, 
see \eqref{100}. Given the density of electrons in the center of the Sun, \SI{100}{\mole\per\cubic\centi\meter}, the mixing angle in matter is about $90^\circ$, see \eqref{icasi}. Thus, since $s_m\approx 1$ and  $c_m\approx 0$, the adiabatic formula for  
$P_{\ee\ee}$ simply becomes,
\begin{equation}
P_{\ee\ee}=\sin^2\theta_{12}
\end{equation}
and there is no problem to explain the solar neutrino observations that indicate, at high energies, that the suppression factor of electron neutrinos is about  1/3. At this point, we know that 
the right solution is the one with $\theta_{12}\sim 30^\circ$ 
(since if we choose the other one that is allowed by vacuum neutrino oscillations, with $\theta_{12}\sim 60^\circ$, we would get 2/3 instead). Evidently, at low energies we recover 
the vacuum oscillation formula, 
\begin{equation}
P_{\ee\ee}=1- \frac{\sin^22 \theta_{12}}{2}
\end{equation}
In fact, 
making reference to the cases of \eqref{icasi}, the limit 
$\theta\sim\theta_m$ applies. 

\subsection{Supernova neutrinos\label{sccsn}}

Neutrinos from the gravitational collapse have energies 
from a few MeV to almost \SI{100}{\mega\electronvolt}, propagate at distances of the order of kpc and are produced in nuclear density regions, billion times larger than solar ones. They certainly undergo some oscillation effects. However, it is not clear, to date, to what extent
it is possible to separate the effects of the oscillations from the astrophysical uncertainties since the latter ones are very large.
We will limit ourselves   
to show here (without deriving them) a few basic  formulas
in view of the fact that this type of 
neutrinos do exist and are detectable.

First of all, we calculate the neutrino masses from 
the hamiltonian of propagation given in \eqref{por} as a function of the electron density. These masses depend upon the neutrino masses in vacuum, upon the energy of the neutrino, upon the electron density and upon the mixing angles $\theta_{12}$ and $\theta_{13}$ (see discussion in \eqref{cagon}). 
The result for the energy of a typical supernova neutrino is given in \figurename~\ref{lc} assuming the case of normal mass hierarchy. We see that if the density of electrons is large enough, we will  have two ``level crossing'' or ``resonances''. While in the case of solar neutrinos we had only one (of course, the one occurring for lowest densities), in the case of supernova we will have both of them, since in the center of the star (where the neutrinos are produced) 
we reach nuclear densities. 

Then, we consider a generalization of the 
2 flavors discussion of neutrino oscillations, 
given in \sectionname~\ref{solr}.
For generic values of the mixing angle $\theta$ and of the 
corresponding $\Delta m^2$, the  
matrix containing the probabilities of survival or of conversion, averaged over the large phases, is,
\begin{equation}
P=
\left(
\begin{array}{cc}
c^2 & s^2 \\
s^2 & c^2 
\end{array}
\right)
\left(
\begin{array}{cc}
1-P_{\mathrm{f}} & P_{\mathrm{f}} \\
P_{\mathrm{f}} & 1-P_{\mathrm{f}}
\end{array}
\right)
\left(
\begin{array}{cc}
c_m^2 & s_m^2 \\
s_m^2 & c_m^2 
\end{array}
\right)
\end{equation}
The probability $P_{\mathrm{f}}$  is called {\em flip probability} and describes the possibility that there is a deviation from the relations given in  \eqref{nada}.  It is easy to verify that, in the limit  $P_{\mathrm{f}}\to 0$, the above formula yields $P_{\ee\ee}$ as in \eqref{adeb}.   
In fact, as we have checked in the case of solar neutrinos, the adiabaticity parameter $\gamma$ is very small, and this boils down in the conclusion that $P_{\mathrm{f}} \to 0$.

\begin{figure}[t]
\begin{center}
\centerline{\includegraphics[width=0.8\textwidth,bb=0 0 800 533]{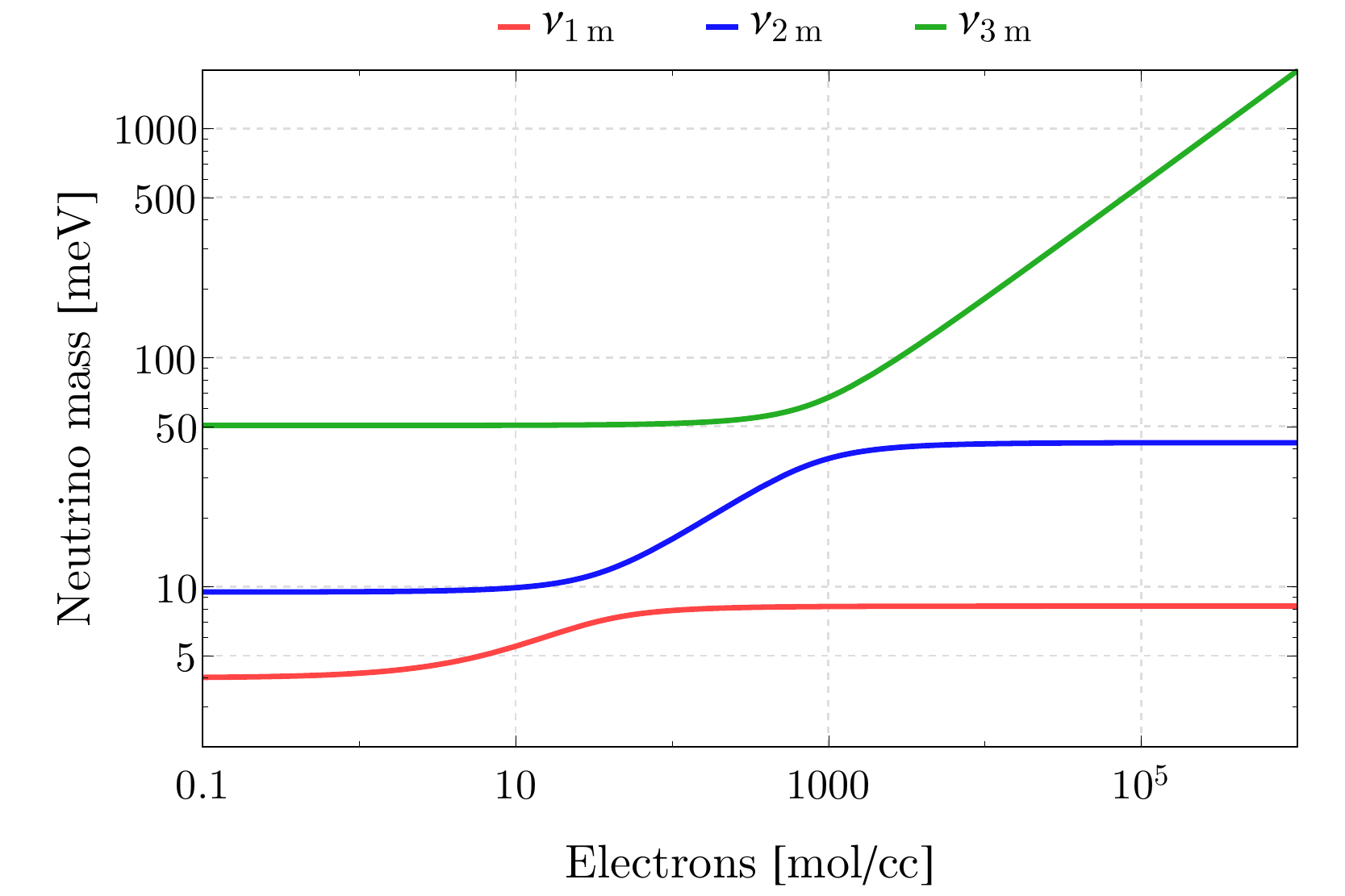}}
\caption{Neutrino masses calculated from the hamiltonian of propagation given in \eqref{por} and given as a function of the electron density for  a typical energy of supernova neutrinos,  $E_\nu=\SI{20}{\mega\electronvolt}$. We 
assumed normal mass hierarchy and fixed 
the lightest neutrino mass to be $m_1=\SI{4}{\milli\electronvolt}$ just for illustration purposes. }
\label{lc}
\end{center}
\end{figure}

The generalization to the 3 flavor case is quite direct, but requires to know
the type of neutrino mass hierarchy. In fact,
we know from the solar neutrino 
oscillations in matter 
that \Pnue  is mostly in $ \nu_1$, 
which corresponds to the choice of the angle $ \theta_{12} $
discussed in the previous paragraph. Conversely, 
we still do not know for sure whether 
the neutrino $ \nu_3$, that is responsible for 
atmospheric oscillations, 
is the heaviest (= direct or normal mass hierarchy) or the lightest (= inverse mass hierarchy), see \figurename~\ref{an:fig.hie}. 
Assuming the first case, that is slightly favored at present 
and it is also the most plausible theoretically, one finds \cite{dighe},  
\begin{equation}
\begin{array}{rl}
P=&\left(
\begin{array}{ccc}
|U_{\ee 1}^2| & |U_{\ee 2}^2| & |U_{\ee 3}^2| \\
|U_{\mu 1}^2| & |U_{\mu 2}^2| & |U_{\mu 3}^2| \\
|U_{\tau 1}^2|  & |U_{\tau 3}^2| & |U_{\tau 3}^2| 
\end{array}
\right)\cdot 
\left(
\begin{array}{ccc}
1-P_{\mathrm{S}} & P_{\mathrm{S}} & 0 \\
P_{\mathrm{S}} & 1-P_{\mathrm{S}} & 0 \\
0 & 0 & 1
\end{array}
\right)\cdot  \\[4ex]
& \left(
\begin{array}{ccc}
1 & 0 & 0 \\
0 & 1-P_{\mathrm{A}} & P_{\mathrm{A}}  \\
0 & P_{\mathrm{A}} & 1-P_{\mathrm{A}}
\end{array}
\right)\cdot  
\left(
\begin{array}{ccc}
|U_{\ee 1,m}^2| & |U_{\mu 1,m}^2| &    |U_{\tau 1,m}^2| \\   
 |U_{\ee 2,m}^2|  & |U_{\mu 2,m}^2| &   |U_{\tau 3,m}^2|   \\
|U_{\ee 3,m}^2|  & |U_{\mu 3,m}^2|  & |U_{\tau 3,m}^2| 
\end{array}
\right)
\end{array}
\end{equation}
We see that now there are 2 flip probabilities. The first one is $P_{\mathrm{A}}$ connected to the ``atmospheric'' parameters (the delta mass squared  $\Delta m^2_{23}$ and $\theta_{13}$) the second is $P_{\mathrm{S}}$ 
connected to the ``solar'' parameters
(the delta mass squared  $\Delta m^2_{12}$ and $\theta_{12}$).
From this expression, we derive the electron survival probability,
\begin{equation}
P_{\ee\ee}= (1-P_{\mathrm{A}}) |U_{\ee 3}^2| + P_{\mathrm{A}} \left[\ (1-P_{\mathrm{S}}) |U_{\ee 2}^2| + P_{\mathrm{S}} |U_{\ee 1}^2|\ \right]
\end{equation}
As in the case for solar neutrinos, we expect that  $P_{\mathrm{S}}\sim 0$ also in the case of the supernova. The same is expected to be true for the other flip probability, $P_{\mathrm{A}}\sim 0$.\footnote{This is expected to be true in all reasonable cases; the only possible (unlikely) exception is if the electron density changes very sharply.} 
Therefore, we conclude,
\begin{equation}
P_{\ee \ee}\sim |U_{\ee 3}^2|\sim 0
\end{equation}
This implies that the electron neutrinos that we detect at Earth were produced at the source as muon or tau neutrinos. 


It is possible to recap this conclusion 
with the help of \figurename~\ref{lc}.
Let us focus on the continuous (green) line,
that shows 
the evolution of the 
heaviest mass eigenstate  $\nu_{3 m}$ 
from the vacuum (lower electronic densities)
to the innermost part of the star
(highest electronic densities).
We note that after a certain density ($n_{\ee}\sim \SI{1000}{\mole\per\cubic\centi\meter}$) 
the mass of $\nu_{3 m}$ 
increases linearly with the electron density (see \figurename~\ref{lc}). 
In fact, the 
matter term in \eqref{por} is the largest part of 
the hamiltonian of propagation 
in the center of the star, it is linear in $n_{\ee}$ 
and evidently 
the heaviest eigenstate of the matter term is just the
electron neutrino. 
This means that an electron neutrino,  produced in the center of the star, is  just the heaviest mass eigenstate  of the full hamiltonian 
in very good approximation,
$|\nu_{\ee}\rangle \approx |\nu_{3m}\rangle$.
The adiabatic evolution implies   
that a neutrino follows the evolution of the mass eigenstate; in our case, the electron neutrino produced in the center of the star 
remains always the heaviest mass eigenstate, 
$|\nu_{\ee}\rangle \approx |\nu_{3m}(t)\rangle$ for all times (up to an irrelevant overall phase)  
and exits from 
the star as $|\nu_3\rangle$. Thus, its survival probability is $|\langle \nu_{\ee}|\nu_3\rangle|^2=|U_{\ee 3}^2|$.

\subsection{Earth matter effect at solar and supernova energies}

At solar neutrino energies, the probability $P_{\nu_2\to \nu_{\ee}}$
given in \eqref{eq:184} admits the simple numerical approximation
\begin{equation}
P_{\nu_2\to \nu_{\ee}}\approx
\displaystyle 0.3 + 1.7\%\ \kappa\
\sin^2\left(\frac{\Delta m^2 L}{4 E}  \right)
\end{equation}
with,
\begin{equation}
\kappa=\left( 
\frac{\SI{7.37e-5}{\electronvolt\squared}}{\Delta m^2}
 \right)\left( 
 \frac{n_{\ee}}{\SI{2}{\mole\per\cubic\centi\meter}}
 \right)\left( 
 \frac{E}{\SI{5}{\mega\electronvolt}} 
 \right)
\end{equation}
Therefore, we see that the detectors receive (slightly) more electron neutrinos  in night than in day, 
which is the reason why this is called  {\em day-night} effect or {\em regeneration} (adopting neutral kaon's terminology). This  feature remains true on average when we consider the full description of the Earth density.

 Something similar is expected to happen in the case of supernova neutrinos. In this case the typical energy is $20-\SI{30}{\mega\electronvolt}$, larger than the one of solar neutrinos; thus, if neutrinos pass through the Earth, the effect is larger.
In the approximation described above (\sectionname~\ref{sccsn}),  when we start from  \Pnue  we  expect to receive on Earth $\nu_3$ (or $\nu_2$) for normal (or inverse) mass hierarchy; instead, 
 \APnue becomes $\bar\nu_1$ (or $\bar\nu_3$).  If we receive 
 $\nu_3$ or $\bar\nu_3$,  the relevant vacuum term (related to the larger value of $\Delta m^2$)
 is too large at these energies and no effect is expected. In the other cases, we can use almost immediately the above formula:
in fact,  we have the relation
 $P_{\nu_1\to \pnue}=1-P_{\nu_2\to \pnue}$, 
and we can obtain    $P_{\bar\nu_1\to \apnue}$ from 
$P_{\nu_1\to \pnue}$ by flipping the sign of the matter term:
therefore, it is sufficient to calculate just one of these probabilities and we can 
immediately use the above result. 
 Note the curious fact that 
 $P_{\bar\nu_1\to \apnue}$  can be readily obtained from 
 $P_{\nu_2\to \pnue}$ given above simply exchanging $\theta\to \pi/2-\theta$. 
This allows us to describe \APnue in normal hierarchy and to understand that 
 \APnue are expected to be ``regenerated'' if they reach the detector after passing through the Earth.

 For an exhaustive discussion in connection with solar neutrinos   see Ref. \cite{lisos}; 
 for a study concerning supernova neutrinos see 
Ref.~\cite{marco1}.  

	\addtocontents{toc}{\protect
\setcounter{tocdepth}{0}}
\chapter{Summary and discussion\label{ch:mo2}}
\minitoc

Neutrino oscillations are an important topic of high energy particle physics. With these notes, we have provided the Reader with 
an introduction to the main points and results. 
The usefulness of these results 
has been illustrated by examining specific cases, applications, solutions of the equations of propagation and by discussing various formal developments. Moreover, we have thoroughly 
discussed the conceptual context and collected in the appendices various considerations concerning the description of relativistic fermions, that complement a more-or-less standard knowledge on Dirac matrices from a course on 
high energy/particle physics. 
In this brief summary, we list  the most important basic equations concerning neutrino oscillations and related facts; we overview the status of the field and we remark once again the introductory character of these notes, in the hope of encouraging the Reader to join the research on neutrino physics.

\section{Survival kit}

For the purpose of recollecting the main facts and of allowing one to test the overall understanding of the expounded material, we offer our selection of the basic formulae discussed in the text, that  a Reader is supposed to master at this point,
\begin{enumerate}[itemsep=-0.7ex,partopsep=1ex,parsep=1ex]
\item Definition of the mixing matrix,
\eqref{mixfi} and its role in the expressions
of neutrino and antineutrino states,  \eqref{pera2}.
\item Standard expression of the mixing matrix, 
\eqref{stdd}, values of the parameters,
Tab.~\ref{tab:lisi16}, explanation of the part  that 
matters for oscillations, \eqref{ridif}.  
\item Amplitude of transition for vacuum oscillations, \eqref{bum2}.
\item General expression of the probabilities of transition,
\eqref{cit}.
\item Numerical factor, 1.267, \eqref{notorio}.
\item Matter term, \eqref{mss} and 
its numerical value, \eqref{100} and
\eqref{msw}.
\item Vacuum hamiltonians for the evolution of the states, \eqref{bibolott} and also \eqref{equib}. 
Matter hamiltonians,  \eqref{por}.
\end{enumerate}

\subsection{Neutrino oscillations today}
A detailed knowledge of this phenomenon is crucial to  correctly interpret several results concerning neutrino experiments, either in laboratory or in the context of neutrino astronomy. 
In similar cases, it is necessary to master the formulae expounded here (or possibly their extensions), to dispose of a detailed model  of the source and of the detector, to adopt suitable statistical procedures for data analysis.  
Of course, these formulae are necessary to investigate the value of the neutrino parameters: the hottest topics are the discrimination of neutrino mass hierarchy, a precise measurement of CP violation, the assessment of the deviation of $\theta_{23}$ from $45^\circ$.
Moreover, they are needed to describe the impact of flavor transformations on the expectations, to search for {\em other} neutrinos besides the usual ones\footnote{These are often called `sterile' neutrinos. 
We  have various hints to date, but no  significant evidences or compelling  theoretical indications for them \cite{sterili1,sterili2}.}, to search for new `matter effects', to hunt for (speculative) effects of 
neutrino magnetic moments, etc.
The axiomatic aspects of neutrino oscillations   
are still considered matter of debate and a detailed description of the non relativistic limit on neutrino oscillations is regarded with interest, even if the contribution of these discussions 
to the field has not a crucial importance. 
More  in general, the resemblance of oscillations with other phenomena concerning the light suggests the  interest in investigating the possibilities of analogous phenomena, such as neutrino reflection. The understanding of the flavor transformation of neutrinos in supernovae, where also the neutrino density itself plays a role (and the magnetic field is very high), is a very active area of research. 

\subsection{What else?}
There are several topics that we have not treated. E.g., there are several useful analytical works to make more efficient the calculation of the oscillation effects; there are a lot of solutions that have been worked out and discussed in neutrino oscillations;  furthermore, 
there are a lot of useful points of view and formalisms, not only concerning the advanced topics, but also the basic ones. In this connection, an important and general question is: where to get more information on neutrino oscillations? In view of the introductive character of these notes, we list here only a few  books \cite{bk,fy,gk,mp} and review papers \cite{r-p,r-k,r-e1,r-e2,strumiavissani,r-bs}. 
We offer them to the Reader since we have found them useful, 
but being deeply aware of the incompleteness of this list and  of 
 the subjective character of this choice.
To conclude, we note that neutrino oscillations are, to date, the only successful probe we have of neutrino masses. However, they do not allow us to probe the lightest neutrino mass, or to investigate the nature of the mass (either Dirac or Majorana). Likewise, the evidence of neutrino mass raises the question of their meaning in extensions of the standard model of the electroweak particles; etc.
	
	\appendix
	\begin{appendices}
\chapter{Special results on Dirac matrices\label{ch:app}}
\minitoc

\section[All Dirac matrices are equivalent (Pauli theorem)]{All Dirac matrices are equivalent\label{pat}}
\subsection{Gamma matrices in physics}
In physics, the Dirac $\gamma_\mu$ matrices are labeled by  $\mu=0,1,2,3$: 
$\gamma_0$ is hermitian while the other three are 
antihermitian. They obey the $\gamma_\mu \gamma_\nu + \gamma_\nu \gamma_\mu = 2 \eta_{\mu \nu} 
\mathbb{1}$, where 
$\eta=\mathrm{diag}(1,-1,-1,-1)$. Thus, 
a one-to-one connection with a set of hermitian $
\gamma_a$ matrices (with $a = 1,2,3,4$) is
possible, by setting 
$\gamma_4\equiv\gamma_0$ and multiplying
$\gamma_{1,2,3}$ by a factor $i$ (or $-i$).
They obey the same algebra where $\eta_{\mu\nu}$ is
replaced by the Kronecker $\delta_{ab}$.

\begin{theorem}
Any set of 4 hermitian matrices, that satisfy the algebra,
\begin{equation}
\gamma_a \gamma_b + \gamma_b \gamma_a =2 \delta_{ab} \mathbb{1}
\quad\text{where}\quad a,b=1,2,3,4
\end{equation}
is unitary equivalent, in the sense that being $ \gamma_a $ and $ \gamma'_a $ two different sets, it exists a unitary matrix $ U $ with,
\begin{equation} \gamma_a' = U \gamma_a U^\dagger \end{equation} 
\end{theorem}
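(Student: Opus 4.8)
The plan is to prove the theorem by building an explicit intertwining operator out of the whole algebra generated by the matrices, and then upgrading it to a unitary one using hermiticity. First I would recall the standard structural fact that in dimension four the sixteen ordered products
\begin{equation}
\Gamma_A \in \{\ \mathbb{1},\ \gamma_a,\ \gamma_a\gamma_b\ (a<b),\ \gamma_a\gamma_b\gamma_c\ (a<b<c),\ \gamma_1\gamma_2\gamma_3\gamma_4\ \}
\end{equation}
are linearly independent and hence span the whole space $M_4(\mathbb{C})$ of $4\times 4$ matrices. This holds because every $\Gamma_A\neq\mathbb{1}$ anticommutes with at least one generator and is therefore traceless, while $\Gamma_A^2=\pm\mathbb{1}$ makes each $\Gamma_A$ invertible; the resulting trace orthogonality $\mathrm{tr}(\Gamma_A^{-1}\Gamma_B)=4\,\delta_{AB}$ gives independence, and the count $16=\dim M_4(\mathbb{C})$ gives the spanning. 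The decisive point is that the \emph{structure constants are universal}: for each pair one has $\gamma_a \Gamma_A=\eta_{aA}\,\Gamma_{C(a,A)}$ with a phase $\eta_{aA}\in\{\pm1,\pm i\}$ and an index $C(a,A)$ determined solely by the anticommutation relations, so that the same relation $\gamma_a'\Gamma_A'=\eta_{aA}\,\Gamma_{C(a,A)}'$ holds for the primed set.

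Next I would fix an arbitrary matrix $F$ and define the candidate intertwiner
\begin{equation}
S=\sum_{A=1}^{16}\Gamma_A'\,F\,\Gamma_A^{-1},
\end{equation}
and verify that $\gamma_b'\,S=S\,\gamma_b$ for every $b$. Using the universal structure constants on a single term,
\begin{equation}
(\gamma_b'\Gamma_A')\,F\,(\gamma_b\Gamma_A)^{-1}=\eta_{bA}\,\Gamma_C'\,F\,\eta_{bA}^{-1}\Gamma_C^{-1}=\Gamma_C'\,F\,\Gamma_C^{-1},
\end{equation}
where the phases cancel exactly and $C=C(b,A)$. Since $A\mapsto C(b,A)$ is a permutation of the sixteen indices, the sum reproduces $S$, giving $\gamma_b'\,S\,\gamma_b^{-1}=S$, which is the claimed relation.

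The hard part is to guarantee that $S$ is invertible rather than zero. I would introduce the companion operator $S'=\sum_A \Gamma_A\,G\,(\Gamma_A')^{-1}$, which by the same computation satisfies $\gamma_b\,S'=S'\,\gamma_b'$. Then $S'S$ commutes with every $\gamma_b$, hence with all sixteen $\Gamma_A$, hence with all of $M_4(\mathbb{C})$ by the spanning property, so $S'S=c\,\mathbb{1}$ for a scalar $c$. A suitable choice of $F$ and $G$ makes $c\neq 0$ (the analogue of the completeness relation $\sum_A\Gamma_A F\Gamma_A^{-1}=4\,\mathrm{tr}(F)\,\mathbb{1}$ for a single set shows such intertwiners cannot all vanish), whence $S$ is invertible and $\gamma_b'=S\,\gamma_b\,S^{-1}$.

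Finally I would unitarize. Taking the hermitian conjugate of $\gamma_b'=S\gamma_b S^{-1}$ and using that both $\gamma_b$ and $\gamma_b'$ are hermitian yields $\gamma_b'=(S^\dagger)^{-1}\gamma_b\,S^\dagger$; comparing with $\gamma_b'=S\gamma_b S^{-1}$ shows that $S^\dagger S$ commutes with every $\gamma_b$ and therefore equals $\lambda\,\mathbb{1}$. As $S$ is invertible, $S^\dagger S$ is positive definite, so $\lambda>0$, and $U\equiv \lambda^{-1/2}S$ satisfies $U^\dagger U=\mathbb{1}$ together with $\gamma_b'=U\gamma_b U^\dagger$, completing the argument. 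I expect the genuine obstacle to be the non-degeneracy claim $c\neq 0$, since the rest is bookkeeping once the universality of the structure constants and the spanning of $M_4(\mathbb{C})$ are established.
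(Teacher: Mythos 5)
Your proposal is a correct strategy, and it is a genuinely different proof from the one in the paper. The paper argues constructively: it shows the dimension must be even and at least four, then uses the residual unitary freedom step by step (diagonalize $\gamma_1$, bring $\gamma_2$ to the off-diagonal identity form, then fix $a_3=i\sigma_1$ and $a_4=i\sigma_2$) to reduce an arbitrary hermitian set to one universal form, so that any two sets are unitarily equivalent by composing the two reductions. Yours is the classical Pauli--Schur intertwiner argument: linear independence and spanning of the sixteen products, universality of the structure constants, the intertwining relation $\gamma_b' S=S\gamma_b$, Schur's lemma to force $S'S=c\,\mathbb{1}$, and a final unitarization. The paper's route buys explicitness (it exhibits the canonical form and the change of basis concretely, and unitarity is automatic because only unitary basis changes are ever used); your route buys conceptual clarity and reusability (it is really the statement that the complexified Clifford algebra on four generators is the simple algebra $M_4(\mathbb{C})$, and Schur gives in addition uniqueness of $S$ up to a scalar). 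Your unitarization step is correct and is genuinely needed in your approach, since the intertwiner argument by itself only yields similarity; showing $S^\dagger S$ commutes with everything and is positive definite is exactly the right fix.

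The one step that is asserted rather than proved is the one you flag: $c\neq 0$. It can be closed with two standard observations. First, $S\not\equiv 0$: if $\sum_A\Gamma_A' F\Gamma_A^{-1}=0$ for \emph{every} $F$, take $F$ to be the matrix units $E_{\beta\gamma}$; the $(i,j)$ entry of the sum reads $\sum_A(\Gamma_A')_{i\beta}(\Gamma_A^{-1})_{\gamma j}=0$, and for fixed $i,\beta$ this is a vanishing linear combination of the linearly independent matrices $\Gamma_A^{-1}$ with coefficients $(\Gamma_A')_{i\beta}$, forcing $\Gamma_A'=0$ for all $A$ --- contradicting that one of them is $\mathbb{1}$. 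Second, once some $F$ gives $S\neq 0$, note that the intertwining relation extends from generators to all products, $\Gamma_A'S=S\Gamma_A$, hence
\begin{equation}
S'S=\sum_A \Gamma_A\,G\,(\Gamma_A')^{-1}S=\sum_A \Gamma_A\,(GS)\,\Gamma_A^{-1}=4\,\mathrm{tr}(GS)\,\mathbb{1}
\end{equation}
by the single-set completeness relation you quoted (which holds because the left-hand map projects onto the commutant, i.e.\ the scalars, and preserves the trace). Choosing $G=S^\dagger$ gives $c=4\,\mathrm{tr}(S^\dagger S)>0$, so $S$ is invertible and the rest of your argument goes through. Note that the completeness relation alone does not show $S\not\equiv 0$ for two \emph{different} sets --- that is what the matrix-unit argument is for --- so the parenthetical in your write-up conflates the two ingredients. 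Finally, be aware that, exactly like the paper's proof, your argument is specific to $4\times 4$ matrices: spanning of $M_4(\mathbb{C})$ by the sixteen products, and hence the Schur step, uses $\dim M_4(\mathbb{C})=16$.
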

\begin{proof}
There are various proofs\footnote{E.g.,  
use a result of finite group theory: {\em given a 
group of order $N_{\mathrm{G}}$, the dimensions of all its irreducible representations $d_i$ satisfy
$\sum_{i\ge 1} d^2_i=N_{\mathrm{G}}$.}
Consider the  group of order $N_{\mathrm{G}}=32$ with elements
$1,\gamma_1,\gamma_2...,\gamma_1\gamma_2,$
$ ....,-1,-\gamma_1...., -\gamma_1\gamma_2\gamma_3\gamma_4$.  
Note that we have one 
4-dimensional representation and also the trivial one.  
Thus, $32=4^2+ 1+\sum_{i\ge 3} d_i^2$, and 
there is no room for other representations with $d_i\ge 4.$} and
we choose an inelegant one, that however has the merit of being very direct and of using
elementary and important considerations.
Before starting, let us stress that, if $\gamma_a$ are a set of such matrices, any unitary matrix $U$ defines another valid set of matrices $\gamma' \equiv U \gamma U^\dagger$. Then it is enough to prove that a such set of matrices exists and that can be written in a universal form by a choice of the basis, i.e., by a unitary transformation.

\begin{enumerate}[itemsep=-0.7ex,partopsep=1ex,parsep=1ex]
\item The first step is to look for a set of $\gamma_a^\dagger = \gamma_a$ matrices and note that $\gamma_a^2=\gamma_a^\dagger \gamma_a = \mathbb{1}$ implies that the matrices we are looking for are also unitary. 
Moreover, their eigenvalues are $\pm 1$. Now, 
\begin{equation}
\gamma_1=-\gamma_2 \gamma_1\gamma_2 \;\Rightarrow\; \mathrm{tr}\left(\gamma_1\right)=-\mathrm{tr}\left(\gamma_1 \gamma_2^2\right)
\;\Rightarrow\; \mathrm{tr}\left(\gamma_1\right) = 0
\end{equation}
thus, we see that the dimension of the space should be even\footnote{Please recall that $\mathrm{tr}(abc) = \mathrm{tr}(bca)$ and $h^\dagger=h \Rightarrow \mathrm{tr}(h) = \sum_i \lambda_i$ where $\lambda_i$ are the eigenvalues of $h$.}. Now, it is impossible to realize this algebra in $d=2$ dimensions: 3 matrices could be the Pauli ones, but the fourth one cannot be realized.
So, we need at least $d=4$ dimensions. 
\item We assume $d=4$ and choose the first matrix to be diagonal,
\begin{equation}
\gamma_1=\mathrm{diag}(1,1,-1,-1)
\end{equation}
By imposing the anti-commutation condition, the other three hermitian matrices are of the form,
\begin{equation}
\gamma_i=\left( \begin{array}{cc} 0 & a_i \\ a_i^\dagger & 0 \end{array} \right)\quad i=2,3,4
\end{equation}
and from $\gamma_i^2=1$ we find that the $2\times 2$ matrices $a_i$ are unitary. Of course these are 3 different $a_i$; 
in the rest of the proof, we will fix the matrices $a_i$ for 
$\gamma_{2,3,4}$.  
\item We are free to change basis, without changing the form of $\gamma_1$, using the subset of transformations,
\begin{equation}
\left(\begin{array}{cc} u & 0 \\  0 & v \end{array} \right) 
\end{equation}
where $u$ and $v$  are unitary and $2\times 2$.  
Under this assumption the $a_i$  matrices transform as, 
\begin{equation} a_i\to u\, a_i\, v^\dagger\end{equation}
If we choose $u=v$, we can make one $a_i$ diagonal, say, $a_2=\mathrm{diag}(e^{i\alpha}, e^{i\beta})$. 
Using at this point  
$u=\mathbb{1}$ and $v=\mathrm{diag}(e^{i\alpha}, e^{i\beta})=a_2$ we obtain,
\begin{equation}\gamma_2=\left( \begin{array}{cc} 0 & \mathbb{1} \\  \mathbb{1} & 0 \end{array} \right) \end{equation}
Thus, we have set the matrix $\gamma_{2}$ in a standard form.
Note that with this position we retain the invariance
of $\gamma_1$ and $\gamma_2$ under any matrices $u=v$, that we use from here on. 
\item When we impose the anti-commutation conditions of $\gamma_{3,4}$ with $\gamma_2$, we find, 
\begin{equation}
a_i+a_i^\dagger=0\quad i=3,4
\end{equation}
Therefore, we have that the last two $a_i$ matrices are antihermitian and 
satisfy,
\begin{equation}
a_i=i \ \vec{ m }_i \cdot \vec{\sigma} = i\,\sum^3_{j=1} (\vec{m}_i)_j \,\sigma_j \quad\text{where}\quad\vec{m}_i\in \mathbb{R}^3
\quad\text{and}\quad
\vec{m}_i^2=1
\end{equation}
where  $\sigma_i$ are the Pauli matrices and the last condition
follows from unitarity.   
A change of basis,
\begin{equation}
u=e^{i \frac{\vec{\alpha}\cdot \vec{\sigma}}{2} }=\cos(\alpha/2) + i \,\vec{n}\cdot\vec{\sigma}\;\sin(\alpha/2)
\end{equation}
where,
\begin{equation}
\alpha=\left| \vec{\alpha} \right|
\quad\text{and}\quad
\vec{n}=\vec{\alpha}/\alpha
\end{equation}
works as a rotation acting on $\vec{m}$ and 
allows us to choose $a_3=i\sigma_1$ that defines $\gamma_3$. At this point, we 
have still a freedom to change the basis without changing $\gamma_{1,2,3}$, by choosing 
$u=\exp[ i \,{\beta} {\sigma}_1 / {2}] $.
In this manner, we can set $a_4=i\sigma_2$, fixing the form of $\gamma_4$.  
\end{enumerate}
\end{proof}

Summarizing, we have shown that 
given a set of hermitian matrices $\gamma$,  
that obey the anti-commutative algebra, 
it is always possible to choose a suitable basis (i.e., to perform a unitary transformation) so that they are cast in a {\em universal}  form: 
$U\gamma\, U^\dagger=\gamma_{\mathrm{univ}}$. 
This implies that any such set of matrices is 
equivalent to each other.

\subsection{Majorana representation}
When we consider 
the above universal set of matrices and when we 
reinsert the factors $i$, we are free to set
$\gamma_0'=\gamma_3$, 
$\gamma_1'=i\gamma_1$, 
$\gamma_2'=i\gamma_2$, 
$\gamma_3'=i\gamma_4$, which are all imaginary $(\gamma')^*=-\gamma'$. This set of matrices was introduced by 
Majorana and simplifies several equations. Note that the hermiticity property is generic and invariant under a unitary change of basis, while the feature $(\gamma')^*=-\gamma'$ is a characteristic of Majorana basis and rather specific.



\section{Charge conjugation matrix\label{app:ccm}}

An application of the Pauli theorem of great practical importance is the following one. 
Consider a given set of Dirac matrices $\gamma_\mu$. Since the opposite of their transposes,
$-\gamma_\mu^{\mathrm{t}}$, satisfies the same algebra, we know that there is a unitary matrix $C$ such that,
\begin{equation}
 \gamma_\mu^{\mathrm{t}}  = -C^{\dagger}\gamma_\mu C
\end{equation}
From this property, we can deduce several useful forms, as,
\begin{equation}\label{conj}
\gamma C^{\mathrm{t}}=  - C^{\mathrm{t}} \gamma^{\mathrm{t}} 
\quad\text{or}\quad     
   \gamma^{\mathrm{t}}  C^{\dagger}= -  C^{\dagger} \gamma 
\end{equation}
The unitary matrix $C$ is called {\em charge conjugation matrix}, since it transforms particle into antiparticle and viceversa, without affecting the polarization, according to,
\begin{equation} \label{piriw}
\psi^{\mathrm{c}}=C(\bar{\psi})^{\mathrm{t}}
\end{equation}
(see also the discussion after \eqref{abaro1} and \eqref{abaro2}).
Here we remark an important relation that involves $C$ and the chiral projectors, that allows us to handle the weak leptonic charged currents. This is,
\begin{equation}
P_L\psi^{\mathrm{c}}=C(\overline{\psi_{\mathrm{R}}})^{\mathrm{t}}
\end{equation}
where $P_{\mathrm{L}} = (1-\gamma_5)/2$ and $\gamma^5=i\gamma^0\gamma^1\gamma^2\gamma^3$.
The proof of this equality\footnote{The complete proof is,
$
	P_{\mathrm{L}}C(\bar{\psi})^{\mathrm{t}} =CP_{\mathrm{L}}^{\mathrm{t}}(\bar{\psi})^{\mathrm{t}}=C(\bar{\psi}P_{\mathrm{L}})^{\mathrm{t}}=C(\psi^\dagger\gamma_0P_{\mathrm{L}})^{\mathrm{t}}=C(\psi^\dagger P_{\mathrm{R}}\gamma_0)^{\mathrm{t}}=C((P_{\mathrm{R}}^\dagger\psi)^\dagger\gamma^0)^{\mathrm{t}}=C((P_{\mathrm{R}}\psi)^\dagger\gamma^0)^{\mathrm{t}}=C(\psi_{\mathrm{R}}^\dagger\gamma^0)^{\mathrm{t}}=C(\overline{\psi_{\mathrm{R}}})^{\mathrm{t}}
	$.
} follows when one observes that:
\begin{enumerate}[itemsep=-0.7ex,partopsep=1ex,parsep=1ex]
	\item
	$P_{\mathrm{L}}C=CP_{\mathrm{L}}^{\mathrm{t}}$ that is due to the definition of $\gamma^5$;
	\item
	$\gamma^0P_{\mathrm{L}}=P_{\mathrm{R}}\gamma^0$, since $\gamma_\mu\gamma_5=-\gamma_5\gamma_\mu$; and
	\item
	$\gamma_5$ is hermitian, that implies $P_{\mathrm{L}}^\dagger=P_{\mathrm{L}}$.
\end{enumerate}
 Note that this relation, in a more readable form, is,
\begin{equation}
(\psi^{\mathrm{c}})_{\mathrm{L}}=(\psi_{\mathrm{R}})^{\mathrm{c}}
\end{equation}

\subsection[Properties of the $C$-matrix]{Properties of the \boldmath$C$-matrix} 
By using the identities in 
\eqref{conj}, one can show that
the matrix $C$ is antisymmetric. This requires a bit of effort:
\begin{enumerate}[itemsep=-0.7ex,partopsep=1ex,parsep=1ex]
	\item The first step is to note that,
$
\gamma\, C^{\mathrm{t}} C^{\dagger} =- C^{\mathrm{t}}\,\gamma^{\mathrm{t}}\,  C^{\dagger} = C^{\mathrm{t}} C^{\dagger}\,\gamma$, namely 
$[  C^{\mathrm{t}}   C^{\dagger}, \;\gamma ] =0$. The $\gamma$-matrices generate an irreducible representation of the Lorentz algebra, the spinorial representation.
Thus, for the Schur lemma it follows 
that $ C^{\mathrm{t}}  C^{\dagger}$ is proportional to the identity,
$C^{\mathrm{t}}=\kappa C$.
\item The next step is to take 
again the transpose, getting,
$C=(C^{\mathrm{t}})^{\mathrm{t}}=\kappa^2 C$, thus $\kappa=\pm 1$.
\item The last step is the boring one. Consider the symmetry properties of 16 different  combinations, we have,
$C^{\mathrm{t}}=\kappa C$,  
$( \gamma_\mu C)^{\mathrm{t}}= -\kappa  \gamma_\mu C$,
$( [\gamma_\mu, \gamma_\nu]  C)^{\mathrm{t}}=- \kappa  [\gamma_\mu, \gamma_\nu] C$,
$( \gamma_5 \gamma_\mu  C)^{\mathrm{t}}=\kappa \gamma_5 \gamma_\mu C$,
$( \gamma_5 C)^{\mathrm{t}}= \kappa \gamma_5 C$: there are 10 matrices 
with symmetry property $-\kappa$ and 6 matrices with symmetry property $+\kappa$. 
Now, a generic complex $4\times 4$ matrix has 16 parameters. The 16 combinations introduced above are a complete basis for this space. Since,
as known, a complex $4\times4$ symmetric matrix has 10 parameters, while the antisymmetric one has only 6, we conclude that the case that occurs is $\kappa=-1$.
\end{enumerate}

 Summarizing,
\begin{equation}
\begin{cases}
	C,\, \gamma_5 C,\, \gamma_5 \gamma_\mu  C&
	\text{are antisymmetric matrices}\\
	 \gamma_\mu C,\,  [\gamma_\mu,\gamma_\nu] C
	& \text{are symmetric matrices}
\end{cases}
\end{equation}
Notice the interesting consequence that, given a single (anti-commuting) field $\chi$, we can form only a few non-zero covariant quantities,\footnote{In conventional notation~\eqref{piriw} the matrix $C$ accompanies Dirac conjugate spinors $\bar{\psi}$, and conversely, the matrix $C^\dagger$ appears with the spinors $\psi$. Of course, relations similar to those derived in this section apply to the $C^\dagger$ matrix.}
namely, 
$\bar{\chi}  C \bar{\chi}^{\mathrm{t}} $,  
$\bar{\chi}  \gamma_5 C \bar{\chi}^{\mathrm{t}} $, 
$\bar{\chi}  \gamma_5 \gamma_\mu C \bar{\chi}^{\mathrm{t}} $, 
since, if $M$ is a symmetric matrix, 
$\bar\chi^{\mathrm{t}}M\bar\chi\equiv
\bar\chi_aM_{ab}\bar\chi_b=
\bar\chi_a\bar\chi_bM_{ab}=
-\bar\chi_b\bar\chi_aM_{ab}=
-\bar\chi_bM_{ba}\bar\chi_a=
-\bar\chi^{\mathrm{t}}M\bar\chi=0$.

%
%

This implies, for instance, that it is not possible to have the ordinary couplings of a single field $\chi$ to the photon field 
$A^\mu$ and to the electromagnetic tensor $F^{\mu\nu}$. 
In other words, a self-conjugate field 
$\chi=\chi^c$ has neither charge nor  
magnetic moment, 
while we can have non-zero operators using two different
such fields $\bar\chi_1,\bar\chi_2$, since the symmetry properties would imply 
$\bar\chi^{\mathrm{t}}_1M\bar\chi_2\equiv
\bar\chi_{1,a}M_{ab}\bar\chi_{2,b}=
-\bar\chi_{2,b}M_{ba}\bar\chi_{1,a}=
-\bar\chi^{\mathrm{t}}_2M\bar\chi_1$, that in general is non vanishing.


\subsection{Discrete symmetries of the Dirac equation}
Let us consider the Dirac equation in an external electromagnetic field $A_\mu(x)$,
\begin{equation}
\left[ \hat{p} -e \hat{A}(x) -m \right] \psi(x)=0
\end{equation}
where of course 
$p^\mu=i \partial/\partial x_\mu=(i\partial_t,-i\vec{\nabla})$, $m$ is the mass (a real parameter), 
and the contractions with the gamma matrices are indicated 
by $\hat{A}=\gamma_\mu A^\mu$.
This equation is covariant under proper Lorentz transformation but also under various discrete symmetries. In other words, by transforming the space-time and the fields appropriately, as listed in \tablename~\ref{tab:appendi}, we map valid solutions into valid solutions. 

\begin{table}[htb]
\centering
	{\begin{tabular}{@{}cr@{\hskip3pt}lr@{\hskip3pt}lr@{\hskip3pt}l@{}}
	\toprule
		& \multicolumn{2}{c}{Coordinates} & \multicolumn{2}{c}{EM field}& \multicolumn{2}{c}{Fermionic field}\\  
		\midrule
		 P & $x_{\text{\tiny P}}=$&$(t, -\vec{x}$\,) & $A_{\text{\tiny P}}(x)=$&$( A^0 (x_{\text{\tiny P}}), -A^i(x_{\text{\tiny P}}))$ &$\psi_{\text{\tiny P}}(x)=$&$\gamma_0 \psi(x_{\text{\tiny P}})$\\
		 C & $x_{\text{\tiny C}}=$&$(t, \vec{x}$\,) & $A_{\text{\tiny C}}(x)=$&$(-A^0 (x_{\text{\tiny C}}), -A^i(x_{\text{\tiny C}}))$ & $\psi_{\text{\tiny C}}(x)=$&$C \gamma^t_0 \psi^*(x_{\text{\tiny C}})$\\
		T & $x_{\text{\tiny T}}=$&$(-t, \vec{x}$\,) & $A_{\text{\tiny T}}(x)=$&$( A^0 (x_{\text{\tiny T}}), -A^i(x_{\text{\tiny T}}))$&$\psi_{\text{\tiny T}}(x)=$&$\gamma_5 C \psi^*(x_{\text{\tiny T}})$\\
		 \bottomrule
	\end{tabular}}
	\caption{Discrete symmetries of the Dirac equation. P: Spatial parity. C: Charge conjugation. T: Time reversal.}
	\label{tab:appendi}
\end{table}

We note that:
\begin{enumerate}[itemsep=-0.7ex,partopsep=1ex,parsep=1ex]
	\item The transformation of the space-time coordinates justifies the names of $P$ and $T$ (whereas for $C$ this is trivial, namely no transformation is required);
\item the transformations of the electromagnetic field under $P$ and $T$  imply that the vector potential is a polar vector, and it transforms like the velocity thereby changing 
sign under $T$;
\item for $C$ symmetry, the change of the EM\ potential is equivalent to flip the sign of the charge $e\to -e$. 
\end{enumerate}

Let us discuss now the properties of the transformed fields. 
We begin from the {\bfseries spatial parity} $P$. Applying the Dirac differential operator, we have,
\begin{equation}
\left[ \hat{p} -e \hat{A}_{\mathrm{P}}(x) -m \right] \psi_{\mathrm{P}}(x)=\gamma_0 
\left[ \hat{p}_{\mathrm{P}} -e \hat{A}(x_{\mathrm{P}}) -m \right] \psi(x_{\mathrm{P}})
\end{equation}
where $p_{\mathrm{P}}=i \partial/\partial x_{\mathrm{P}}=
(i\partial_t, i\vec{\nabla})$. 
 Changing the names of the variables and more 
precisely replacing $-\vec{x} \to \vec{x}$, we recognize that 
the differential operator in the r.h.s. is just the Dirac one.
Therefore, if the field $\psi(x)$ satisfies Dirac equation 
also the $P$-transformed field does it.

Now we repeat the same steps for the {\bfseries charge conjugation} $C$. 
Using the well-known property 
$\gamma_\mu \gamma_0=\gamma_0 \gamma_\mu^\dagger$, we have,
\begin{equation}
\begin{aligned} \label{abaro1}
\left[ \hat{p} -e \hat{A}_{\mathrm{C}}(x) -m \right] \psi_{\mathrm{C}}(x) &=
- \left[ \hat{p} + e \hat{A}(x) -m \right]  \gamma_0 C \psi^*(x)
\\ &=
- \gamma_0 \left[ \hat{p}^\dagger + e \hat{A}^\dagger(x) -m \right]   C \psi^*(x)
\end{aligned}
\end{equation}
where $\hat{p}^\dagger=p^\mu \gamma_\mu^\dagger$ and similarly for $ \hat{A}^\dagger$.
We recall the defining property
$C \gamma^{\mathrm{t}} =C \gamma^{\mathrm{t}}C^{-1}C= -\gamma\:\! C$, that implies 
 $\gamma^{\mathrm{t}} C^\dagger = - C^\dagger \gamma $ and therefore 
 $C \gamma^* = C \gamma^*C^{-1}C=-\gamma^\dagger C$.
 We also notice that $p_a^*=(i\partial_a)^*=-p_a$, getting,
 \begin{equation}
 \begin{aligned}
 \left[ \hat{p} -e \hat{A}_{\mathrm{C}}(x) -m \right] \psi_{\mathrm{C}}(x)
&= - \gamma_0 \left[ \hat{p}^\dagger + e \hat{A}^\dagger(x) -m \right]  C \psi^*(x)\\
&=
 - \gamma_0  C \left( \left[ \hat{p} - e \hat{A}(x) -m \right]   \psi(x) \right)^*
 \end{aligned}
 \label{abaro2}
 \end{equation}
Again, we conclude that if the field $\psi(x)$ satisfies the Dirac equation 
also the transformed configuration does it. Note that the first equation can be interpreted as 
if the charge of the field $e$ is formally inverted into $-e$; this justifies the name of {\em charge conjugation}
given to the matrix $C$.

Finally, we perform the same steps for 
{\bfseries time reversal} $T$. Consider the above defined spinor,
\begin{equation}
\left[ \hat{p} -e \hat{A}_{\mathrm{T}}(x) -m \right] \psi_{\mathrm{T}}(x)=
\gamma_5 C \left[ \hat{p}^{\mathrm{t}} - e \hat{A}^{\mathrm{t}}_{\mathrm{T}} (x_{\mathrm{T}}) -m \right]  \psi^*(x_{\mathrm{T}})
\end{equation}
Using the fact that the matrix $\gamma_0$ is hermitian while 
$\gamma_i$ are antihermitian, we have $\hat{A}_{\mathrm{T}}^{\mathrm{t}}=\hat{A}^*$ and 
$\hat{p}^{\mathrm{t}}=\hat{p}^*_{\mathrm{T}}$  where 
$p_{\mathrm{T}}=(-i\partial_{t} , -i\vec{\nabla})
=(i\partial_{-t} , -i\vec{\nabla})$, thus,
\begin{equation}
\begin{aligned}
\left[ \hat{p} -e \hat{A}_{\mathrm{T}}(x) -m \right] \psi_{\mathrm{T}}(x)&=
\gamma_5 C \left[ \hat{p}^{\mathrm{t}} - e \hat{A}^{\mathrm{t}}_{\mathrm{T}} (x_{\mathrm{T}}) -m \right]  \psi^*(x)\\
&=\gamma_5 C \left( \left[\hat{p}_{\mathrm{T}} - e \hat{A} (x_{\mathrm{T}}) -m \right]  \psi (x_{\mathrm{T}})\right )^*
\end{aligned}
\end{equation} 
and once again we see that modulo the redefinition of the time coordinate, $-t\to t$, the fact that 
$\psi(x)$ is a solution of the Dirac equation implies that also its time reversed conjugate is a solution.

\section[Fierz identity for $(V-A)$ current-current operator]{Fierz identity for \boldmath $(V-A)$ current-current operator\label{frz}}
Let us extends the set of  the three hermitian 
Pauli matrices introducing also,
\begin{equation}\sigma^0_{ab}\equiv \delta_{ab}\end{equation}
The four matrices $\sigma^\alpha$, with $\alpha=0,1,2,3$, form a basis of the 
$2\times 2$ complex matrices. Now, consider the tensor $\sigma^0_{ab}\ \sigma^0_{cd}$; we would like to consider it as a set of matrices
with indices $c$ and $b$. Thus, each of them can be rewritten in terms of the basis of the $2\times 2$ matrices that 
we have introduced by using 4 complex coefficients, namely,
\begin{equation}
\sigma^0_{ab}\ \sigma^0_{cd}=\sum_{\alpha=0}^{3} \kappa_{ad}^\alpha\ \sigma_{cb}^\alpha
\end{equation} 
In order to determine the coefficients $\kappa$, we multiply by $\sigma^\beta_{bc}$ and sum over the repeated coefficients, finding,
\begin{equation}
\sigma^0_{ab}\ \sigma^\beta_{bc}\ \sigma^0_{cd}=\sigma^\beta_{ad} =
\sum_{\alpha=0}^{3} \kappa_{ad}^\alpha\ \mathrm{tr}[ \sigma^\alpha\ \sigma^\beta ]= 2  \kappa_{ad}^\beta 
\end{equation}
Therefore we can replace $k^\beta_{ad}$ with $\sigma^\beta_{ad}/2$ and thus we have proven the first identity,
\begin{equation}
\sigma^0_{ab}\ \sigma^0_{cd} = \frac{1}{2}\left[  \sigma^0_{ad}\ \sigma^0_{cb}  +  
 \sigma^i_{ad}\ \sigma^i_{cb}  
\right]
\end{equation}
where $i=1,2,3$ is summed over. Similarly, we can deduce an identity for the combination,
$\sigma^i_{ab}\ \sigma^i_{cd}$ and finally write the useful identity,
\begin{equation}
\eta_{\alpha\beta}\ \sigma^\alpha_{ab}\ \sigma^\beta_{cd}= 
- \eta_{\alpha\beta}\ \sigma^\alpha_{ad}\ \sigma^\beta_{cb}
\end{equation}
where $\alpha$ and $\beta=0,1,2,3$ are summed over and $\eta$ is the (flat) metric tensor,
\begin{equation}\eta=\mathrm{diag}(+1,-1,-1,-1)\end{equation}

Let us now consider 4 {\em anti-commuting} bi-spinors  
named $\psi_{\mathrm{L}},\chi_{\mathrm{L}},\lambda_{\mathrm{L}},\phi_{\mathrm{L}}$, namely quantized fermionic fields.  
The previous identity implies, 
\begin{equation}
\psi^\dagger_{\mathrm{L}} \bar{\sigma}^\alpha \chi_{\mathrm{L}} \ \lambda^\dagger_{\mathrm{L}} \bar{\sigma}_\alpha \phi_{\mathrm{L}} = 
+ \psi^\dagger_{\mathrm{L}} \bar{\sigma}^\alpha  \phi_{\mathrm{L}} \ \lambda^\dagger_{\mathrm{L}} \bar{\sigma}_\alpha\chi_{\mathrm{L}} 
\end{equation}
where we have introduced the matrices,
\begin{equation}
\bar{\sigma}^\alpha=(\sigma^0,-\sigma^i)
\end{equation}
The proof is based on the previous identity and on the
anti-commutativity  that flips the minus sign into a plus.

The last remark is that in the chiral (or Weyl) representation of the Dirac matrices, where $\gamma_5=\mathrm{diag}(-1,-1,+1,+1)$ and $\gamma_0=\mathrm{antidiag}(1,1,1,1)$,
\begin{equation}
\psi= 
\left(
\begin{array}{c}
\psi_{\mathrm{L}}\\
\psi_{\mathrm{R}}
\end{array}
\right)\quad\text{and also}\quad
\gamma^{\alpha} P_{\mathrm{L}}=
\left(
\begin{array}{cc}
0 & 0 \\
\bar{\sigma}^\alpha & 0
\end{array}
\right)
\end{equation}
and similarly for any other four-spinor.
Therefore the previous identity concerning bi-spinors can be translated into an identity concerning (anti-commuting) 
four-spinors,
\begin{equation}
\bar{\psi}  \gamma^\alpha P_{\mathrm{L}} \chi\;\,\bar{\lambda}\gamma_\alpha P_{\mathrm{L}} \phi = 
\bar{\psi}  \gamma^\alpha P_{\mathrm{L}} \phi\;\,\bar{\lambda}\gamma_\alpha P_{\mathrm{L}} \chi 
\label{eq:fiertzidentity}
\end{equation}
namely,  in this combination of $V-A$ currents, we can simply exchange the 2nd and the 4th four-spinors. Owing to the fact that this is an identity between scalars, we conclude that this is a general fact, valid in {\em any} representation of Dirac matrices.  

This Fierz identity \eqref{eq:fiertzidentity} is useful in various situations, in particular:
\begin{enumerate}[itemsep=-0.7ex,partopsep=1ex,parsep=1ex]
\item In the study of neutrinos, it can be used for the derivation of the matter terms of MSW theory.
\item In the theory of weak interactions, it allows us to cast in the same form the two contributions from CC and NC.
\end{enumerate}
\end{appendices}
	
	\addcontentsline{toc}{chapter}{Bibliography}
	\printbibliography
	
\end{document}